\renewcommand\R{\mathbb{R}}
\DeclareMathOperator*{\expectation}{\mathbb{E}}
\let\poly\relax
\DeclareMathOperator*{\poly}{poly}
\DeclareMathOperator*{\eps}{\epsilon}
\newcommand\opt{OPT\xspace}
\newcommand\alg{\textsc{Alg}\xspace}
\newcommand\msm{\textsc{MSM}\xspace}
\newcommand\mwm{\textsc{MWM}\xspace}
\newcommand\msbm{\textsc{MSbM}\xspace}
\newcommand\mwbm{\textsc{MWbM}\xspace}
\newcommand\mkc{\textsc{Max k-Coverage}\xspace}
\newcommand\setc{\textsc{Set Cover}\xspace}
\newcommand\scopt{K\xspace}
\newcommand\scS{\mathcal{S}}
\newcommand\scU{\mathcal{U}}
\newcommand\fmin{f_{\min}}
\newcommand\fmax{f_{\max}}
\renewcommand{\E}{\mathbb{E}}%
\newcommand{\expect}{\expectation\expectarg}
\DeclarePairedDelimiterX{\expectarg}[1]{[}{]}{%
	\ifnum\currentgrouptype=16 \else\begingroup\fi
	\activatebar#1
	\ifnum\currentgrouptype=16 \else\endgroup\fi
}
\newcommand{\expectover}[1]{\expectation_{#1}\expectarg}
\newcommand{\innermid}{\nonscript\;\delimsize\vert\nonscript\;}
\newcommand{\activatebar}{%
	\begingroup\lccode`\~=`\|
	\lowercase{\endgroup\let~}\innermid 
	\mathcode`|=\string"8000
}
\theoremstyle{plain}
\newtheorem{thm}{Theorem}[section]
\newtheorem{lem}[thm]{Lemma}
\newtheorem{Def}[thm]{Definition}
\newtheorem{obs}[thm]{Observation}
\crefname{cla}{Claim}{Claims}
\crefname{lem}{Lemma}{Lemmas}
\newenvironment{wrapper}[1]
{
	\begin{center}
		\begin{minipage}{\linewidth}
			\begin{mdframed}[hidealllines=true, backgroundcolor=gray!20, leftmargin=0cm,innerleftmargin=0.5cm,innerrightmargin=0.5cm,innertopmargin=0.5cm,innerbottommargin=0.5cm,roundcorner=10pt]
				#1}
			{\end{mdframed}
		\end{minipage}
	\end{center}
}
\newlength{\continueindent}
\newcommand*{\ALG@customparshape}{\parshape 2 \leftmargin \linewidth \dimexpr\ALG@tlm+\continueindent\relax \dimexpr\linewidth+\leftmargin-\ALG@tlm-\continueindent\relax}
\apptocmd{\ALG@beginblock}{\ALG@customparshape}{}{\errmessage{failed to patch}}
\newcommand{\ALGtikzmarkcolor}{black}
\newcommand{\ALGtikzmarkextraindent}{4pt}
\newcommand{\ALGtikzmarkverticaloffsetstart}{-.5ex}
\newcommand{\ALGtikzmarkverticaloffsetend}{-.5ex}
\newcounter{ALG@tikzmark@tempcnta}
\newcommand\ALG@tikzmark@start{%
	\global\let\ALG@tikzmark@last\ALG@tikzmark@starttext%
	\expandafter\edef\csname ALG@tikzmark@\theALG@nested\endcsname{\theALG@tikzmark@tempcnta}%
	\tikzmark{ALG@tikzmark@start@\csname ALG@tikzmark@\theALG@nested\endcsname}%
	\addtocounter{ALG@tikzmark@tempcnta}{1}%
}
\def\ALG@tikzmark@starttext{start}
\newcommand\ALG@tikzmark@end{%
	\ifx\ALG@tikzmark@last\ALG@tikzmark@starttext
	\else
	\tikzmark{ALG@tikzmark@end@\csname ALG@tikzmark@\theALG@nested\endcsname}%
	\tikz[overlay,remember picture] \draw[\ALGtikzmarkcolor] let \p{S}=($(pic cs:ALG@tikzmark@start@\csname ALG@tikzmark@\theALG@nested\endcsname)+(\ALGtikzmarkextraindent,\ALGtikzmarkverticaloffsetstart)$), \p{E}=($(pic cs:ALG@tikzmark@end@\csname ALG@tikzmark@\theALG@nested\endcsname)+(\ALGtikzmarkextraindent,\ALGtikzmarkverticaloffsetend)$) in (\x{S},\y{S})--(\x{S},\y{E});%
	\fi
	\gdef\ALG@tikzmark@last{end}%
}
\apptocmd{\ALG@beginblock}{\ALG@tikzmark@start}{}{\errmessage{failed to patch}}
\pretocmd{\ALG@endblock}{\ALG@tikzmark@end}{}{\errmessage{failed to patch}}
\newcommand{\Continue}{\textbf{continue}}
\title{Streaming Submodular Matching Meets the Primal-Dual Method}
\author[1]{Roie Levin\thanks{Supported in part by Anupam Gupta's NSF awards CCF-1907820, CCF1955785, and CCF-2006953.}} 
\affil[1]{Carnegie Mellon University}
\author[2]{David Wajc\thanks{Supported in part by NSF grants CCF-1527110, CCF-1618280, CCF-1814603, CCF-1910588, NSF CAREER award CCF-1750808 and a Sloan Research Fellowship.}}
\affil[2]{Stanford University}
\date{}
\begin{document}
\pagenumbering{gobble}
\maketitle

\begin{abstract}
We study streaming submodular maximization subject to matching/$b$-matching constraints (\msm/\msbm), and present improved upper and lower bounds for these problems. On the upper bounds front, we give primal-dual algorithms achieving the following approximation ratios.
\begin{itemize}
	\item $3+2\sqrt{2}\approx 5.828$ for monotone \msm, improving the previous best ratio of $7.75$.
	\item $4+3\sqrt{2}\approx 7.464$ for non-monotone \msm, improving the previous best ratio of $9.899$.
	\item $3+\eps$ for maximum weight b-matching, improving the previous best ratio of $4+\eps$.
\end{itemize}
On the lower bounds front, we improve on the previous best lower bound of $\frac{e}{e-1}\approx 1.582$ for \msm, and show ETH-based lower bounds of $\approx 1.914$ for polytime monotone \msm streaming algorithms. 

Our most substantial contributions are our algorithmic techniques. We show that the (randomized) primal-dual method, which originated in the study of maximum \emph{weight} matching (\mwm), is also useful in the context of \msm. To our knowledge, this is the first use of primal-dual based analysis for streaming submodular optimization. We also show how to reinterpret previous algorithms for \msm in our framework; hence, we hope our work is a step towards unifying old and new techniques for streaming submodular maximization, and that it paves the way for further new results.
	
\end{abstract}

\newpage
\pagenumbering{arabic}

\section{Introduction}
    
    In this paper we study streaming maximum submodular matching (\msm) and related problems. That is, we study the problem of computing high-value matchings where the value is determined by a submodular function, i.e. a function which captures \emph{diminishing returns}.

    \smallskip 
    
    Submodular function maximization has a long history. For example, it has been known since the 70s that the greedy algorithm yields an $e/(e-1) \approx 1.582$ approximation for monotone submodular maximization subject to a cardinality constraint \cite{nemhauser1978analysis}. This is optimal among polytime algorithms with value oracle access \cite{nemhauser1978best}, or assuming standard complexity-theoretic conjectures \cite{feige1998threshold,dinur2014analytical,manurangsi2020tight}. The same problem for \emph{non-monotone} submodular functions is harder; it is hard to approximate to within a $2.037$ factor \cite{oveisgharan2011submodular}. Much work has been dedicated to improving the achievable approximation \cite{lee2010maximizing,vondrak2013symmetry,oveisgharan2011submodular,feldman2011unified,buchbinder2014submodular,ene2016constrained,buchbinder2019constrained, gupta2010constrained}; the best currently stands at $2.597$  \cite{buchbinder2019constrained}.

    \smallskip    
    
	Closer to our work is the study of submodular maximization subject to \emph{matching} constraints.
	For this problem, the greedy algorithm has long been known to be $3$-approximate for monotone functions \cite{fisher1978analysis}. 
	Improved approximations have since been obtained \cite{lee2010submodular,lee2009non,gupta2010constrained,feldman2011improved}, with the current best being $(2+\epsilon)$ and $(4+\epsilon)$ for monotone and non-monotone MSM respectively \cite{feldman2011improved}.
	The papers above studied rich families of constraints (e.g. matroid intersection, matchoids, exchange systems), some of which were motivated explicitly by matching constraints (see  \cite{feldman2011improved}). Beyond theoretical interest, the MSM problem also has great practical appeal, since many natural objectives exhibit diminishing returns behavior. Applications across different fields include: machine translation \cite{lin2011word}, Internet advertising \cite{dickerson2019balancing,korula2018online}, combinatorial auctions more broadly \cite{vondrak2008optimal,lehmann2006combinatorial,calinescu2011maximizing}, and any matching problem where the goal is a submodular notion of utility such as diversity \cite{agrawal2009diversifying, ahmed2017diverse}. 
	
	\smallskip 
	The proliferation of big-data applications such as those mentioned above has spurred a surge of interest in algorithms for the regime where the input is too large to even store in local memory. To this end, it is common to formulate problems in the \textit{streaming} model. Here the input is presented element-by-element to an algorithm that is restricted to use $\tilde{O}(S)$ memory, where $S$ is the maximum size of any feasible solution. We study \msm in this model.
   
   \smallskip 
    
    For our problem when the objective is \emph{linear}, a line of work \cite{feigenbaum2005graph,mcgregor2005finding,epstein2013improved,crouch2014improved,paz20182+,ghaffari2019simplified} has shown that a $\left(2+\eps\right)$-approximation is possible in the streaming model \cite{paz20182+,ghaffari2019simplified}.  
    Meanwhile, for submodular objectives under \emph{cardinality} constraints (which are a special case of MSM in complete bipartite graphs), a separate line of work \cite{badanidiyuru2014streaming, kazemi2019submodular,alaluf2020optimal, chekuri2015streaming,feldman2018less,mirzasoleiman2018streaming} has culminated in the same $\left(2+\eps\right)$ approximation ratio, for both monotone and non-monotone functions (the latter taking exponential time, as is to be expected from the lower bound of \cite{oveisgharan2011submodular}); moreover, this $(2+\epsilon)$ bound was recently proven to be tight \cite{feldman2020one,norouzi2018beyond,alaluf2020optimal}.
    On the other hand, for fully general MSM, the gap between known upper and lower bounds remain frustratingly large. \citet{chakrabarti2015submodular} gave a  $7.75$-approximate algorithm for \msm with monotone functions. For non-monotone functions,  \citet{chekuri2015streaming} gave a $\left(12+\eps\right)$-approximate algorithm,
    later improved by \citet{feldman2018less} to $5+2\sqrt{6}\approx 9.899$.
    The only known lower bound for monotone \msm is $\frac{e}{e-1}\approx 1.582$ for streaming or polytime algorithms, implied respectively by \cite{kapralov2013better} and \cite{feige1998threshold,nemhauser1978best}. For non-monotone functions, \cite{oveisgharan2011submodular} implies a hardness of $2.037$.
    Closing these gaps, especially from the algorithmic side, seems to require new ideas.
    
    \smallskip

    \subsection{Our Contributions}

    We present a number of improved results for streaming maximum submodular matching (\msm) and related problems.
    
    \smallskip
    
    Our first result is an improvement on the $7.75$ approximation of \cite{chakrabarti2015submodular} for monotone \msm.
\begin{wrapper}
    \begin{restatable}{thm}{MSMub}\label{MSM-ub-det}
    There exists a deterministic quasilinear-time streaming \msm algorithm for monotone functions which is $3+2\sqrt{2} \approx 5.828$ approximate.
    \end{restatable}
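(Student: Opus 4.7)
The plan is to adapt the primal-dual streaming framework of Paz--Schwartzman (originally developed for maximum weight matching) to the submodular setting. Maintain a stack $S$ of admitted edges together with non-negative vertex potentials $\{\phi_v\}$, initially zero. On arrival of an edge $e=(u,v)$, compute its current marginal $w_e := f(S\cup\{e\})-f(S)$, and admit $e$ to $S$ iff $w_e \ge (1+c)(\phi_u+\phi_v)$ for a parameter $c>0$ to be optimized; upon admission, set $g_e := w_e - (\phi_u+\phi_v)$ and raise each endpoint potential by $g_e$. After the stream, unwind $S$ in LIFO order and greedily produce a matching $M$: add an edge to $M$ iff both of its endpoints are still free.

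The analysis has two halves. For the upper bound on $f(\opt)$, monotonicity and submodularity give
\[
 f(\opt) \;\le\; f(\opt\cup S) \;\le\; f(S) + \sum_{e^*\in\opt\setminus S}\bigl(f(S\cup\{e^*\})-f(S)\bigr).
\]
For each $e^* \in \opt\setminus S$ the admission test failed at its arrival time $t$, so $f(S_t\cup\{e^*\}) - f(S_t) < (1+c)(\phi^t_u+\phi^t_v)$; since potentials only grow and $S_t\subseteq S$, submodularity gives $f(S\cup\{e^*\})-f(S)\le (1+c)(\phi_u+\phi_v)$. Summing over the matching $\opt$ bounds this contribution by $(1+c)\Phi$, where $\Phi := \sum_v \phi_v$.

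For a lower bound on $f(M)$, one shows via a stack-unwinding argument that $f(M)$ alone recovers a constant fraction of both $\Phi$ and $f(S)$. The key observation: when the unwinding rejects an edge $e\in S$ because an endpoint $v$ is blocked by some $e'\in M$ admitted \emph{later} than $e$, the charges $e$ contributed to $\phi_v$ must be amortized against the marginal of $e'$, which is still sizeable because (a)~the admission threshold $(1+c)$ guarantees $g_{e'}$ is a constant fraction of $w_{e'}$, and (b)~submodularity together with the telescoping identity $f(M)=\sum_{e\in M}\bigl(f(M_{<e}\cup\{e\})-f(M_{<e})\bigr)$ lower-bounds the matching value by the admission-time marginals. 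A careful accounting yields inequalities of the form $f(M)\ge \alpha(c)\,\Phi$ and $f(M)\ge \beta(c)\,f(S)$ for explicit constants $\alpha(c),\beta(c)$.

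Combining the two halves yields $f(\opt)\le \bigl(1/\beta(c)+(1+c)/\alpha(c)\bigr) f(M)$, and we choose $c$ to minimize this expression; the optimization balances the two error sources and is expected to give $c=\sqrt{2}$ and the ratio $3+2\sqrt{2}=(1+\sqrt{2})^2$. Quasilinear runtime is immediate since each stream element triggers a single oracle call for its marginal, and the unwinding phase is linear in $|S|$. The main technical obstacle is the charging step in the third paragraph: because submodular marginals \emph{shrink} as the stack grows, edges admitted early may contribute potential out of proportion to their current marginals, and one must carefully exploit the order of admissions when transferring charges from displaced stack edges to the matching edges that displace them---this is precisely where the primal-dual viewpoint is expected to sharpen the combinatorial analysis of Chakrabarti--Kale and yield the improved $3+2\sqrt{2}$ bound.
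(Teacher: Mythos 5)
Your algorithm is the paper's \Cref{alg:MSbM} with $q=1$ and $C=1+c$, and you arrive at the same two-sided decomposition: an upper bound of the form $f(\opt)\le f(S)+C\Phi$ (where $\Phi:=\sum_v\phi_v$ is the final unscaled potential mass) combined with the two lower bounds $f(M)\ge\tfrac12\Phi$ and $f(M)\ge(1-1/C)f(S)$. The one place you genuinely diverge is how the upper bound is \emph{stated}: the paper frames it as weak LP duality for the concave-closure relaxation (it fits a dual $\mu=f(S)$, $\lambda_e=f(e:S)$ for $e\notin S$, $\phi_v^{\text{dual}}=C\phi_v^{(|E|)}$, checks feasibility, and invokes weak duality), whereas you obtain the identical inequality directly: monotonicity and submodularity give $f(\opt)\le f(S\cup\opt)\le f(S)+\sum_{e^*\in\opt\setminus S}f_S(e^*)$, each surviving marginal is bounded by the failed admission test, and the fact that $\opt$ is a matching bounds $\sum_{e^*\in\opt}(\phi_u+\phi_v)\le\Phi$. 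Your derivation is more elementary and strips out the LP scaffolding; the paper keeps the LP framing because it is exactly what survives the extension to the non-monotone, subsampled setting, where one must argue feasibility \emph{in expectation} against a random dual LP, and the coverage-LP bookkeeping becomes genuinely useful rather than cosmetic.

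The lower-bound half is left as a sketch, but your sketch names the correct mechanism, and it is worth confirming it actually closes. The key identity is $f(e:S)=g_e+\sum_{u\in e}\phi_u^{(t-1)}$ and hence $f(e:S)\ge g_e+\sum_{e''<e,\,e''\ni v}g_{e''}$ for each endpoint $v$. LIFO unwinding guarantees that any rejected $e'\in S\setminus M$ shares an endpoint $v$ with some \emph{later-arriving} $e\in M$, so its charge $g_{e'}$ sits inside the sum attached to that $e$; together with $f(e:M)\ge f(e:S)$ and $\sum_{e\in M}f(e:M)=f(M)$, this yields $f(M)\ge\sum_{e\in S}g_e=\tfrac12\Phi$ (this is the paper's \Cref{lem:m_fgeqw} and \Cref{cor:m_fandphi}). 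Separately, the admission test directly gives $g_e\ge(1-1/C)f(e:S)$, so $f(M)\ge(1-1/C)\sum_{e\in S}f(e:S)=(1-1/C)f(S)$ (\Cref{cor:m_fandmu}). The combined bound is $2C+\tfrac{C}{C-1}$, minimized at $C=1+1/\sqrt2$; note this means $c=1/\sqrt2$, not $c=\sqrt2$ as you guessed, though the resulting ratio $3+2\sqrt2=(1+\sqrt2)^2$ is as you predicted. So modulo filling in \Cref{lem:m_fgeqw} and correcting that constant, your proposal is the paper's argument with the LP duality replaced by a direct submodularity bound.
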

\end{wrapper}
	
	Our algorithm extends in various ways: First, it yields the same approximation ratio for submodular \emph{$b$-matchings}, where each node $v$ can be matched $b_v$ times, improving on the previous best $8$-approximations \cite{chekuri2015streaming,feldman2018less}. For the special case of linear functions (\mwm), our algorithm---with appropriate parameters---recovers the $\left(2+\eps\right)$-approximate algorithm of \cite{paz20182+}. For weighted $b$-matching (\mwbm), a slight modification of our algorithm yields a $\left(3+\eps\right)$-approximate algorithm, improving on the previous best $\left(4+\epsilon\right)$-approximation  \cite{crouch2014improved}.\footnote{Independently, Mohsen Ghaffari (private communication) obtained the same $(3+\epsilon)$-approximate for \mwbm.}
	
	\smallskip 
	
	Next, we improve on the $5+2\sqrt{6}\approx 9.899$ approximation of \cite{feldman2018less} for non-monotone \msm. 
\begin{wrapper}
    \begin{restatable}{thm}{MSMub}\label{MSM-ub-rand}
	There exists a randomized linear-time streaming \msm algorithm for non-monotone functions which is $4+2\sqrt{3}\approx 7.464$ approximate.
	\end{restatable}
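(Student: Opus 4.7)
The plan is to prove Theorem 1.2 by combining the monotone algorithm of Theorem 1.1 with an edge sub-sampling reduction standard in non-monotone submodular optimization. On reading each arriving edge, the algorithm flips an independent biased coin and keeps the edge with some probability $q \in (0,1)$; the surviving edges are fed, in order, into a slight modification of our monotone primal-dual algorithm in which negative marginal contributions are clamped to zero. The per-edge overhead is $O(1)$, so the resulting algorithm runs in randomized linear time, as claimed.

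The analysis rests on two observations. First, the primal-dual certificate produced in the proof of Theorem 1.1 actually witnesses $\alpha \cdot f(\alg) \geq f(M)$ for \emph{every} matching $M$ contained in the input stream, not just the optimum, where $\alpha = 3+2\sqrt{2}$. This stronger statement is free because the dual feasibility inequalities are established edge-by-edge, use only submodularity, and do not invoke monotonicity anywhere; the clamping of negative marginals preserves the non-negativity needed in the bookkeeping. Applying this guarantee to the random matching $\opt \cap R$, where $R$ is the (random) sub-sampled stream, yields
\[
\alpha \cdot \E[f(\alg)] \;\geq\; \E[f(\opt \cap R)].
\]

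Second, a standard sub-sampling lemma for non-negative submodular functions (in the spirit of Feige--Mirrokni--Vondr\'ak, and in the form used by Feldman--Karbasi--Kazemi for non-monotone streaming) lets us lower-bound $\E[f(\opt\cap R)]$ by $h(q)\cdot f(\opt)$, where the factor $h(q)$ grows with $q$ but is tempered by a $(1-q)$ correction reflecting non-monotonicity. Combining the two inequalities gives an approximation ratio of the form $\alpha(c)/h(q)$, where $c$ is the trade-off parameter internal to the monotone primal-dual algorithm; a joint two-variable optimization over $(c,q)$ then produces the claimed value $(1+\sqrt{3})^2=4+2\sqrt{3}$.

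The main obstacle will be calibrating the primal-dual bookkeeping with the sub-sampling lemma so that the product $\alpha(c)/h(q)$ admits a clean closed-form optimum; in particular, one must check that the proof of Theorem 1.1 really does go through with $f$ non-monotone and negative marginals clamped, and that the clamping charges no unaccounted loss to the dual. Once these technicalities are in place, the remaining step is routine calculus in $c$ and $q$.
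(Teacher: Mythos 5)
The high-level shape (run the primal-dual algorithm, introduce a sampling probability $q$, invoke a non-monotone sub-sampling lemma, optimize $(C,q)$ jointly) is the right one, but two of your key claims are wrong, and the specific place you insert the randomness is structurally different from what the paper does in a way that matters.

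First, the claim that ``the primal-dual certificate \ldots witnesses $\alpha \cdot f(\alg) \geq f(M)$ for \emph{every} matching $M$ in the input stream, \ldots and does not invoke monotonicity anywhere'' is not true. What the dual feasibility argument actually yields is $\alpha \cdot f(\alg) \geq \mu + \sum_v \phi_v \geq g^S(T) = f(S\cup T)$ for any matching $T$, where $S$ is the stack. Monotonicity enters \emph{precisely} when passing from $f(S\cup \opt)$ to $f(\opt)$ in Theorem 4.6, so you cannot drop $S$ from the argument of $f$ for free. The whole difficulty in the non-monotone case is controlling the gap between $f(S\cup \opt)$ and $f(\opt)$; the paper handles it by showing $S$ contains each edge with probability at most $q$ and then applying Lemma 2.1 (Buchbinder et al.) to the function $h(T)=f(\opt\cup T)$, giving $\E[f(S\cup \opt)]\geq (1-q)f(\opt)$.

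Second, subsampling the \emph{stream} upfront (keeping edges independently with probability $q$ and feeding only survivors to the deterministic algorithm) is not what the paper does, and it breaks the argument. In the paper's algorithm, \emph{every} arriving edge is checked against the dual-covering test on line 6; only edges that fail the test are coin-flipped, and the coin flip is deliberately over-satisfying in expectation so that discarded edges still have their dual constraints hold in expectation (Lemma 5.1). If you instead discard an edge before it ever reaches the covering test, that edge's dual constraint is never addressed at all, so the expected dual is not feasible, and weak duality gives no bound against $T$ containing that edge. You are forced to restrict to $T=\opt\cap R$, and then you must lower bound $\E[f(S\cup(\opt\cap R))]$ against $f(\opt)$ — but $\opt$ itself has been subsampled, so you can no longer apply the Buchbinder lemma with $h(T)=f(\opt\cup T)$ at cost only $(1-q)$. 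The paper's footnote after ``Extension 1'' in Section 1.2 explicitly remarks on this: naively subsampling the input would suggest an $\alpha/q$ loss, which is far too expensive, and it is exactly by keeping the full $\opt$ in the argument of $f$ that the paper avoids paying that price.

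Finally, the ``clamping of negative marginal contributions'' is an unnecessary modification that the paper does not make. If $f(e:S)<0$ then the test $C\sum_v \phi_v^{(t-1)}\geq f(e:S)$ passes automatically (since $\phi_v\geq 0$), so $e$ is skipped and the weights $w_{ev}$ are only ever computed when $f(e:S)>C\sum_v\phi_v^{(t-1)}\geq 0$; hence $w_{ev}>0$ always and no clamping is needed. Introducing a clamp risks breaking the identity $\sum_{e\in S}f(e:S)=f(S)$ and the lemmas that rely on it. To repair your argument, move the coin flip to the stack-insertion step (line 8) of Algorithm 1, keep the covering test for all edges, prove expected dual feasibility by splitting on the event $A_e$, and finish with the Buchbinder lemma applied to $h(T)=f(\opt\cup T)$ and $B=S$; this yields $(4C^2-1)/(2C-2)$ with $q=1/(2C+1)$, minimized at $C=1+\sqrt{3}/2$.
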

\end{wrapper}

	\smallskip 

	Our non-monotone \msm algorithm's approximation ratio is better than the previous state-of-the-art  $7.75$-approximate \emph{monotone} \msm algorithm \cite{chakrabarti2015submodular}. Moreover, when applied to monotone functions, the algorithm of \Cref{MSM-ub-rand} yields the same approximation ratio as the deterministic algorithm of \Cref{MSM-ub-det}. 
	
	\smallskip
	
	We turn to proving hardness for monotone \msm.
    As stated before, the previous best lower bounds for this problem were $\frac{e}{e-1}\approx 1.582$.
      These lower bounds applied to either space-bounded \cite{kapralov2013better} or time-bounded algorithms \cite{nemhauser1978best, feige1998threshold}.
    We show that the problem becomes harder for algorithms which are both space bounded and time bounded. 
    This answers an open problem posed in the Bertinoro Workshop on Sublinear Algorithms 2014  \cite{MSMopenQ}, at least for time bounded algorithms.\footnote{We note briefly that such a bound does not follow from space lower bounds for cardinality constrained submodular maximization \cite{feldman2020one,norouzi2018beyond} (a special case of our setting, with a complete bipartite graph on $n$ and $k$ nodes), since a bound for that problem cannot be superlinear in $n$.}

\begin{wrapper}    
    \begin{restatable}{thm}{MSMLB}
    	\label{MSM-lb}
    No polytime streaming \msm algorithm for monotone functions is better than $1.914$ approximate.
    \end{restatable}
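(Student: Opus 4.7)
The theorem is a polynomial-time (ETH-based) lower bound for monotone \msm, so my plan is a gap-preserving reduction from a problem with matching ETH-hardness. The natural source is Max $k$-Coverage (\mkc), which is $\tfrac{e}{e-1}\approx 1.582$-hard to approximate both under $\mathbf{P}\neq\mathbf{NP}$ (Feige) and under Gap-ETH (Manurangsi). Since the target gap $\tfrac{1}{2}+\sqrt{2}\approx 1.914$ strictly exceeds $\tfrac{e}{e-1}$, the reduction must genuinely exploit the matching structure---as the footnote stresses, a cardinality-style embedding cannot cross the source problem's own hardness barrier.

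My construction embeds an \mkc instance $(\scU,\scS,\scopt)$ into a bipartite graph $G$ with \scopt ``slot'' vertices $b_1,\ldots,b_{\scopt}$ on one side. Each slot $b_j$ is joined by \emph{coverage} edges to every set-vertex $S_i\in\scS$ and by a \emph{filler} edge to a private dummy $d_j$. The monotone submodular objective is
\[
f(M) \;=\; \Bigl|\,\bigcup\nolimits_{(S_i,b_j)\in M} S_i\,\Bigr| \;+\; w\cdot \bigl|\{\,j:(d_j,b_j)\in M\,\}\bigr|,
\]
for a weight parameter $w>0$ to be tuned. The optimum matching has value $f(\mathrm{OPT})=\max_{s\le \scopt}\bigl[C^*_s+w(\scopt-s)\bigr]$, where $C^*_s$ denotes the best $s$-set coverage; a polytime algorithm that commits $s$ slots to coverage edges achieves at most $(1-\tfrac1e)C^*_s+w(\scopt-s)$ (the first term via Gap-ETH hardness of \mkc). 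The ratio $f(\mathrm{OPT})/f(\mathrm{ALG})$ then becomes a saddle-point problem over $w$ (chosen by the adversary) and $s$ (chosen by the algorithm); with a suitably concave coverage profile $C^*_s$ (arising from Feige's canonical gadgets), equalizing the algorithm's losses across the ``coverage-only'' and ``filler-only'' regimes yields a saddle value of exactly $\tfrac{1}{2}+\sqrt{2}$, determined by the quadratic that balances the $\tfrac{e}{e-1}$ coverage loss against the linear contribution of fillers.

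The main obstacle is establishing \emph{uniform} Gap-ETH hardness of \mkc across all sub-budgets $s\le\scopt$, not merely at $s=\scopt$. Without uniformity, an algorithm could choose a ``friendly'' $s$ and evade the coverage bound. I plan to address this either by verifying directly that Manurangsi's hard instances are uniformly hard across sub-budgets, or by padding with \scopt disjoint copies of a single fixed hard instance, so that committing $s$ slots to coverage translates to a proportional restriction to $s$ of the disjoint hard copies (and thus inherits the gap level-by-level). A secondary, routine obstacle is verifying that the \msm instance has polynomial size and that $f$ admits an efficient value oracle in the streaming model. Once uniform hardness is secured, the final saddle-point calculation reduces to a one-variable convex optimization whose optimum is precisely $\tfrac{1}{2}+\sqrt{2}$.
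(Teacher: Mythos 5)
The proposal has a fundamental structural flaw: it does not use the streaming model at all, and a purely computational reduction of the kind you describe cannot cross the $\frac{e}{e-1}$ barrier. In your construction, every edge available to $\opt$ is equally available to the algorithm. Let $s^\ast$ be the number of coverage edges $\opt$ uses. The algorithm can always enumerate budgets $s=0,\dots,\scopt$, run greedy \mkc at budget $s$, pad the remaining slots with filler edges, and return the best result, thereby obtaining at least $\left(1-\tfrac1e\right)C^{\ast}_{s^\ast}+w(\scopt-s^\ast)$. Since $w(\scopt-s^\ast)\geq 0$, this gives
\[
\frac{f(\opt)}{f(\textsc{Alg})}\;\leq\;\frac{C^{\ast}_{s^\ast}+w(\scopt-s^\ast)}{\left(1-\tfrac1e\right)C^{\ast}_{s^\ast}+w(\scopt-s^\ast)}\;\leq\;\frac{e}{e-1}\approx 1.582,
\]
regardless of the coverage profile $C^\ast_s$, the choice of $w$, or how ``uniformly hard'' the sub-budget \mkc instances are. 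The filler edges only \emph{help} the algorithm; they do not create any separation because they are not hidden from it. So the ``uniform Gap-ETH across sub-budgets'' obstacle you identify is a red herring --- even resolving it would not let the saddle exceed $\frac{e}{e-1}$.

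The paper's proof succeeds precisely because it \emph{combines} computational hardness with an information-theoretic, streaming-specific obstruction. It uses Ruzsa--Szemer\'edi graphs (\Cref{thm:RSgraphs}, from \cite{goel2012communication}) to build a two-phase stream in which one ``distinguished'' induced matching $M_r'$ in phase 1 is the only part of the first phase compatible with the phase-2 coverage gadgets, yet any algorithm using less than $n^{1+\Omega(1/\log\log n)}$ space can recall only $o(n)$ of its edges after phase 1 (\Cref{lem:gkk-forgetdistinguished}). The Set Cover/ETH hardness (\Cref{lem:sc_hardness}) is then layered on top of the phase-2 edges, which encode disjoint copies of a hard coverage instance. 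Because the algorithm must trade off the phase-2 coverage gadgets against the \emph{non-distinguished} phase-1 edges (having forgotten $M_r'$), while $\opt$ keeps $M_r'$ for free, the final optimization $(1+2\eta)/(2\eta-2\ln\eta)$ over $\eta$ lands at $\approx 1.914$, strictly above $e/(e-1)$. Your construction has no analogue of the distinguished matching and no mechanism forcing the algorithm to forget anything, so it cannot replicate this. (As a minor further point, the constant $1.914$ in the paper is the numerical optimum of a transcendental equation in $\eta$, not the closed form $\tfrac12+\sqrt2$ you posit, though the two happen to be numerically close.)
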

\end{wrapper}    
        
   Finally, to demonstrate that our techniques have the potential for wider applicability, we also use them to provide an alternative and unified proof of the results of \citet{chakrabarti2015submodular} and \citet{feldman2018less} for \msm, in \Cref{sec:CK+FKK}.
    
    \subsection{Our Techniques and Overview}\label{sec:msm-techniques}
    
    Our starting point is the  breakthrough result of \citet{paz20182+} for a special case of our problem---maximum weight matching (\mwm). They gave a $(2+\epsilon)$-approximate streaming algorithm by extending the local-ratio technique \cite{bar2001unified}. Subsequently, \citet{ghaffari2019simplified} simplified and slightly improved the  analysis of \cite{paz20182+}, by re-interpreting their algorithm in terms of the primal-dual method.\footnote{An equivalence between the local-ratio and primal-dual methods was established in \cite{bar2005equivalence}, though it does not capture the extension of the local ratio method in \cite{paz20182+}.}
    The primal-dual method is ubiquitous in the context of approximating linear objectives. 
    In this paper, we show that this method is also useful in the context of streaming submodular optimization, where to the best of our knowledge, it has not yet been used.
	For our primal-dual analysis, we rely on the \textit{concave-closure extension} for submodular functions which has a ``configuration LP''-like formulation.
    In particular, using this extension, we find that a natural generalization of the \mwm algorithm of \cite{paz20182+} (described in \Cref{sec:MSM-algo}) yields improved bounds for monotone \msm and its generalization to $b$-matchings. Our primal-dual analysis is robust in the sense that it allows for extensions and generalizations, as we now outline.
    
    \medskip
    \textbf{Our approach in a nutshell} (Sections \ref{sec:MSM-algo}+\ref{sec:monotone}). Our approach is to keep monotone dual solutions (initially zero), and whenever an edge arrives, 
    discard it if its dual constraint is already satisfied. 
    Edges whose dual constraint is not satisfied are added to a stack $S$, and relevant dual variables are increased, so as to satisfy their dual constraint. 
    Finally, we unwind the stack $S$, constructing a matching $M$ greedily.
    The intuition here is that the latter edge in the stack incident on a common edge have higher marginal gain than earlier such edges in the stack.
    More formally, we show that this matching $M$ has value at least some constant times the dual objective cost. 
    Weak LP duality and the choice of LP imply that $f(M)\geq \frac{1}{\alpha}\cdot f(S\cup OPT)$ for some $\alpha>1$, which implies our algorithm is $\alpha$-approximate for monotone \msm. 

    \medskip    
    \textbf{Extension 1} (\cref{sec:non-monotone}). Extending our approach, which gives $f(M)\geq \frac{1}{\alpha}\cdot f(S\cup OPT)$, to non-monotone functions $f$ seems challenging, since for such functions $f(S\cup OPT)$ can be arbitrarily smaller than $f(OPT)$. 
    To overcome this challenge, we note that our dual updates over-satisfy dual constraints of edges in $S$. We can therefore afford to randomly discard edges whose dual is not satisfied on arrival (and not add them to $S$), resulting in these edges' dual constraints holding \emph{in expectation}. This allows us to argue, via a generalization of the randomized primal-dual method of \citet{devanur2013randomized} (on which we elaborate in \Cref{sec:prelims}), that $\E[f(M)]\geq \frac{1}{\alpha}\cdot \E[f(S\cup OPT)]$. As $S$ contains each element with probability at most some $q$, a classic lemma of \cite{buchbinder2014submodular} allows us to show that $\E[f(S\cup OPT)]\geq (1-q)\cdot f(OPT)$, from which we get our results for non-monotone \msm.\footnote{Incidentally, for monotone functions, for which $E[f(M)]\geq \frac{1}{\alpha}\cdot E[f(S\cup OPT)]\geq \frac{1}{\alpha}\cdot f(OPT)$, this algorithm is $\alpha$ approximate. This is somewhat surprising, as this algorithm runs an $\alpha$-approximate monotone algorithm (and this analysis is tight, by \Cref{sec:tight-MSM}) on a random $q$-fraction of the input, suggesting an  $\alpha/q$ approximation. Nonetheless, we show that for $q$ not too small in terms of $\alpha$, we retain the same approximation ratio even after this sub-sampling.}
%
%
%
%
    Given the wide success of the randomized-primal dual method of \cite{devanur2013randomized} in recent years \cite{huang2018match,huang2018online,huang2019tight,fahrbach2020edge,huang2020online,huang2020fully,gupta2017online,tang2020towards,huang2020adwords}, we believe that our extension of this method in the context of submodular optimization will likely find other applications.
	
 \medskip    
	\textbf{Extension 2} (\cref{sec:linear-obj}). For maximum weight \emph{$b$-matching} (\mwbm), the dual updates when adding an edge to the stack are not high enough to satisfy this edge's dual constraint. 
	However, since we do cover each edge \emph{outside} the stack $S$, weak duality implies that a maximum-weight $b$-matching $M$ in the stack $S$ has value at least as high as $f(M)\geq \frac{1}{2+\eps}\cdot f(OPT\setminus S)$, and trivially at least as high as $f(M)\geq f(OPT\cap S)$. Combining these lower bounds on $f(M)$ imply our improved $(3+\epsilon)$ approximation ratio for  \mwbm. This general approach seems fairly general, and could find uses for other sub-additive objectives subject to downward-closed constraints.
	
	    \medskip	
	\textbf{Unifying Prior Work} (\Cref{sec:CK+FKK}). To demonstrate the usefulness of our primal-dual analysis, we also show that this (randomized) primal-dual approach gives an alternative,  unified way to analyze the \msm algorithms of \cite{chakrabarti2015submodular,feldman2018less}.

	\medskip    
	\textbf{Lower bound} (\cref{sec:MSM-lb}). Our lower bound instance makes use of two sources of hardness: computational hardness under ETH (\cite{feige1998threshold,dinur2014analytical}) and information-theoretic hardness resulting form the algorithm not knowing the contents or order of the stream in advance (\cite{goel2012communication}). 
	In particular, our proof embeds a submodular optimization problem (specifically, set cover) in parts of the linear instance of \cite{goel2012communication},
	and hence exploits the submodularity in the \msm objective. Interestingly, our lower bound of $1.914$ is higher than any convex combination of the previous hardness results we make use of, both of which imply a lower bound no higher than $e/(e-1)$.

\section{Preliminaries}\label{sec:prelims}
A set function $f:2^N\to \mathbb{R}$ is \emph{submodular} if 
the marginal gains of adding elements to sets, denoted by $f_{S}(e) := f(S\cup \{e\}) - f(S)$, satisfy 
$f_S(e) \geq f_T(e)$ for $e\not\in T$ and $S\subseteq T\subseteq N.$
We say $f$ is \emph{monotone} if  $f(S)\leq f(T)$ for all $S\subseteq T\subseteq N$. Throughout this paper we assume oracle access to the submodular function.
The maximum submodular matching (MSM) problem is defined by a non-negative submodular function $f:2^E\to \mathbb{R}_{\geq 0}$, where $E$ is the edge-set of some $n$-node graph $G=(V,E)$, and feasible sets are matchings in $G$. The more general maximum submodular $b$-matching (\msbm) problem has as feasible sets subgraphs in which the degree of each vertex $v$ does not exceed $b_v$, for some input vector $\vec{b}$. 
Our objective is to design algorithms with low approximation ratio $\alpha\geq 1$, that is algorithms producing solutions $M$ such that $\expect*{f(M)}\geq \frac{1}{\alpha} \cdot f(OPT)$ for the smallest possible value of $\alpha$, where $OPT$ is an optimal solution.

For \emph{streaming} MSM, edges of $E$ are presented one at a time, and we are tasked with computing a matching in $G$ at the end of the stream, using little memory. In particular, we give algorithms using space within the output-sensitive bound of $\tilde{O}(M_{\max})$, where $M_{\max}$ is the maximum size of a $b$-matching (compared to the entire graph size, which  can be $\Omega(n^2)$). We note that for matchings, this is a finer requirement than the usual $\tilde{O}(n)$ space bound, since $M_{\max} = O(n)$. We note also that for monotone objectives, an optimal $b$-matching is maximal, and so $M_{\max} \leq 2|OPT|$. Consequently, such space is a trivial lower bound for outputting constant-approximate solutions.

On a technical note, we will only allow the algorithm to query the value oracle for $f$ on subsets currently stored in memory. We assume the range of $f$ is polynomially bounded (it is common to assume bounds on this quantity, e.g., \cite{wolsey1982analysis,gupta2020online}). More precisely, we assume that $\fmax / \fmin= n^{O(1)}$ (these max and min marginals are defined below). This implies in particular that we can store a value $f_{S}(e)$ using $O(\log n)$ bits.

\paragraph{Useful Notation} Throughout this paper we will rely on the following notation.
First, we denote by $e^{(1)},e^{(2)}, \dots, $ the edges in the stream, in order. For edges $e=e^{(i)},e'=e^{(j)}$, we write $e<e'$ if and only if $i<j$, i.e., if $e$ arrived before $e'$.
Similarly to 
\cite{chekuri2015streaming,feldman2018less}, we will also use $f(e:S):=f_{S\cap \{e'<e\}}(e)$ as shorthand for the marginal gain from adding $e$ to the set of elements which arrived before $e$ in $S$. One simple yet useful property of this notation is that $\sum_{e \in S} f(e:S) = f(S)$ (\cite[Lemma 1]{chekuri2015streaming}.) Other properties of this notation we will make use of, both easy consequences of submodularity, are $f(e:S)\leq f_S(e)$, as well as monotonicity of $f(e:S)$ in $S$, i.e., $f(e:A)\geq f(e:B)$ for $A\subseteq B$. Define the max and min marginals of $f$ to be
$\fmax := \displaystyle \max \{f(e) \mid e \in E \}$ and
$\fmin := \displaystyle \min \{f(e \mid S) \mid e \in E, S \subseteq E, \ f(e \mid S) \neq 0\}$.
Finally, for any edge $e$, we denote by $N(e) := \{e' \mid e'\cap e\neq \emptyset \}$ the set of neighboring edges of $e$. 

\subsection{The Primal-Dual Method in Our Setting}
As discussed in \Cref{sec:msm-techniques}, the main workhorse of our algorithms is the primal-dual method. In this method, we consider some linear program (LP) relaxation, and its dual LP. We then
design an algorithm which computes a (primal) solution of value $P$, and a feasible solution of value $D$, and show that $P\geq \frac{1}{\alpha}\cdot D$, which implies an approximation ratio of $\alpha$, by weak duality, since 
$$P\geq \frac{1}{\alpha}\cdot D \geq \frac{1}{\alpha}\cdot f(OPT).$$

\smallskip

For linear objectives, the first step of the primal-dual method---obtaining an LP relaxation---is often direct: write some integer linear program for the problem and drop the integrality constraints. For submodular objective functions, which are only naturally defined over vertices of the hypercube, $\vec{x} \in \{0,1\}^E$, and are not defined over fractional points $\vec{x}\in [0,1]^E\setminus \{0,1\}^E$, the first step of defining a relaxation usually requires extending $f$ to real vectors. For this, we use the \emph{concave closure} (see e.g.\cite{vondrak2007submodularity} for a survey of its history and further properties).
\begin{Def}
The \emph{concave closure} $f^+:[0,1]^E\to \mathbb{R}$ of a set function $f:2^E\to \mathbb{R}$ is given by
$$f^+(\vec{x}) := \max\left\{\sum_{T\subseteq E} \alpha_T \cdot f(T) \,\,\Bigg\vert\,\,  \sum_{T\subseteq E} \alpha_T=1, \, \alpha_T\geq 0\,\, \forall T\subseteq E,\, \sum_{T \ni e} \alpha_T = x_e \, \,\forall e\in E\right\}.$$
\end{Def}

In words, the concave closure is the maximum expected $f$-value of a random subset $T\subseteq E$, where the maximum is taken over all distributions matching the marginal probabilities given by $\vec{x}$. This is indeed an extension of set functions (and in particular submodular functions) to real-valued vectors, as this distribution must be deterministic for all $\vec{x}\in \{0,1\}^E$. Consequently, for any set $P\subseteq [0,1]^E$ containing the characteristic vector $\vec{x}_{OPT}$ of an optimal solution $OPT$, we have that $\max_{\vec{x}^*\in P} f^+(\vec{x}) \geq f^+(\vec{x}_{OPT}) = f(OPT)$.

\smallskip

Now, to define an LP relaxation for submodular maximization of some function $g$ subject to some linear constraints $A\vec{x}\leq \vec{c}$, we simply consider $\max\left\{g^+(\vec{x}) \,\,\big\vert\,\, A\vec{x}\leq \vec{c}\right\}$. For \msbm, we obtain the primal and dual programs given in \Cref{fig:MSbM-lp}.
\vspace{-0.3cm}
\begin{figure}[H]
\begin{align*}
	\begin{array}{rl|rl}
	\multicolumn{2}{c|}{\text{Primal } (P)} & \multicolumn{2}{c}{\text{Dual } (D)} \\ \hline  
	\max & \sum_{T \subseteq E} \alpha_T \cdot g(T) &
	\min & \mu + \sum_{v\in V}{b_v \cdot \phi_v} \\
	\text{subject to} & & \text{subject to} & \\
	\forall T\subseteq E: & \sum_{T \ni e} \alpha_T = x_e & \forall T \subseteq E: & \mu + \sum_{e \in T} \lambda_e\geq g(T) \\
	& \sum_{T \subseteq E} \alpha_T = 1 & \forall e \in E: & \sum_{v \in e} \phi_v \geq \lambda_e \\
	\forall v \in V: &  \sum_{e\ni v}x_{e} \leq b_v & &\\
	\forall e\in E, T\subseteq E: & x_{e}, \alpha_T \geq 0 & \forall v\in V: & \phi_v \geq 0
	\end{array} \label{LP_q}
\end{align*}
\vspace{-0.5cm}
	\caption{The LP relaxation of the \msbm problem and its dual}
	\label{fig:MSbM-lp}
	\vspace{-0.25cm}
\end{figure}


\subsection{Non-Monotone MSM: Extending the Randomized Primal-Dual Method}
To go from monotone to non-monotone function maximization, we make use of our dual updates resulting in dual solutions which over-satisfy (some) dual constraints. 
This allows us to randomly sub-sample edges with probability $q$ when deciding whether to insert them into $S$, 
and still have a dual solution which is feasible \emph{in expectation} over the choice of $S$. This is akin to the randomized primal-dual method of \citet{devanur2013randomized}, who introduced this technique in the context of maximum \emph{cardinality} and \emph{weighted} matching.
However, unlike in \cite{devanur2013randomized} (and subsequent work \cite{huang2018match,huang2018online,huang2019tight,fahrbach2020edge,huang2020online,huang2020fully,gupta2017online,tang2020towards,huang2020adwords}), for our problem the LP is not fixed. Specifically, we consider a different submodular function in our LP based on $S$, denoted by $g^S(T):=f(T\cup S)$. This results in \emph{random} primal and dual LPs, depending on the random set $S$. We show that our (randomized) dual solution is feasible for the obtained (randomized) dual LP in expectation over $S$.
Consequently, our expected solution's value is at least as high as some multiple of an expected solution to the dual LP, implying
\begin{equation}\label{expect-dual-feasibility-impication}
\mathbb{E}_S[f(M)]\geq \frac{1}{\alpha}\cdot \mathbb{E}_S[D] \geq \frac{1}{\alpha}\cdot \mathbb{E}_S [f(S\cup OPT)].
\end{equation}

Equation \eqref{expect-dual-feasibility-impication} retrieves our bound for monotone functions, for which $\mathbb{E}_S [f(S\cup OPT)]\geq f(OPT)$. To obtain bounds for non-monotone functions, we show that $\mathbb{E}_S[f(S\cup OPT)]\geq (1-q)\cdot f(OPT)$, by relying on the following lemma, due to \citet[Lemma 2.2]{buchbinder2014submodular}.

\smallskip

\begin{lem}[\cite{buchbinder2014submodular}] \label{technion-lemma} Let $h : 2^N \to R_{\geq 0}$ be a non-negative
	submodular function, and let $B$ be a random subset of $N$ containing every element of $N$ with probability at most $q$ (not necessarily independently), then
	$\expect*{h(B)} \geq (1-q) \cdot h(\emptyset)$.
\end{lem}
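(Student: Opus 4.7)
The plan is to reduce, via an ``uncrossing'' argument, to the case where the distribution of $B$ is supported on a chain of subsets, and then conclude by direct inspection.

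First, I would note that the set of feasible distributions $\{p_T \geq 0 : \sum_T p_T = 1,\ \sum_{T \ni i} p_T \leq q\ \forall i \in N\}$ is a non-empty compact polytope and $p \mapsto \mathbb{E}_p[h(B)] = \sum_T p_T h(T)$ is linear; hence a minimizer exists. Among all minimizers I would pick the one, $p^*$, that maximizes the auxiliary potential $\Psi(p) := \sum_T p_T |T|^2$. I claim that $\mathrm{supp}(p^*)$ is a chain. Suppose otherwise, with incomparable $T_1, T_2 \in \mathrm{supp}(p^*)$. I would then ``uncross'' them: define $\tilde p$ by shifting mass $\delta := \min(p^*_{T_1}, p^*_{T_2}) > 0$ off of $\{T_1, T_2\}$ and onto $\{T_1 \cup T_2, T_1 \cap T_2\}$. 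Element-marginals are preserved, since for every $i$ we have $\mathbf{1}[i \in T_1] + \mathbf{1}[i \in T_2] = \mathbf{1}[i \in T_1 \cup T_2] + \mathbf{1}[i \in T_1 \cap T_2]$ by inclusion-exclusion, so $\tilde p$ is still feasible. Submodularity of $h$ gives $h(T_1 \cup T_2) + h(T_1 \cap T_2) \leq h(T_1) + h(T_2)$, so $\tilde p$ is still a minimizer. Finally, a short calculation using $|T_1 \cup T_2| + |T_1 \cap T_2| = |T_1| + |T_2|$ gives $\Psi(\tilde p) - \Psi(p^*) = 2\delta\,|T_1 \setminus T_2|\,|T_2 \setminus T_1| > 0$, contradicting maximality of $\Psi(p^*)$.

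Hence $\mathrm{supp}(p^*)$ is a chain $T_0 \subsetneq T_1 \subsetneq \cdots \subsetneq T_k$. If $T_0 = \emptyset$, then every $i \in T_1$ has $\Pr_{p^*}[i \in B] = p_1 + \cdots + p_k = 1 - p_0 \leq q$, i.e., $p_0 \geq 1-q$, whence $\mathbb{E}_{p^*}[h(B)] \geq p_0\, h(\emptyset) \geq (1-q)\, h(\emptyset)$ by non-negativity of $h$. If $T_0 \neq \emptyset$, then every element of $T_0$ belongs to every set of the chain and so has marginal $1 \leq q$, forcing $q = 1$; in that case $(1-q)\,h(\emptyset) = 0 \leq \mathbb{E}[h(B)]$ trivially. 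The main obstacle I anticipate is the uncrossing step, specifically verifying the simultaneous preservation of marginals and strict increase of $\Psi$ (both reducing to the identity $|T_1 \cup T_2| + |T_1 \cap T_2| = |T_1| + |T_2|$ together with the fact that $|T_1 \setminus T_2|, |T_2 \setminus T_1| \geq 1$ whenever $T_1, T_2$ are incomparable); the chain-case analysis afterwards is essentially immediate.
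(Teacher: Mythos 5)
Your proof is correct. The paper itself gives no proof (it cites \cite{buchbinder2014submodular} as a black box), and the original proof there is different from yours: it observes that the Lov\'asz extension $\hat{h}(x) = \int_0^1 h(\{i : x_i \geq \theta\})\,d\theta$ is the \emph{convex closure} of a submodular function, so $\E[h(B)] \geq \hat{h}(\E[\mathbf{1}_B])$; and then, since every coordinate of $\E[\mathbf{1}_B]$ is at most $q$, the level set $\{i : x_i \geq \theta\}$ is empty for all $\theta > q$, whence $\hat{h}(\E[\mathbf{1}_B]) \geq \int_q^1 h(\emptyset)\,d\theta = (1-q)h(\emptyset)$ by non-negativity. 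Your uncrossing argument is essentially a self-contained re-derivation of the ``Lov\'asz extension $=$ convex closure'' fact restricted to the point you need: you show that the marginal-constrained minimizer of $\E[h(\cdot)]$ can be taken to be supported on a chain, and that on a chain the mass on the bottom set $\emptyset$ is at least $1-q$. The standard proof is shorter because it imports the Lov\'asz-extension machinery; yours is more elementary and does not presuppose that result, at the cost of the compactness/existence preliminaries and the $\Psi$-potential bookkeeping. All the details you flagged check out: marginal preservation under uncrossing follows from $\mathbf{1}[i\in T_1]+\mathbf{1}[i\in T_2]=\mathbf{1}[i\in T_1\cup T_2]+\mathbf{1}[i\in T_1\cap T_2]$; feasibility and optimality are preserved by submodularity; and $\Psi(\tilde p)-\Psi(p^*) = 2\delta|T_1\setminus T_2||T_2\setminus T_1| > 0$, which is exactly the strict-increase you need. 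The final chain-case analysis (either $\emptyset$ is in the chain and carries mass at least $1-q$, or every chain set is non-empty and hence $q=1$) is also right, using non-negativity of $h$ to drop the other terms. Both routes are valid; yours has the advantage of being fully self-contained.
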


\section{Our Basic Algorithm}\label{sec:MSM-algo}

In this section we describe our monotone submodular $b$-matching algorithm, which we will use with slight modifications and different parameter choices in coming sections. 

The algorithm maintains a stack of edges $S$, initially empty, as well as  vertex potentials $\vec \phi\in \mathbb{R}^{|V|}$. 
To avoid storing zero-valued potentials (increasing space to $\tilde{O}(n)$, rather than $\tilde{O}(M_{\max})$), we implicitly represent these by keeping a mapping from vertices with non-zero potentials to their potentials, by using e.g., a hash table or balanced binary search tree. When an edge $e$ arrives, we compare the marginal value of this arriving edge with respect to the stack to the sum of vertex potentials of the edge's endpoints times a slack parameter $C$. If $C \cdot \sum_{v \in e} \phi_v$ is larger, we continue to the next edge. Otherwise, with probability $q$ we add the edge to the stack and increment the endpoint vertex potentials. At the end of the stream, we construct a $b$-matching greedily by unwinding the stack \textit{in reverse order}. The pseudocode is given in \cref{alg:MSbM}. 

\begin{algorithm}
	\caption{The MSbM Algorithm}
	\label{alg:MSbM}
	\begin{minipage}[t]{0.46\linewidth}
		\begin{algorithmic}[1]	
			\vspace{0.1cm}
			\Statex \underline{\textbf{Initialization}}
			\State $S\gets \text{emptystack}$
			\State $\forall v\in V:\,  \phi^{(0)}_v\gets 0$ (implicitly)
			\vspace{0.1cm}
			\Statex \underline{\textbf{Loop}}
			\For{$t \in \{1, \ldots, |E| \}$}
			\State $e \leftarrow e^{(t)}$ 
			\State $\forall v\in V:\, \phi^{(t)}_v\gets \phi^{(t-1)}_v$ 
			\If{$C \cdot \sum_{v\in e}\phi^{(t-1)}_{v} \geq f(e:S)$} \label{line:nm-cover-check}
			\State \Continue \Comment{skip edge $e$}
			\Else
			\With{probability $q$} \label{line:nm-sampling}
			\State $S.push(e)$
			\For{$v\in e$}
			\State $w_{ev} \leftarrow \frac{f(e:S) - \sum_{v\in e}\phi^{(t-1)}_{v}}{b_v}$
			\EndFor
			\For{$v\in e$}
			\State $\phi^{(t)}_{v} \gets \phi^{(t-1)}_{v} + w_{ev}$
			\EndFor
			\EndWith
			\EndIf
			\EndFor
		\end{algorithmic}
	\end{minipage}
	\hfill
	\begin{minipage}[t]{0.46\linewidth}
		\begin{algorithmic}[1]
			\setcounter{ALG@line}{14}
			\vspace{0.1cm}
			\Statex \underline{\textbf{Post-Processing}}
			\State $M \gets  \emptyset$  \label{line:unwind-MSbM-begin}
			\While{$S\neq \text{emptystack}$} 
			\State $e\gets S.pop()$
			\If{$|M \cap N(e)| < b_v$ for all $v\in e$} \label{line:nm-block-check}
			\State $M \gets M \cup \{e\}$
			\EndIf
			\EndWhile
			\State\Return $M$ \label{line:unwind-MSbM-end}			
		\end{algorithmic}
	\end{minipage}
	
	\hfill
\end{algorithm}

\Cref{alg:MSbM} clearly outputs a feasible $b$-matching.
In subsequent sections we analyze this algorithm for various values of the parameters $C$ and $q$. Before doing so, we note that this algorithm when run with $C=1+\Omega(1)$ is indeed a streaming algorithm, i.e., its space usage is as follows.

\begin{restatable}{lem}{spacebound}
	\label{space-bound}
		For any constant $\epsilon>0$, \Cref{alg:MSbM} run with $C=1+\epsilon$ uses $\tilde{O}(M_{\max})$ space.
\end{restatable}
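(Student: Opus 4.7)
I will bound the two components of the algorithm's space consumption: the stack $S$ and the (implicitly stored) non-zero vertex potentials $\phi_v$. Because $\phi_v$ is only ever modified as a side effect of pushing an edge incident to $v$ onto $S$, the number of vertices with non-zero potential is at most $|V(S)| \leq 2|S|$. Combined with the paper's assumption $\fmax/\fmin = n^{O(1)}$, which implies each stored scalar fits in $O(\log n)$ bits, it therefore suffices to show that $|S| = \tilde{O}(M_{\max})$.

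The crux of the plan is a geometric-growth bound: for every vertex $v$, I will show that the number $d_v(S)$ of edges of $S$ incident to $v$ satisfies $d_v(S) = O((b_v/\epsilon)\log(n\fmax/\fmin)) = \tilde{O}(b_v/\epsilon)$. Enumerate the edges incident to $v$ that are pushed onto $S$ as $e_1, e_2, \ldots$, and write $\phi_v^{(i)}$ for the value of $\phi_v$ just after pushing $e_i$. Failure of the guard on line~\ref{line:nm-cover-check} yields $\phi_v^{(i-1)} \leq \sum_{v' \in e_i} \phi_{v'}^{(i-1)} < f(e_i:S)/C$, so in particular $f(e_i:S) > C\,\phi_v^{(i-1)}$. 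Substituting into the dual-update rule gives
\[
w_{e_i v} = \frac{f(e_i:S) - \sum_{v'\in e_i}\phi_{v'}^{(i-1)}}{b_v} \;\geq\; \frac{(1-1/C)\,f(e_i:S)}{b_v} \;\geq\; \frac{C-1}{b_v}\,\phi_v^{(i-1)} \;=\; \frac{\epsilon}{b_v}\,\phi_v^{(i-1)},
\]
so $\phi_v^{(i)} \geq \phi_v^{(i-1)}(1 + \epsilon/b_v)$. The first increment is at least $\Omega(\epsilon\fmin/b_v)$ (bounding the other endpoint's initial potential by the same guard), and every $\phi_v^{(i)}$ is at most $f(e_i:S)/C + f(e_i:S)/b_v \leq 2\fmax$. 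Taking logs and invoking $\fmax/\fmin = n^{O(1)}$ then gives the claimed bound on $d_v(S)$.

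To convert the per-vertex bound into a bound on $|S|$, I will view $S$ as a subgraph of $G$ in which every vertex has degree at most $c\,b_v$ with $c = \tilde{O}(1/\epsilon)$. Scaling each edge's indicator by $1/c$ then produces a feasible fractional $b$-matching of value $|S|/c$. Since the fractional $b$-matching polytope has $O(1)$ integrality gap (exactly $1$ on bipartite graphs, and at most $3/2$ in general by Edmonds' odd-set inequalities), this yields $|S|/c = O(M_{\max})$, hence $|S| = \tilde{O}(M_{\max})$, as desired.

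The only nontrivial step is the geometric-growth inequality; the rest is bookkeeping. The main subtlety I anticipate is extracting a \emph{multiplicative} factor $(1 + \epsilon/b_v)$ from the naturally additive dual-update rule, which requires using the guard's failure to lower bound $f(e_i:S)$ in terms of $\phi_v^{(i-1)}$ rather than merely using the dual-feasibility inequality on its own.
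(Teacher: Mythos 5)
Your proof is correct, and the heart of it—the geometric-growth bound $d_v(S) = \tilde{O}(b_v/\epsilon)$ per vertex, obtained by using the failure of the guard on \cref{line:nm-cover-check} to show $\phi_v^{(i)} \geq (1+\epsilon/b_v)\,\phi_v^{(i-1)}$, together with the bounds on the first increment and on $\fmax/\fmin$—is exactly the paper's Lemma~\ref{per-node-bound}, derived the same way.

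Where you diverge is the conversion of the per-vertex bound into a bound on $|S|$. The paper argues directly: take a maximum-cardinality $b$-matching $M$ (so $|M|=M_{\max}$), observe that the $M$-saturated vertices $U := \{v : \deg_M(v) = \min\{b_v,d_v\}\}$ cover every edge outside $M$ (else $M$ could be extended), note $\sum_{v\in U}\min\{b_v,d_v\} \leq 2|M|$ by double-counting, and sum the per-vertex bound over $U$. You instead scale $\mathbf{1}_S/c$ to a feasible fractional $b$-matching (with $c=\tilde{O}(1/\epsilon)\geq 1$ so that $x_e\leq 1$ holds too) and invoke a constant integrality gap for the degree-constrained $b$-matching LP. This is a valid alternative and has a nice conceptual payoff—it makes explicit that the stack $S$ is, up to a $\tilde{O}(1)$ scaling, a fractional $b$-matching. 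Two small caveats, however: the figure $3/2$ is the classic gap for $b\equiv 1$; for general $\vec{b}$ you should either cite it more carefully or settle for a weaker constant. And the cleanest self-contained route to a constant gap here (say $2$) is precisely the paper's vertex-cover argument applied to the fractional point: for $U$ as above, $\sum_e x_e \leq \sum_{v\in U}\sum_{e\ni v}x_e \leq \sum_{v\in U}\min\{b_v,d_v\} \leq 2 M_{\max}$. So your LP detour repackages the same combinatorial fact rather than bypassing it; the paper's version is a touch more elementary, while yours is a touch more structural. Either is fine.
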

The proof relies on each vertex $v$ only having $\tilde{O}(b_v)$ edges in the stack, since every edge incident on vertex $v$ inserted to the stack increases $\phi_v$ by a multiplicative factor of $(C-1)/b_v$, while the minimum and maximum non-zero values which $\phi_v$ can take are polynomially bounded in each other, due to $f$ being polynomially bounded. 
Since this bound holds in particular for vertices of a minimum vertex cover, the bound follows due to minimum vertex covers having size at most twice that of maximal $b$-matchings in the same graph.
See \Cref{sec:deferred-msm-algo} for the complete proof.

We further note that  \Cref{alg:MSbM} runs in time (quasi)linear in $|E|$, times evaluation time of $f$.
\begin{restatable}{lem}{algtime}\label{time-bound}
	A randomized (deterministic) implementation of \Cref{alg:MSbM} requires $O(1)$ ($O(\log n)$) operations and $O(1)$ function evaluations per arrival, followed by $\tilde{O}(M_{\max})$ time post-processing. Using $\tilde{O}(n)$ space, the deterministic time per arrival can be decreased to $O(1)$, by storing all $\phi_v$.
\end{restatable}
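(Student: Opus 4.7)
The plan is to account, step by step, for the work done at each arrival and at the post-processing phase, while tracking (i) the data structure holding the non-zero entries of $\vec{\phi}$, (ii) arithmetic/comparison operations, and (iii) calls to the value oracle for $f$. The two regimes differ only in how $\vec{\phi}$ is stored: a hash table (randomized, $O(1)$ expected per operation) versus a balanced binary search tree (deterministic, $O(\log n)$ worst case, since by \Cref{space-bound} the number of nonzero $\phi_v$ is at most $2|S| = \tilde{O}(M_{\max}) \le \tilde{O}(n)$).

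First I would handle a single arrival. When $e$ arrives we make exactly one oracle call, $f(e:S)$, needed to evaluate the test on line~6. Evaluating the left-hand side $C\cdot\sum_{v\in e}\phi^{(t-1)}_v$ requires two lookups into the $\vec{\phi}$ structure (one per endpoint), which cost $O(1)$ or $O(\log n)$, and then constant-time arithmetic. If the test rejects $e$ we are done. Otherwise, with probability $q$, we compute each $w_{ev}$ from the already-cached $f(e:S)$ and the already-looked-up $\phi_v$'s (no additional oracle call and only $O(1)$ arithmetic), push $e$ onto the stack in $O(1)$, and issue two updates to $\vec{\phi}$. The crucial observation is that $f(e:S)$ is computed once and reused, so the total cost per arrival is one function evaluation and $O(1)$ (randomized) or $O(\log n)$ (deterministic) other operations. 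The implicit-update convention for line~5 is essential here: we never touch the $n-O(1)$ coordinates that do not change.

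Next I would handle the post-processing (lines \ref{line:unwind-MSbM-begin}--\ref{line:unwind-MSbM-end}). Alongside $M$ we maintain a map $c$ that assigns to each vertex $v\in\bigcup_{e\in M}e$ the count $|M\cap N(v)|$, stored in a hash table or BST of the same flavour as $\vec{\phi}$. For each popped $e$ we do $O(1)$ lookups to check $c_v < b_v$ for both endpoints and, if $e$ is added to $M$, $O(1)$ increments. By \Cref{space-bound}, $|S|=\tilde{O}(M_{\max})$, so the total time is $\tilde{O}(M_{\max})$ in either implementation, the $O(\log n)$ BST factor being absorbed into the $\tilde{O}$.

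Finally, for the $\tilde{O}(n)$-space variant, I would simply store $\vec{\phi}$ in a length-$n$ array indexed by the vertices, which allows $O(1)$ deterministic reads and writes; this immediately yields $O(1)$ deterministic time per arrival and does not otherwise affect the analysis. The only subtlety throughout --- and the one point the proof needs to make explicit --- is that each arrival makes a single oracle call, with the value cached across line~6 and the assignment to $w_{ev}$; everything else is bookkeeping on top of the standard time bounds of hash tables, balanced BSTs, or arrays.
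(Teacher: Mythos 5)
Your proof is correct and follows essentially the same approach as the paper's (which is stated much more tersely): the randomized/deterministic split comes from choosing a hash table vs.\ a balanced BST for the nonzero $\phi_v$, each arrival uses $O(1)$ oracle calls and $O(1)$ structure operations, post-processing is linear in $|S|=\tilde{O}(M_{\max})$, and the $\tilde{O}(n)$-space array variant gives $O(1)$ deterministic lookups. Your version is simply more explicit about caching the single $f(e:S)$ evaluation and about the auxiliary degree counters used in the unwinding phase, which is a useful clarification but not a different argument.
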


\section{Monotone MSbM}\label{sec:monotone}

In this section we will consider a \emph{deterministic} instantiation of \Cref{alg:MSbM} (specifically, we will set $q=1$) in the context of monotone submodular $b$-matching.

To argue about the approximation ratio, we will fit a dual solution to this algorithm. Define the auxiliary submodular functions ${g^S: 2^{E} \rightarrow \R^+}$ to be $g^S(T) := f(S \cup T)$. We will work with the dual LP \hyperlink{LP_q}{(D)} for the function $g^S$, and consider the following dual solution.
\begin{align*}
\mu & := f(S) = g^S(\emptyset),\\ \phi_v & := C\cdot \phi^{(|E|)}_v \\
\lambda_e & := \begin{cases}
f(e:S) & e\not \in S \\
0 & e\in S.
\end{cases}
\end{align*}
We start by showing that the above is indeed dual feasible.

\begin{lem}\label{obs:m-g-dual-covered}
	The dual solution $(\mu, \vec \phi, \vec \lambda)$ is feasible for the LP $(D)$ with function $g^S$.
\end{lem}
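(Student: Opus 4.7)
\medskip

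\textbf{Plan.} The dual LP has three families of constraints to verify: (i) non-negativity $\phi_v \ge 0$, (ii) the per-edge constraint $\sum_{v\in e}\phi_v \geq \lambda_e$, and (iii) the configuration-style constraint $\mu + \sum_{e\in T}\lambda_e \geq g^S(T)$ for every $T\subseteq E$. I will verify them in this order, starting with the two easier ones and finishing with the family that has exponentially many inequalities.

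\medskip

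\textbf{Constraints (i) and the easy part of (ii).} Non-negativity is immediate: the potentials $\phi_v^{(t)}$ are initialized to $0$ and only ever increased by non-negative amounts $w_{ev}$, so $\phi_v = C\cdot \phi_v^{(|E|)}\geq 0$. For the edge constraint, the case $e\in S$ is trivial because $\lambda_e=0$ and $\phi$ is non-negative.

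\medskip

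\textbf{Constraint (ii) for $e\notin S$.} Since we are in the deterministic regime $q=1$, an edge $e$ arriving at time $t$ is absent from $S$ only because the test on \cref{line:nm-cover-check} fired, i.e., $C\cdot \sum_{v\in e}\phi_v^{(t-1)} \geq f(e:S^{(t-1)})$, where $S^{(t-1)}$ denotes the stack just before $e$'s arrival. I would then observe two monotonicities. First, potentials are monotone in time, so $\sum_{v\in e}\phi_v^{(|E|)}\geq \sum_{v\in e}\phi_v^{(t-1)}$. Second, edges arrive in order, so the set of elements of $S$ arriving strictly before $e$ is frozen at time $t-1$, i.e., $S\cap\{e'<e\}=S^{(t-1)}\cap\{e'<e\}=S^{(t-1)}$, which by the definition of $f(e:\cdot)$ gives $f(e:S)=f(e:S^{(t-1)})$. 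Chaining these yields $\sum_{v\in e}\phi_v \geq f(e:S)=\lambda_e$.

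\medskip

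\textbf{Constraint (iii): the configuration family.} Fix $T\subseteq E$. The goal is to show
\[
f(S) + \sum_{e\in T\setminus S} f(e:S) \;\geq\; f(S\cup T),
\]
where I have already absorbed the fact that $\lambda_e=0$ on $T\cap S$. Here I would invoke the two properties of $f(e:\cdot)$ recalled in the preliminaries: the telescoping identity $\sum_{e\in A} f(e:A)=f(A)$ applied to $A=S\cup T$, and the monotonicity $f(e:A)\geq f(e:B)$ whenever $A\subseteq B$, applied with $A=S$ and $B=S\cup T$. Together they give
\[
f(S\cup T)=\sum_{e\in S\cup T} f(e:S\cup T) \;\leq\; \sum_{e\in S} f(e:S) + \sum_{e\in T\setminus S} f(e:S) \;=\; f(S) + \sum_{e\in T\setminus S} f(e:S),
\]
which is exactly $\mu+\sum_{e\in T}\lambda_e\geq g^S(T)$.

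\medskip

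\textbf{Where the work sits.} None of the three steps is individually hard; the main subtlety is the bookkeeping in step (ii), where one has to be careful that the \emph{final} potentials and the \emph{final} stack (as they appear in the dual) still satisfy the inequality that was enforced using the \emph{time-of-arrival} potentials and stack. The two monotonicities (of $\phi$ over time and of the ``prefix of the stack before $e$'' being fixed once $e$ arrives) are exactly what make the deterministic analysis go through; the same step is precisely where the randomized version in \cref{sec:non-monotone} will need additional care, because there dual feasibility will hold only in expectation over the sub-sampling on \cref{line:nm-sampling}.
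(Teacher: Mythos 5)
Your proof is correct and follows essentially the same structure as the paper's: the configuration constraints $\mu+\sum_{e\in T}\lambda_e\geq g^S(T)$ via submodularity, and the per-edge constraints $\sum_{v\in e}\phi_v\geq\lambda_e$ via the check on \cref{line:nm-cover-check} combined with monotonicity of $\phi^{(t)}_v$ in $t$. The only (cosmetic) differences are that you make explicit the identity $f(e:S)=f(e:S^{(t-1)})$, which the paper leaves implicit, and that you derive the configuration constraint entirely in the $f(e:\cdot)$ notation (telescoping plus monotonicity of $f(e:\cdot)$) rather than passing through the marginals $f_S(\cdot)$ and then invoking submodularity directly, as the paper does --- both derivations are equally short and valid.
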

\begin{proof}
	To see that the first set of constraints is satisfied, note that by submodularity of $f$
	\begin{align*}
	\sum_{e \in T} \lambda_e = \sum_{e \in T \setminus S} f(e:S) \geq \sum_{e \in T \setminus S} f_{S}(e) \geq f_{S}(T \setminus S) = f(S \cup T) - f(S) = g^S(T) - \mu.
	\end{align*}
	
	For the second set of constraints, note that an edge $e=e^{(t)}$ is not added to the stack if and only if the check at \cref{line:nm-cover-check} fails. Therefore, since $\phi^{(t)}_v$ values increase monotonically with $t$, we have
	\[  \sum_{v\in e} \phi_v = C\cdot \sum_{v\in e} \phi^{(|E|)}_v  \geq C\cdot \sum_{v\in e} \phi^{(t-1)}_v \geq  f(e:S) = \lambda_e. \qedhere\]	
\end{proof}

It remains to relate the value of the solution $M$ to the cost of this dual. We first prove an auxiliary relationship that will be useful:

\begin{lem}
	\label{lem:m_fgeqw}
	The $b$-matching $M$ output by \Cref{alg:MSbM} satisfies 
	$$f(M) \geq \frac{1}{2}\cdot \sum_{e \in S}\sum_{v \in e} b_v\cdot w_{ev}.$$
\end{lem}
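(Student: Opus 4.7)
The plan is to fit the value of $M$ to the \emph{dual-like} cost $\tfrac{1}{2}\sum_{e\in S}\sum_{v\in e}b_vw_{ev}$ via two ingredients: a telescoping/submodularity inequality that lower bounds $f(M)$ by a sum of per-edge marginals with respect to $S$, and a charging argument that exploits the reverse-unwinding structure of \Cref{alg:MSbM}.

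First, I would apply the identity $f(T)=\sum_{e\in T}f(e:T)$ (with $T$ ordered by arrival) to $T=M$. Since $M\subseteq S$, for each $e\in M$ we have $M\cap\{e'<e\}\subseteq S\cap\{e'<e\}$, so submodularity gives $f_{M\cap\{e'<e\}}(e)\ge f_{S\cap\{e'<e\}}(e)=f(e:S)$, yielding $f(M)\ge \sum_{e\in M}f(e:S)$. Next I would unpack $f(e:S)$ using the algorithm's update rule: at the time $t$ that $e$ was pushed, $b_vw_{ev}=f(e:S)-\sum_{u\in e}\phi_u^{(t-1)}$, which is the \emph{same} value for both endpoints of $e$. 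With $|e|=2$ this gives the two handy identities
\[ f(e:S)=\sum_{v\in e}\phi_v^{(t-1)}+\tfrac{1}{2}\sum_{v\in e}b_vw_{ev},\qquad \tfrac{1}{2}\sum_{v\in e}b_vw_{ev}=b_{v}w_{ev}\text{ for any }v\in e. \]
Substituting the first identity into $f(M)\ge\sum_{e\in M}f(e:S)$ and splitting $\sum_{e\in S}=\sum_{e\in M}+\sum_{e\in S\setminus M}$ on the target, the claim reduces to
\[ \sum_{m\in M}\sum_{v\in m}\phi_v^{(t_m-1)}\;\ge\;\tfrac{1}{2}\sum_{e\in S\setminus M}\sum_{v\in e}b_vw_{ev}. \]

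For this reduced inequality I would run a charging argument along the reverse unwinding. Each $e\in S\setminus M$ was rejected at \cref{line:nm-block-check} because some endpoint was already saturated in $M$; pick any such blocking endpoint $v(e)\in e$. By the moment $e$ is popped, $M$ contains $b_{v(e)}$ edges incident to $v(e)$, all popped earlier, i.e.\ pushed into $S$ \emph{after} $e$. Using $\tfrac{1}{2}\sum_{v\in e}b_vw_{ev}=b_{v(e)}w_{ev(e)}$, I would distribute this mass by giving a charge of $w_{ev(e)}$ to each of those $b_{v(e)}$ blockers $m\in M$. The total charge accumulated at a pair $(m,v)$ with $m\in M,\,v\in m$ is then a sum of $w_{ev}$ over some subset of edges $e\in S$ with $v\in e$ and $e$ pushed before $m$. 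By non-negativity of the $w_{ev}$ and the algorithm's update, this sum is dominated by $\sum_{e\in S,\,v\in e,\,e<m}w_{ev}=\phi_v^{(t_m-1)}$. Summing over $(m,v)$ closes the inequality.

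The main obstacle I foresee is not any single calculation but the bookkeeping: three orderings (arrival time, push order, pop order) must be kept straight, and the charging invariants must be verified carefully---that a blocking endpoint $v(e)$ always exists when $e\in S\setminus M$, that each chosen blocker $m$ was pushed strictly after $e$ so that $w_{ev}$ is actually one of the summands of $\phi_v^{(t_m-1)}$, and that a given $w_{ev}$ is charged to $(m,v)$ at most once. Once these invariants are pinned down, each remaining step is a one-liner from submodularity, non-negativity of $w$, and the definition of $\phi$, and the exact factor of $\tfrac{1}{2}$ falls out from the per-edge identity $\tfrac{1}{2}\sum_{v\in e}b_vw_{ev}=b_{v(e)}w_{ev(e)}$.
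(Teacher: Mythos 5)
Your proof is correct and follows essentially the same route as the paper: lower-bound $f(M)$ by $\sum_{e\in M}f(e:S)$ via submodularity, unpack $f(e:S)$ into the $\phi^{(t-1)}_v$ contributions plus the $\frac{1}{2}\sum_{v\in e}b_v w_{ev}$ term using $b_u w_{eu}=b_v w_{ev}$, and then charge each rejected edge's mass to the $b_v$ blockers that were pushed after it. You are a bit more careful with the bookkeeping than the paper (which is slightly loose about the factor of $\tfrac12$ in its intermediate display), but the decomposition, the use of $f(e:M)\ge f(e:S)$, and the blocking-endpoint charging are all the same.
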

\begin{proof}
	We first note that for any edge $e=e^{(t)}$ and $v\in e$, since $\phi^{(t-1)}_v = \sum_{e' \ni v, e' < e} w_e$, we have that
	\[f(e:S) = b_v\cdot w_{ev} + \sum_{u \in e} \phi^{(t-1)}_u \geq b_v\cdot w_{ev} + \phi^{(t-1)}_v  = b_v\cdot w_{ev}+\sum_{\substack{e'\ni v \\ e'<e }}w_{e'v}.\]
	Combined with submodularity of $f$, the above yields the following lower bound on $f(M)$,
	\begin{align*}
	f(M) & =
	\sum_{e \in M} f(e:M) \geq \sum_{e \in M} f(e:S) \geq \sum_{e\in M}\sum_{v \in e} \big(b_v\cdot w_{ev}+ \sum_{\substack{e'\ni v \\ e'<e}}w_{e'v}\big).
	\end{align*}
	On the other hand, the greedy manner in which we construct $M$ implies that any edge $e'\in S\setminus M$ must have at least one endpoint $v$ with $b_v$ edges $e>e'$ in $M$. Consequently, the term $w_{e'v}$ for such $e$ and $v$ is summed $b_v$ times in the above lower bound for $f(M)$. On the other hand, $b_v\cdot w_{ev} = b_u\cdot w_{eu}$ for $e=(u,v)$, by definition. From the above we obtain our desired inequality.
	\begin{align*}
	f(M) & \geq \sum_{e\in M}\sum_{v \in e} b_v\cdot w_{ev} + \frac{1}{2}\cdot \sum_{e\in S\setminus M} \sum_{v\in e} b_v\cdot w_{ev} \geq \frac{1}{2}\cdot \sum_{e\in S}\sum_{v\in e} b_v\cdot w_{ev}.\qedhere
	\end{align*}
\end{proof}

We can now bound the two terms in the dual objective separately with respect to the primal, using the following two corollaries of \Cref{lem:m_fgeqw}.

\begin{lem}
	\label{cor:m_fandphi}
	The $b$-matching $M$ output by \Cref{alg:MSbM} satisfies $f(M) \geq \frac{1}{2C} \sum_{v \in V} b_v\cdot \phi_v$.
\end{lem}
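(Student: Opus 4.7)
The plan is to reduce this directly to Lemma \ref{lem:m_fgeqw} by rewriting the right-hand side in terms of the edge-weights $w_{ev}$. Recall that $\phi_v = C \cdot \phi^{(|E|)}_v$, and $\phi^{(|E|)}_v$ is built up incrementally as $\sum_{e \in S,\, e \ni v} w_{ev}$, since the only updates to $\phi_v$ occur in the inner loop when an edge $e \ni v$ is pushed onto $S$.

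First I would write
\begin{align*}
\sum_{v \in V} b_v \cdot \phi_v \;=\; C \cdot \sum_{v \in V} b_v \sum_{\substack{e \in S \\ e \ni v}} w_{ev} \;=\; C \cdot \sum_{e \in S} \sum_{v \in e} b_v \cdot w_{ev},
\end{align*}
where the second equality just swaps the order of summation over $(v, e)$ pairs with $v \in e$ and $e \in S$.

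Then applying Lemma \ref{lem:m_fgeqw}, which gives $f(M) \geq \tfrac{1}{2} \sum_{e \in S} \sum_{v \in e} b_v \cdot w_{ev}$, the desired inequality $f(M) \geq \tfrac{1}{2C} \sum_{v \in V} b_v \cdot \phi_v$ follows immediately by dividing by $C$. There is no real obstacle here — the content is all in the previous lemma; this corollary is essentially bookkeeping that exchanges the vertex-indexed dual variables back into the edge-indexed primal quantities used in Lemma \ref{lem:m_fgeqw}.
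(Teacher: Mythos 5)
Your proof is correct and mirrors the paper's argument: both rewrite $\sum_v b_v\phi_v$ as $C\sum_{e\in S}\sum_{v\in e}b_v w_{ev}$ (the paper via an explicit telescoping sum $\sum_t(\phi_v^{(t)}-\phi_v^{(t-1)})$, you via the equivalent observation that $\phi_v^{(|E|)}$ is the accumulated sum of increments $w_{ev}$ over stack-pushed edges $e\ni v$), and then both invoke Lemma~\ref{lem:m_fgeqw} to conclude.
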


\begin{proof}
	Since $\phi_v = C\cdot \phi^{(|E|)}_v$, and $w_{ev} = \phi^{(t)}_v - \phi^{(t-1)}_v$ for all $v\in e = e^{(t)}$, \Cref{lem:m_fgeqw} implies that
	\[
	f(M) \geq \frac{1}{2}\cdot \sum_{e\in S}\sum_{v \in e} b_v\cdot w_{ev} = \frac{1}{2}\cdot \sum_{v \in V}\sum_{t=1}^{|E|} b_v\cdot \left(\phi^{(t)}_v - \phi^{(t-1)}_v\right) = \frac{1}{2}\cdot \sum_{v \in V} b_v\cdot \phi^{(|E|)}_v = \frac{1}{2C}\cdot \sum_{v\in V} b_v\cdot \phi_v.\qedhere\]
\end{proof}

\begin{lem}
	\label{cor:m_fandmu}
	The $b$-matching $M$ output by \Cref{alg:MSbM} satisfies $f(M) \geq  \left(1 - \frac{1}{C}\right) \mu$.
\end{lem}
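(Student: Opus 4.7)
The goal is to bound $\mu = f(S) = \sum_{e \in S} f(e:S)$ against $f(M)$, and the natural vehicle is \Cref{lem:m_fgeqw}, which already converts $f(M)$ into a sum of $b_v w_{ev}$ terms over edges in the stack. So the plan is to translate each $b_v w_{ev}$ into (roughly) $f(e:S)$, using the edge-acceptance rule.

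\textbf{Step 1 (rewrite $b_v w_{ev}$).} The key observation is that the definition
\[ w_{ev} = \frac{f(e:S) - \sum_{u\in e}\phi^{(t-1)}_{u}}{b_v} \]
gives $b_v \cdot w_{ev} = f(e:S) - \sum_{u\in e}\phi^{(t-1)}_{u}$, and crucially this quantity does \emph{not} depend on which endpoint $v\in e$ we chose. Hence for each $e = e^{(t)} \in S$,
\[ \sum_{v\in e} b_v\cdot w_{ev} = 2\left(f(e:S) - \sum_{u\in e}\phi^{(t-1)}_{u}\right). \]

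\textbf{Step 2 (use the acceptance rule).} Since $e$ was pushed onto $S$, the test at \cref{line:nm-cover-check} failed, i.e.\ $C \cdot \sum_{u\in e}\phi^{(t-1)}_{u} < f(e:S)$, equivalently $\sum_{u\in e}\phi^{(t-1)}_{u} < f(e:S)/C$. Substituting,
\[ \sum_{v\in e} b_v\cdot w_{ev} \;\geq\; 2\left(1 - \tfrac{1}{C}\right) f(e:S). \]

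\textbf{Step 3 (sum and invoke $\sum_{e\in S} f(e:S) = f(S)$).} Plugging into \Cref{lem:m_fgeqw} and summing over $e\in S$,
\[ f(M) \;\geq\; \tfrac{1}{2} \sum_{e\in S}\sum_{v\in e} b_v\cdot w_{ev} \;\geq\; \left(1 - \tfrac{1}{C}\right) \sum_{e\in S} f(e:S) \;=\; \left(1 - \tfrac{1}{C}\right) f(S) \;=\; \left(1 - \tfrac{1}{C}\right)\mu, \]
where the last equality uses $\mu := f(S)$ and the identity $\sum_{e\in S} f(e:S) = f(S)$ noted in the preliminaries.

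\textbf{Main obstacle.} There is essentially no obstacle; all the real work was done in \Cref{lem:m_fgeqw}. The only mild subtlety to flag in the write-up is the symmetry $b_v w_{ev} = b_u w_{eu}$ for $e=(u,v)$ (both endpoints share the same ``residual value''), which is what turns the per-edge sum $\sum_{v\in e} b_v w_{ev}$ into exactly $2\cdot(f(e:S) - \sum_u \phi^{(t-1)}_u)$, giving the clean factor of $(1 - 1/C)$ rather than $\tfrac{1}{2}(1-1/C)$ after the $1/2$ from \Cref{lem:m_fgeqw}.
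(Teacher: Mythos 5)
Your proof is correct and follows essentially the same route as the paper: rewrite $b_v w_{ev}$ via its definition, lower-bound by $(1-\tfrac1C)f(e:S)$ using the acceptance test, note the two endpoints of $e$ contribute equal terms, and plug into \Cref{lem:m_fgeqw} together with $\sum_{e\in S}f(e:S)=f(S)$. The only cosmetic difference is that you sum over both endpoints up front (getting the factor $2$ immediately), while the paper invokes $b_v w_{ev}=b_u w_{eu}$ at the end; the underlying observation is identical.
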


\begin{proof}
	We note that $w_{e}>0$ for an edge $e = e^{(t)}$ if and only if $f(e:S) \geq C\cdot \sum_{v\in e} \phi^{(t-1)}_v$. Hence,
	\[b_v\cdot w_{ev} = f(e:S) - \sum_{v\in e} \phi^{(t-1)}_v  \geq \left(1-\frac{1}{C}\right)\cdot f(e:S).\]
	Combining the above with \cref{lem:m_fgeqw}, and again recalling that for $e=(u,v)$, we have that $b_v\cdot w_{ev} = b_u\cdot w_{eu}$, by definition, we obtain the desired inequality.
	\[f(M) \geq \frac{1}{2}\cdot \sum_{e \in S}\sum_{v \in e} b_v\cdot  w_{ev} \geq \left(1 - \frac{1}{C}\right) \sum_{e \in S} f(e:S) = \left(1 - \frac{1}{C}\right)  f(S). \qedhere \]
\end{proof}

Combining the above two corollaries and \Cref{obs:m-g-dual-covered} with LP duality, we can now analyze the algorithm's approximation ratio.
\begin{thm}
	\Cref{alg:MSbM} run with $q=1$ and $C$ on a monotone \msbm instance outputs a $b$-matching $M$ of value \[\left(2C+\frac{C}{C-1}\right) \cdot f(M) \geq f(\opt).\]
	This is optimized by taking $C=1+\frac{1}{\sqrt2}$, which yields a $3+2\sqrt{2}\approx 5.828$ approximation.
\end{thm}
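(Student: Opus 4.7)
The plan is to combine the two primal-to-dual bounds already established (\Cref{cor:m_fandphi} and \Cref{cor:m_fandmu}) with weak LP duality applied to the dual $(\mu, \vec\phi, \vec\lambda)$ that was shown to be feasible for $(D)$ with respect to $g^S$ in \Cref{obs:m-g-dual-covered}. The key observation is that the indicator vector $\vec{x}_{\opt}$ of an optimal $b$-matching is primal-feasible, and the corresponding concave-closure distribution (deterministically placing all mass on $\opt$) yields primal objective $g^S(\opt) = f(S \cup \opt) \geq f(\opt)$ by monotonicity of $f$. Hence by weak duality,
\[ f(\opt) \;\leq\; g^S(\opt) \;\leq\; \mu + \sum_{v\in V} b_v\cdot \phi_v. \]

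Next I would rearrange \Cref{cor:m_fandphi} and \Cref{cor:m_fandmu} to upper bound each dual term by a multiple of $f(M)$: the former gives $\sum_v b_v\phi_v \leq 2C\cdot f(M)$, and the latter gives $\mu \leq \frac{C}{C-1}\cdot f(M)$. Adding these yields
\[ f(\opt) \;\leq\; \mu + \sum_{v\in V} b_v\phi_v \;\leq\; \Bigl(2C + \tfrac{C}{C-1}\Bigr)\cdot f(M), \]
which is precisely the claimed inequality.

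Finally, to obtain the sharpest ratio, I would minimize $\alpha(C) := 2C + \tfrac{C}{C-1}$ over $C>1$. Writing $\tfrac{C}{C-1} = 1 + \tfrac{1}{C-1}$ and differentiating, the first-order condition $2 = \tfrac{1}{(C-1)^2}$ gives $C-1 = \tfrac{1}{\sqrt 2}$, i.e.\ $C = 1+\tfrac{1}{\sqrt 2}$. Plugging back yields $\alpha = 2(1+\tfrac{1}{\sqrt 2}) + (1+\tfrac{1}{\sqrt 2})\cdot \sqrt 2 = 2 + \sqrt 2 + \sqrt 2 + 1 = 3 + 2\sqrt 2$, as claimed.

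There is no real obstacle here: the two structural lemmas have done all of the combinatorial work, and the remaining step is the standard primal-dual argument plus a one-variable optimization. The one subtlety worth double-checking is that $\vec x_{\opt}$ is indeed a feasible primal point for the LP with objective $g^S$ (this uses that $\opt$ is a $b$-matching, independent of $S$) and that the corresponding concave-closure value equals $g^S(\opt)$, so that monotonicity of $f$ yields $g^S(\opt) \geq f(\opt)$ — both of which follow immediately from the definitions.
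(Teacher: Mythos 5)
Your proof is correct and follows exactly the paper's own argument: apply weak LP duality to the dual solution certified feasible by \Cref{obs:m-g-dual-covered} (using that $\vec{x}_{\opt}$ is primal-feasible and monotonicity gives $g^S(\opt)\geq f(\opt)$), then rearrange \Cref{cor:m_fandphi} and \Cref{cor:m_fandmu} to bound the dual objective by $\left(2C+\frac{C}{C-1}\right)f(M)$, and finally minimize over $C$. Your calculus step and the resulting $C=1+\tfrac{1}{\sqrt2}$, $\alpha=3+2\sqrt2$ are also correct.
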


\begin{proof}
	By weak LP duality and \Cref{obs:m-g-dual-covered}, together with monotonicity of $f$, we have that
	\[C \cdot\sum_v  b_v \cdot \phi_v + \mu \geq \max_T g^S(T) = \max_T f(S \cup T) \geq f(S \cup \opt) \geq f(\opt).\]
	Combining \cref{cor:m_fandphi} and \cref{cor:m_fandmu} and rearranging, we get the desired inequality,
	\[\left(2C + \frac{C}{C-1}\right) \cdot  f(M) \geq C \cdot\sum_v  b_v \cdot \phi_v + \mu \geq f(\opt). \qedhere\]
\end{proof}

In \Cref{sec:tight-MSM} we show that our analysis of \Cref{alg:MSbM} is tight.

We note that our analysis of this section required monotonicity, as we lower bounded $f(M)$ by (a multiple of) $f(S\cup OPT)\geq f(OPT)$, where the last step crucially relies on monotonicity. In the next section, we show how the use of randomness (namely, setting $q\neq 1$) allows us to obtain new results for \emph{non-monotone} \msm.

\section{Non-Monotone \msm}\label{sec:non-monotone}

In this section we consider \msm (so, $b_v=1$ for all $v$ in this section), for \emph{non-monotone} functions.

To extend our results to non-monotone \msm, we make use of the freedom to choose $q\not\in \{0,1\}$, resulting in a randomized algorithm. This will allow us to lower bound $\E_S[f(S\cup OPT)]$ in terms of $f(OPT)$. But first, we show that for appropriately chosen $q$, the output matching $M$ has high value compared to $\E_S[f(S\cup OPT)]$. The analysis of this fact will follow the same outline of \Cref{sec:monotone}, relying on LP duality, but with a twist.

For our dual fitting, we use the same dual solution as in \Cref{sec:monotone}.
However, this time this dual solution will only be feasible \emph{in expectation}, in the following sense. Since we now have $q \not \in \{0, 1\}$, \cref{alg:MSbM} is now a randomized algorithm, $S$ is a random set, $g^{S}$ is a random submodular function, and thus \hyperlink{LP_q}{(D)} is a random LP. Let $\expect{(D)}$ denote this LP, which is \hyperlink{LP_q}{(D)} with the submodular function $g(T):=\expectover{S}{g^{S}(T)}$. We now show that our dual solution's expectation is feasible for $\expect{(D)}$.

\begin{lem}\label{obs:nm-g-dual-covered}
	For $q \in [1/(2C + 1), 1/2]$, the expected dual solution $(\expect{\mu}, \expect{\vec \phi}, \expect{\vec \lambda})$ is feasible for the expected LP $\expect{(D)}$.
\end{lem}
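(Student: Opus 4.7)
The three families of dual constraints split into one trivial case ($\phi_v \geq 0$, immediate from $\phi_v^{(|E|)} \geq 0$), one case that survives verbatim from the monotone analysis, and one case that genuinely requires the randomness. I would handle them in that order.

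\textbf{First constraint ($\mu + \sum_{e\in T}\lambda_e \geq g(T)$).} I would argue this pointwise, exactly as in \Cref{obs:m-g-dual-covered}. The chain
\[\sum_{e\in T}\lambda_e = \sum_{e\in T\setminus S} f(e:S) \;\geq\; \sum_{e\in T\setminus S} f_S(e) \;\geq\; f_S(T\setminus S) = g^S(T) - \mu\]
goes through for \emph{any} realization of $S$: the first inequality is $f(e:S)\geq f_S(e)$ by submodularity, and the second is subadditivity of the marginal function $f_S$, which has $f_S(\emptyset)=0$ and is submodular, so no monotonicity of $f$ is needed. Taking expectations over $S$ gives $\E[\mu]+\sum_{e\in T}\E[\lambda_e] \geq \E_S[g^S(T)] = g(T)$.

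\textbf{Second constraint ($C\sum_{v\in e}\phi_v \geq \lambda_e$).} This is the main obstacle, because an edge $e$ can fail the check on \cref{line:nm-cover-check} and still have $\lambda_e = f(e:S) > 0$ when the $q$-coin at \cref{line:nm-sampling} comes up tails, in which case the algorithm never updates $\phi$ to cover it. The plan is to fix $e=e^{(t)}$ and condition on the history $\mathcal{H}_{t-1}$ of all random choices before time $t$; note that this determines $S_{t-1}$, the vector $\phi^{(t-1)}$, and thus $f(e:S)=f(e:S_{t-1})$ and whether the check passes. Then I split:
\emph{Case (a): check passes on $\mathcal{H}_{t-1}$.} The algorithm skips $e$, so $\lambda_e = f(e:S_{t-1})$ is deterministic given $\mathcal{H}_{t-1}$, and $C\sum_{v\in e}\phi^{(|E|)}_v \geq C\sum_{v\in e}\phi^{(t-1)}_v \geq f(e:S_{t-1})=\lambda_e$ pointwise, since $\phi$ is monotone in $t$.
\emph{Case (b): check fails on $\mathcal{H}_{t-1}$.} With probability $q$ the edge enters $S$ and $\lambda_e=0$; with probability $1-q$, $\lambda_e = f(e:S_{t-1})$. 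So $\E[\lambda_e\mid\mathcal{H}_{t-1}] = (1-q)\,f(e:S_{t-1})$. For the potentials, using $b_v=1$ so that each endpoint gains $w_{ev}=f(e:S_{t-1})-\sum_{u\in e}\phi^{(t-1)}_u$ when $e$ enters $S$, and using $\phi^{(|E|)}\geq \phi^{(t)}$,
\[C\sum_{v\in e}\E[\phi^{(|E|)}_v\mid\mathcal{H}_{t-1}] \;\geq\; C\bigl((1-2q)\sum_{v\in e}\phi^{(t-1)}_v + 2q\, f(e:S_{t-1})\bigr).\]
For $q\leq 1/2$ the coefficient $1-2q$ is non-negative and that term may be dropped, leaving $2qC\cdot f(e:S_{t-1})$. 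Requiring this to dominate $(1-q)\,f(e:S_{t-1})$ gives exactly $q\geq 1/(2C+1)$, which is the hypothesis. Averaging over $\mathcal{H}_{t-1}$ yields the claim in expectation.

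\textbf{Anticipated subtleties.} The one place to be careful is the per-edge update rule: for $b_v=1$ both endpoints of $e$ gain the same increment $w_{ev}$, so the expected increase of $\sum_{v\in e}\phi_v$ on a failed check is $2q\cdot(f(e:S_{t-1})-\sum_{v\in e}\phi_v^{(t-1)})$ rather than $q\cdot(\cdots)$; this factor of $2$ is exactly what makes $q\geq 1/(2C+1)$ (rather than $1/(C+1)$) sufficient. Beyond that, everything is linearity of expectation and the tower property; no global coupling between edges is needed because the argument is a per-edge conditional calculation.
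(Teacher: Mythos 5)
Your proof is correct and follows essentially the same route as the paper: you handle the first dual constraint pointwise via submodularity, and for the second constraint you condition on the pre-arrival history (equivalently, the paper conditions on the event $A_e$ and then on realizations within $\overline{A_e}$) and use the same $2q$-vs-$(1-q)$ calculation, with the factor of $2$ coming from both endpoints' potentials being bumped. The only cosmetic slip is in your header for the second constraint, where you wrote $C\sum_{v\in e}\phi_v \geq \lambda_e$ instead of $\sum_{v\in e}\phi_v \geq \lambda_e$ (with $\phi_v := C\phi_v^{(|E|)}$ already absorbing the factor of $C$), but your proof body uses the correct normalization.
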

\begin{proof}
	The first set of constraints is satisfied for any realization of the randomness. Indeed, as in the proof of \Cref{obs:m-g-dual-covered}, for any realization of $S$, by submodularity of $f$, we have
	\begin{align*}
	\sum_{e \in T} \lambda_e = \sum_{e \in T \setminus S} f(e:S) \geq \sum_{e \in T \setminus S} f_{S}(e) \geq f_{S}(T \setminus S) = f(S \cup T) - f(S) = g^S(T) - \mu.
	\end{align*}
	Consequently, taking expectation over $S$, we have that indeed, $\E_S[\mu] + \sum_{e\in T} \E_S[\lambda_e] \geq \E_S[g^S(T)]$. We now tun to proving the second set of constraints, which will only hold in expectation.
	
	Fix an edge $e=e^{(t)}$, and define the event $A_e := [f(e: S) \leq C\cdot \sum_{v \in V} \phi^{(t-1)}_v]$. 
	Then, by definition of $A_e$ and monotonicity of $\phi^{(t)}_v$ in $t$, we have that
	\begin{equation}\label{pre-covered}
	\expect*{\sum_{v \in e} \phi_v  | A_e} \geq 
	\expect*{C\cdot \sum_{v\in e} \phi^{(t-1)}_v | A_e} \geq \expect*{f(e: S) | A_e} = \expect*{\lambda_e | A_e}. 
	\end{equation}	
	We now prove the same inequality holds when conditioning on the complement, $\overline{A_e}$.
	
	Fix a realization of the randomness $R$ for which $\overline{A_e}$ holds.
	Then, $e= e^{(t)}$ fails the test in \cref{line:nm-cover-check}, and so with probability $q$, we have $\sum_{v\in e}\phi_v^{(t)} = \sum_{v\in e} (\phi_v^{(t-1)} + w_e) = 2\cdot f(e:S) - \sum_{v \in e} \phi_v^{(t-1)}$, and with probability $(1-q)$, we have $\sum_{v\in e} \phi_v^{(t)} = \sum_{v \in e} \phi_v^{(t-1)}$. Hence, in this case, as $q\leq \frac{1}{2}$, we have
	\[
	\expect*{\sum_{v \in e} \phi_v^{(t)} | R} = 2q \cdot f(e: S) + (1-2q)\cdot \sum_{v \in e} \phi_v^{(t-1)}\geq 2q \cdot f(e:S). \]
	Now, since $\phi_v \geq C\cdot \phi^{(t)}$, and $q\geq 1/(2C+1)$ and since
	$\lambda_e$ is set to $f(e: S)$ 
	if $e$ is not added to $S$ (with probability $1-q$) and set to zero
	otherwise, the above implies that
	\begin{align*}
	\expect*{\sum_{v \in e} \phi_v | R}
	&\geq	2q  C\cdot f(e: S) \geq (1-q) \cdot f(e: S) 	= \expect*{\lambda_e | R}.
	\end{align*}
	By the law of total expectation, taken over all $R\subseteq \overline{A_e}$, we have
	\begin{equation}\label{not-pre-covered}
	\expect*{\sum_{v \in e} \phi_v  | \overline{A_e}} \geq  \expect*{\lambda_e | \overline{A_e}}. 
	\end{equation}
	
	Combining inequalities \eqref{pre-covered} and \eqref{not-pre-covered} with the law of total expectation gives the desired inequality,
	\[\expect*{\sum_{v \in e} \phi_v } \geq \expect*{\lambda_e }. \qedhere \]
\end{proof}

To bound the performance of this section's randomized variant of \Cref{alg:MSbM}, we can reuse corollaries \ref{cor:m_fandphi} and \ref{cor:m_fandmu}, since these follow from \Cref{obs:m-g-dual-covered}, which holds for every realization of the random choices of the algorithm. We now use these corollaries, LP duality and \Cref{obs:nm-g-dual-covered}, together with \Cref{technion-lemma}, to analyze this algorithm.

\begin{thm}
	\label{thm:nonmonotone-mainthm}
	\Cref{alg:MSbM} run with $q=1/(2C+1)$ and $C$ on a non-monotone \msm instance outputs a matching $M$ of value \[\left(\frac{4C^2-1}{2C-2}\right) \cdot f(M) \geq f(\opt).\]
	This is optimized by taking $C=1+\frac{\sqrt{3}}{2}$, resulting in an approximation ratio of $4+2\sqrt{3}\approx 7.464$.
	Moreover, the same algorithm is $2C+C/(C-1)$ approximate for \emph{monotone} \msm.
\end{thm}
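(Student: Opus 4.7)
My plan is to combine the expected dual feasibility of \Cref{obs:nm-g-dual-covered} with the deterministic lower bounds on $f(M)$ furnished by Corollaries \ref{cor:m_fandphi} and \ref{cor:m_fandmu}, and then invoke \Cref{technion-lemma} to convert the expected LP objective into a bound against $f(\opt)$. The two corollaries were proved in \Cref{sec:monotone}, but their derivations rely only on \Cref{lem:m_fgeqw} and the definitions of $\phi_v$ and $\mu$; none of those steps used monotonicity, so both bounds hold pointwise for every realization of the algorithm's randomness. Taking expectations over $S$ therefore yields
\[2C\cdot \E[f(M)] \;\geq\; \sum_v b_v\,\E[\phi_v], \qquad \tfrac{C}{C-1}\cdot \E[f(M)] \;\geq\; \E[\mu].\]

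Next, I would verify that $q=1/(2C+1)$ lies in the range $[1/(2C+1),\,1/2]$ required by \Cref{obs:nm-g-dual-covered} (immediate, as $C\geq 1$ gives $q\leq 1/3$), and apply weak LP duality to the expected dual LP whose objective function is $g(T):=\E_S[f(S\cup T)]$. This produces
\[\E[\mu] + \sum_v b_v\,\E[\phi_v] \;\geq\; g(\opt) \;=\; \E_S[f(S\cup \opt)].\]
To pass from the right-hand side to $f(\opt)$, I would apply \Cref{technion-lemma} to the auxiliary function $h(T) := f(T\cup \opt)$, which is non-negative and submodular (as a contraction of the non-negative submodular $f$), and observe that line~\ref{line:nm-sampling} guarantees the marginal property $\Pr[e\in S]\leq q$ for each $e\in E$. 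The lemma then gives $\E_S[h(S)] \geq (1-q)\,h(\emptyset) = (1-q)\,f(\opt)$, and chaining with the two displays above yields
\[\left(2C + \tfrac{C}{C-1}\right)\E[f(M)] \;\geq\; (1-q)\cdot f(\opt).\]

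Plugging in $q=1/(2C+1)$, so that $1-q = 2C/(2C+1)$, and simplifying algebraically produces the approximation ratio $(2C-1)(2C+1)/(2(C-1)) = (4C^2-1)/(2C-2)$. Setting the derivative of this expression to zero gives $4C^2-8C+1=0$, whose root larger than $1$ is $C=1+\sqrt{3}/2$; substituting back gives the claimed $4+2\sqrt{3}$. For the monotone half of the statement, I would simply replace the invocation of \Cref{technion-lemma} with the pointwise inequality $f(S\cup \opt)\geq f(\opt)$ (valid whenever $f$ is monotone), so the $(1-q)$ factor is absent and the ratio collapses to $2C+C/(C-1)$, matching \Cref{sec:monotone}.

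The main conceptual point to get right is the legitimacy of \Cref{technion-lemma} in this setting: I only need each \emph{marginal} probability $\Pr[e\in S]$ to be at most $q$, not joint independence across edges, which is convenient because whether an edge is pushed onto $S$ depends on the previous realized stack. Beyond that, the proof is essentially mechanical—lifting the deterministic monotone analysis into expectation and absorbing the $(1-q)$ loss from subsampling.
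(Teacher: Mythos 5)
Your proof is correct and follows essentially the same route as the paper: take pointwise validity of Corollaries~\ref{cor:m_fandphi} and~\ref{cor:m_fandmu}, pass to expectations, invoke \Cref{obs:nm-g-dual-covered} and weak duality against the expected LP, then apply \Cref{technion-lemma} to $h(T)=f(T\cup\opt)$ with the marginal bound $\Pr[e\in S]\leq q$. The only (inconsequential) difference is that you justify pointwise validity of the two corollaries by tracing back to \Cref{lem:m_fgeqw}—which is actually slightly more precise than the paper's attribution to \Cref{obs:m-g-dual-covered}.
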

\begin{proof}
		First, by \cref{cor:m_fandphi} and \cref{cor:m_fandmu}, for every realization of the algorithm, we have 
		\begin{equation*}
		\left(2C + \frac{C}{C-1}\right) \cdot f(M) \geq \sum_v \phi_v + \mu,
		\end{equation*}
		and thus this relationship holds in expectation as well.
		\begin{equation}\label{expected-relationship-to-dual-cost}
		\left(2C + \frac{C}{C-1}\right) \cdot \E[f(M)] \geq \expect*{\sum_v \phi_v + \mu}.
		\end{equation}
		
		On the other hand, by \cref{obs:nm-g-dual-covered}, the expected dual LP solution is feasible for $\expect{(D)}$. Therefore, by weak LP duality, we have
		\begin{equation}\label{expected-weak-duality}
		\expect*{\sum_v \phi_v + \mu} \geq \max_T \expect*{g(T)} = \max_T \expect*{ f(S \cup T)} \geq \expect*{f(S \cup \opt)}.
		\end{equation}
		The result for monotone \msm follows from equations \eqref{expected-relationship-to-dual-cost} and \eqref{expected-weak-duality}, together with monotonicity implying $\expect*{f(S \cup \opt)}\geq f(\opt)$.
		
		For non-monotone \msm, let us define the additional auxiliary function $h: 2^{E} \rightarrow \R^+$, with $h(T) := f(\opt \cup T)$. Now note that by our sampling procedure, $S$ is a random subset of $E$ containing every edge with probability at most $q$. Hence,  by \cref{technion-lemma}, we have
		\begin{equation}\label{MSM-technion-lemma}
			\expect*{f(S \cup \opt)} = \expect{h(S)} \geq (1-q)\cdot  h(\emptyset) = (1-q)\cdot f(\opt).
		\end{equation}
		
		Combining equations \eqref{expected-relationship-to-dual-cost}, \eqref{expected-weak-duality} and \eqref{MSM-technion-lemma}, together with our choice of $q=1/(2C+1)$, the desired inequality follows by rearranging terms.
\end{proof}

Having explored the use of \Cref{alg:MSbM} for submodular matchings, we now turn to analyzing this algorithm in the context of streaming \emph{linear} objectives.

\section{Linear Objectives}
\label{sec:linear-obj}

In this section we address the use of \Cref{alg:MSbM} to matching and $b$-matching with linear objectives, i.e.,  \mwm and  \mwbm, using a deterministic variant, with $q=1$.

For  \mwm, this algorithm with $C=1+\epsilon$ is essentially the algorithm of \cite{paz20182+}, and so it retrieves the state-of-the-art $(2+\epsilon)$-approximation for this problem, previously analyzed in \cite{paz20182+,ghaffari2019simplified}. We therefore focus on  \mwbm, for which a simple modification of \Cref{alg:MSbM} yields a $3+\epsilon$ approximation, improving upon the previous best $4+\epsilon$ approximation due to \cite{crouch2014improved}.

The modification to \Cref{alg:MSbM} which we consider is a natural one: instead of computing $M$ greedily, we simply compute an optimal  \mwbm $M$ in the subgraph induced by $S$, using a polytime linear-space offline algorithm (e.g., \cite{anstee1987polynomial,gabow2018data}).
Trivially, the $b$-matching $M$ has weight at least 
\begin{equation}\label{weight-of-stack}
w(M) \geq w(OPT\cap S).
\end{equation}
Moreover, this $b$-matching has weight no lower than the greedily-constructed $b$-matching of lines \ref{line:unwind-MSbM-begin}-\ref{line:unwind-MSbM-end}. We use LP duality to show that this modified algorithm with $C=1+\epsilon$ outputs a $b$-matching $M$ of weight at least $w(M) \geq \frac{1}{2+\epsilon}\cdot w(OPT\setminus S)$.
\begin{lem}\label{weight-in-stack}
	Let $M$ be a  \mwbm in the stack $S$ obtained by running \Cref{alg:MSbM} with $C=1+\epsilon/2$ and $q=1$ until \Cref{line:unwind-MSbM-begin}. Then, we have $w(M)\geq \frac{1}{2+\eps} \cdot w(OPT\setminus S)$.
\end{lem}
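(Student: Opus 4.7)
The plan is to combine the potential-based lower bound from Corollary~\ref{cor:m_fandphi} with a complementary-slackness-style argument for edges rejected from $S$. A useful simplification is that when $f$ is linear, $f(e:S)=w_e$ regardless of $S$, so the test on \cref{line:nm-cover-check} reduces to: reject $e$ at time $t$ iff $C\cdot \sum_{v\in e}\phi^{(t-1)}_v \geq w_e$.

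First, I would establish a fractional-cover-style inequality for $OPT\setminus S$. Any $e\in OPT\setminus S$ was rejected upon arrival at some time $t$, so $\sum_{v\in e}\phi^{(t-1)}_v \geq w_e/C$; since the potentials $\phi^{(t)}_v$ are monotonically nondecreasing in $t$, the same bound holds with $\phi^{(|E|)}_v$ in place of $\phi^{(t-1)}_v$. Summing over $e\in OPT\setminus S$ and using that each vertex $v$ belongs to at most $b_v$ edges of the $b$-matching $OPT$ yields
\[ \sum_{v\in V} b_v\cdot \phi^{(|E|)}_v \;\geq\; \sum_{e\in OPT\setminus S}\sum_{v\in e}\phi^{(|E|)}_v \;\geq\; \frac{w(OPT\setminus S)}{C}. \]

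Next, I would invoke Corollary~\ref{cor:m_fandphi} on the greedy $b$-matching $M'$ produced by unwinding the same stack $S$ with the same potentials (\cref{line:unwind-MSbM-begin}--\ref{line:unwind-MSbM-end} of \Cref{alg:MSbM}); this step is legitimate because the corollary's proof depends only on $S$ and $\{\phi^{(t)}_v\}$, which are unchanged by the modification. Since the corollary is stated in terms of $\phi_v := C\cdot \phi^{(|E|)}_v$, it gives $w(M') \geq \tfrac{1}{2C}\sum_v b_v\cdot \phi_v = \tfrac{1}{2}\sum_v b_v\cdot \phi^{(|E|)}_v$. As $M$ is a MWbM on the subgraph induced by $S$, we have $w(M)\geq w(M')$. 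Chaining these bounds and plugging in $C=1+\epsilon/2$ gives the desired
\[ w(M) \;\geq\; \tfrac{1}{2}\sum_{v\in V} b_v\cdot \phi^{(|E|)}_v \;\geq\; \frac{1}{2C}\cdot w(OPT\setminus S) \;=\; \frac{1}{2+\epsilon}\cdot w(OPT\setminus S). \]

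The main conceptual point is that the two ingredients needed—approximate dual-feasibility on edges outside $S$, and a primal matching bound inside $S$—come almost for free from the preceding analysis and the linearity of $f$; no fresh dual-fitting is required. The only mild subtlety is that the stated corollary applies to the greedy unwinding rather than the MWbM, but this only strengthens the inequality, so there is no real obstacle.
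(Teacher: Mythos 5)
Your proof is correct and reaches the conclusion via essentially the same ingredients as the paper: the rejection test on \cref{line:nm-cover-check}, together with monotonicity of $\phi^{(t)}_v$ in $t$, gives $\sum_{v\in e}\phi^{(|E|)}_v\geq w_e/C$ for every edge rejected from $S$; and \Cref{cor:m_fandphi} converts the total (scaled) potential into a lower bound on the greedy matching's weight, which $w(M)$ dominates. The one difference is stylistic. The paper packages the first step as exhibiting a dual-feasible solution $(0,\vec\phi,\vec w)$ for the LP $(D)$ with the linear function $w$ and then invokes weak LP duality; you instead unroll that duality by summing the per-edge inequality over $e\in OPT\setminus S$ and using the degree bound $|\{e\in OPT : v\in e\}|\leq b_v$ directly. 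These are the same computation, but your rendering is more self-contained (no appeal to the LP machinery), whereas the paper's phrasing is chosen to fit its uniform primal-dual narrative across sections. Both correctly track the $b_v$ factors, which is the one place a slip would be easy to make.
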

\begin{proof}
	Consider the matching $M'$ obtained by greedily unwinding the stack, as in \Cref{alg:MSbM}. 
	Clearly, $w(M)\geq w(M')$.
	So, by \Cref{cor:m_fandphi}, we have $w(M)\geq \frac{1}{2+\eps}\cdot \sum_{v\in V} \phi_v$, for $\phi_v = C\cdot \phi^{(|E|)}_v$. To relate $\sum_{v \in V} \phi_v$ to $w(OPT)$, we show that the dual solution $(0,\vec \phi, \vec{w})$ is dual feasible for the  LP $(D)$ with function $w$.
	
	The first set of constraints are trivially satisfied, due to linearity of $w$, as 
	$0+\sum_{e \in T} w_e = w(T).$
	
	For the second set of constraints, note that an edge $e=e^{(t)}$ is not added to the stack if and only if the check at \cref{line:nm-cover-check} fails. Therefore, since $\phi^{(t)}_v$ values increase monotonically with $t$, we have
	\[  \sum_{v\in e} \phi_v = C\cdot \sum_{v\in e} \phi^{(|E|)}_v  \geq C\cdot \sum_{v\in e} \phi^{(t-1)}_v \geq  f(e:S) = w_e.\]	
	
	Therefore, by weak LP duality, we have $w(M) \geq 0 + \frac{1}{2+\eps}\cdot \sum_{v\in V} \phi_v \geq \frac{1}{2+\eps}\cdot w(OPT)$.
\end{proof}

We are now ready to analyze the approximation ratio of this  \mwbm algorithm.
\begin{thm}\label{mwbm-approx}
	For any $\epsilon\geq 0$, \Cref{alg:MSbM} run with $C=1+\epsilon/2$ and $q=1$ until 
	\Cref{line:unwind-MSbM-begin}, followed by 
	a linear-space offline \mwbm algorithm run on $S$ to compute a solution $M$ is a $(3+\epsilon)$-approximate streaming  \mwbm algorithm.
\end{thm}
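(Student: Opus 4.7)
The plan is to combine the two lower bounds on $w(M)$ that are already in hand. The inequality \eqref{weight-of-stack} gives $w(M) \geq w(OPT \cap S)$ for free, since $OPT \cap S$ is itself a feasible $b$-matching contained in $S$ and $M$ is optimal on $S$. \Cref{weight-in-stack} complements this with $w(M) \geq \frac{1}{2+\epsilon}\cdot w(OPT\setminus S)$, covering exactly the part of $OPT$ that was discarded by the streaming phase.

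Given these two bounds, the proof reduces to a one-line convex combination. Since $w$ is linear,
\begin{equation*}
w(OPT) = w(OPT\cap S) + w(OPT\setminus S) \leq w(M) + (2+\epsilon)\cdot w(M) = (3+\epsilon)\cdot w(M),
\end{equation*}
giving the claimed approximation ratio. So the only real work is to check that the procedure is a legitimate streaming algorithm with the right space bound.

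For the streaming part, I would invoke \Cref{space-bound} (with parameter $C=1+\epsilon/2 = 1+\Omega(1)$) to conclude that the stack $S$ maintained in the streaming phase has size $\tilde{O}(M_{\max})$. The post-processing then runs an offline \mwbm algorithm (e.g.\ \cite{anstee1987polynomial,gabow2018data}) on the subgraph induced by $S$; since these algorithms use space linear in their input, the overall space remains $\tilde{O}(M_{\max})$, matching the streaming requirement stated in \Cref{sec:prelims}.

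I do not anticipate a real obstacle: \Cref{weight-in-stack} already packaged the nontrivial primal-dual argument (using the fact that $(0,\vec\phi,\vec w)$ is feasible for the linear dual, so every edge whose dual constraint is not tight is discarded into the ``covered'' part $OPT\setminus S$), and \eqref{weight-of-stack} is immediate from optimality of $M$ on $S$. The rest is just the observation that the worst case $OPT$ splits between the two bounds, and the worst split for us is $w(OPT\cap S):w(OPT\setminus S) = 1:(2+\epsilon)$, yielding the $(3+\epsilon)$ factor.
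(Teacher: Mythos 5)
Your proof is correct and follows essentially the same approach as the paper: both split $w(OPT)$ by membership in $S$, bound $w(OPT\cap S)$ by $w(M)$ trivially, bound $w(OPT\setminus S)$ via \Cref{weight-in-stack}, and invoke \Cref{space-bound} for the space bound. The only cosmetic difference is that you add the two inequalities directly instead of parametrizing by the fraction $\alpha$ of $OPT$'s weight inside $S$; the result is the same.
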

\begin{proof}
	To see that this is a streaming algorithm, we recall that $|S|=\tilde{O}(\sum_v b_v)$, by \Cref{space-bound}. Since we compute $M$ by running an offline linear-space algorithm on the subgraph induced by $S$, therefore using $O(|S|)$ space for this last step, the desired space bound follows. 

	To analyze the algorithm's approximation ratio, 	
	let $\alpha\in [0,1]$ be the weighted fraction of $OPT$ in $S$. That is, $w(OPT\cap S) = \alpha\cdot w(OPT)$, and by linearity, $w(OPT\setminus S) = (1-\alpha)\cdot w(OPT)$. Therefore, by \Cref{weight-of-stack} and \Cref{weight-in-stack} we have the following.
	\begin{align*} w(M) & \geq w(OPT\cap S) = \alpha \cdot w(OPT).
	\\
	w(M) & \geq \frac{1}{2+\eps}\cdot w(OPT\setminus S) = \frac{1-\alpha}{2+\epsilon} \cdot w(OPT).
	\end{align*}
	We thus find that the approximation ratio of this algorithm is at most $1/\min\{\alpha,\frac{1-\alpha}{2+\epsilon}\} \leq 3+\epsilon$. 
\end{proof}

\textbf{Remark.} We note that this approach---dual covering constraints for elements outside of the algorithm's memory $S$, and solving the problem optimally for $S$---is rather general. In particular, it applies to matching under any sub-additive (not just submodular) set function $f$,
for which $f(OPT)\leq f(OPT\setminus S)+f(OPT\cap S)$. Moreover, this approach extends beyond matchings, to any downward-closed constraints, for which $OPT\setminus S$ and $OPT\cap S$ are both feasible solutions.
So, it seems like this approach could find applications to streaming algorithms for other objectives and constraints, provided dual feasibility can be guaranteed using a dual solution of value bounded by that of the output solution.

\section{Lower Bound for \msm}\label{sec:MSM-lb}

Previous work shows that beating a $\frac{e}{e-1} \approx 1.582$ approximation for \msm in the streaming model is impossible for quasilinear space bounded algorithms \cite{kapralov2013better}, or polytime bounded algorithms \cite{feige1998threshold,dinur2014analytical,manurangsi2020tight}. In this section, we show that assuming the exponential time hypothesis (ETH), whereby $\NP \not \subseteq \TIME(2^{o(n)})$ \cite{impagliazzo2001complexity}, beating $1.914$ is impossible for any algorithm that is both space and time bounded. In particular, we will rely on seminal hardness of approximation results \setc from \cite{dinur2014analytical}. Recall:

\begin{Def}
	A \setc instance consists of a set system $(\scU, \scS)$, with $\scS\subseteq 2^{\scU}$. The goal is to pick the smallest number $k$ of  sets $S_1, \ldots, S_k \in \mathcal{S}$ such that $\left|\bigcup_{i \in [k]} S_i \right| = |\scU|$. We use $\scopt$ to denote the size of the minimal cover for the instance $(\scU, \scS)$, and $N = |\scU| + |\scS|$ to denote the description size.
\end{Def}

\begin{restatable}[Extension of Corollary 1.6 of \cite{dinur2014analytical}]{lem}{maxkcov}
	\label{lem:sc_hardness}
	Assuming ETH, every algorithm achieving an approximation ratio $(1 - \alpha) \ln |\scU|$ for \textsc{Set Cover} runs in time strictly greater than $2^{N^{\gamma \cdot \alpha}}$ for some $\gamma > 0$. Furthermore, this holds even under the assumptions that $|\scS| \leq K^{1/(\gamma \alpha)}$ and $|\scU| \leq |\scS|^{1/(\gamma \alpha)}$.
\end{restatable}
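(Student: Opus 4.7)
The plan is to derive this lemma from Corollary 1.6 of \cite{dinur2014analytical} by a careful bookkeeping of the parameters of their reduction (and, if needed, a light padding step) that ensures the hard \setc instances they produce satisfy the extra balance conditions $|\scS| \leq K^{1/(\gamma\alpha)}$ and $|\scU| \leq |\scS|^{1/(\gamma\alpha)}$.

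Concretely, I would first invoke \cite{dinur2014analytical}'s Corollary 1.6 in its native form to obtain some absolute constant $\gamma_0 > 0$ such that, under ETH, no algorithm achieves a $(1-\alpha)\ln|\scU|$ approximation for \setc in time $2^{N^{\gamma_0 \alpha}}$. I would then read off the explicit structure of the hard instances their reduction builds: it composes a Label Cover instance with a partition-system gadget, yielding a set system with $|\scU| \approx n_{LC} \cdot \binom{2\ell}{\ell}$, $|\scS| \approx n_{LC}\cdot R$, and optimum $K = \Theta(n_{LC})$, where $n_{LC}$, $R$, and $\ell$ are the Label Cover size, alphabet size, and gadget parameter, and $R$ and $2^\ell$ are polynomial in $n_{LC}$ with exponents of order $1/\alpha$.

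Next I would pick $\gamma \in (0, \gamma_0)$ small enough (depending only on the absolute constants hidden inside Dinur-Steurer) and verify the two balance inequalities directly. Taking logarithms, both $\log|\scS|/\log K$ and $\log|\scU|/\log|\scS|$ are $O(1)$ in this regime, since $\log|\scS|$, $\log|\scU|$, and $\log K$ all equal $\Theta(\log n_{LC})$ up to multiplicative factors of $O(1/\alpha)$. Therefore the inequalities reduce to $1/(\gamma\alpha) \geq O(1)$, which holds whenever $\gamma$ is a sufficiently small constant multiple of $\alpha$. Crucially, shrinking $\gamma$ only weakens the time lower bound from $2^{N^{\gamma_0 \alpha}}$ to $2^{N^{\gamma \alpha}}$, which is exactly what the lemma claims.

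If the direct verification fails in some corner parameter regime, my backup plan is a padding step on top of the Dinur-Steurer instance: one appends a controlled number of forced singleton universe elements, each covered by its own singleton set. This shifts $K$, $|\scS|$, and $|\scU|$ by the same additive quantity, preserving the multiplicative optimality gap up to a $(1-O(\alpha))$ factor that is absorbed into a rescaling of $\alpha$, while simultaneously inflating $K$ relative to $|\scS|$ (and $|\scS|$ relative to $|\scU|$) by any desired amount. The main obstacle I foresee is precisely the bookkeeping: the target approximation ratio $(1-\alpha)\ln|\scU|$ depends explicitly on $|\scU|$, so any padding shifts the target, and the delicate part of a clean write-up is showing that a single universal $\gamma$ (depending only on $\gamma_0$) suffices \emph{uniformly} in $\alpha$ after the rescaling.
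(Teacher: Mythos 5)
Your proposal takes a genuinely different (and substantially more laborious) route than the paper. The paper does \emph{not} open up the Dinur--Steurer reduction at all: it observes that the extra balance conditions can be assumed WLOG because any instance violating them is exactly solvable fast by brute force. Concretely, if $K < |\scS|^{\gamma\alpha}$ then enumerating all $K$-subsets of $\scS$ takes time $|\scS|^{K} < 2^{|\scS|^{\gamma\alpha}\log|\scS|} \leq 2^{N^{\gamma\alpha}}$, and if $|\scS| < |\scU|^{\gamma\alpha}$ then enumerating all sub-collections of $\scS$ takes time $2^{|\scS|} \leq 2^{N^{\gamma\alpha}}$; either brute force solves \setc exactly, hence certainly to a $(1-\alpha)\ln|\scU|$ factor. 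So a hypothetical fast algorithm on balanced instances combined with brute force on unbalanced ones would beat Corollary~1.6, a contradiction. This is a two-line, black-box argument that requires no knowledge of the label-cover/partition-system internals and no padding.

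Your white-box plan --- read off $|\scU|,|\scS|,K$ from the Dinur--Steurer construction, verify the exponents, and fall back on padding if needed --- could perhaps be made to work, but it carries real risk: Corollary~1.6 is stated for a fixed family of reductions whose parameters you would have to track precisely, and the padding fallback has exactly the problem you flag (the target ratio $(1-\alpha)\ln|\scU|$ moves as you pad, and showing a single universal $\gamma$ survives the rescaling uniformly in $\alpha$ is delicate). The brute-force dichotomy sidesteps all of this, and it also makes clear \emph{why} the conditions can be imposed: they are precisely the regime in which neither brute force is subexponential in $N^{\gamma\alpha}$, which is the only regime where hardness can possibly live. If you want to salvage your approach, the cleanest fix is to replace the structural inspection with this self-reducibility observation.
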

See \cref{sec:extraproofs} for a proof that the hardness holds even under the extra assumptions. To describe the instance, we will also use some extremal graph theory results from \cite{goel2012communication}.

\begin{Def}
	An $\alpha$-\textit{Ruzsa-Szemer\'{e}di graph} ($\alpha$-RS graph) is a bipartite graph $G = (P,Q,E)$ with $|P| = |Q| = n$ that is a union of induced matchings of size exactly $\alpha n$.
\end{Def}

\begin{thm}[Lemma 53 of \cite{goel2012communication}]
	\label{thm:RSgraphs}
	For any constant $\eps > 0$, there exists a family of  balanced bipartite $(1/2 - \eps)$-RS graphs with $n^{1+\Omega(1/\log \log n)}$ edges. 
\end{thm}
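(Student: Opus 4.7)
The theorem (cited as Lemma 53 of \cite{goel2012communication}) is a classical existential result whose proof follows the Ruzsa–Szemerédi paradigm of converting dense three-term-AP-free subsets of the integers into bipartite graphs that decompose into many large induced matchings. The plan is to invoke Behrend's theorem to obtain a dense 3-AP-free set and then build the required RS graph from it.

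First, I would invoke Behrend's theorem: for every large integer $m$ there exists $B \subseteq [m]$ with $|B| \geq m \cdot 2^{-O(\sqrt{\log m})} \geq m^{1-o(1)}$ that contains no non-trivial three-term arithmetic progression. This density is far in excess of the $m^{\Omega(1/\log \log m)}$ lower bound we actually need on $|B|$; indeed, even weaker AP-free constructions would suffice for the target edge count.

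Second, I would use $B$ to build a bipartite graph $G = (P, Q, E)$ with $|P|=|Q|=n=\Theta(m/\eps)$ that decomposes into one induced matching $M_b$ of size exactly $s := (1/2 - \eps) n$ per element $b \in B$. Each $M_b$ is supported on a $b$-dependent vertex set $V(M_b) \subseteq P \cup Q$ of size $2s$, chosen so that no edge of $M_{b'}$ (for $b' \neq b$) has both endpoints in $V(M_b)$. The key combinatorial ingredient is that any such cross-edge would force a relation $b' = b \pm d$ with both $b+d \in B$ and $b-d \in B$, i.e., a 3-AP $b-d,\, b,\, b+d$ in $B$, which is forbidden by the choice of $B$. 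Counting edges then yields $|E(G)| = |B| \cdot s \geq n^{2 - o(1)}$, which comfortably exceeds the target $n^{1 + \Omega(1/\log \log n)}$.

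The main obstacle is making step two actually deliver the induced-matching property simultaneously with the $(1/2-\eps) n$ size lower bound. A naïve construction in which all $M_b$ share a common index range (e.g., $M_b = \{(i, i+b) : i \in [n/2]\}$) does \emph{not} suffice: 3-AP-freeness only precludes $b+d \in B$ and $b-d \in B$ from holding simultaneously, so a single cross-edge survives whenever exactly one of them lies in $B$. Fixing this requires letting the vertex range of $M_b$ depend on $b$ in a carefully chosen way, so that the surviving cross-edge candidate is forced to land outside $V(M_b)$ while $|V(M_b)|$ remains $2s$. I would follow the construction in \cite{goel2012communication} to carry out this final arrangement and verify the induced property rigorously.
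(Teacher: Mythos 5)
The paper does not actually prove \Cref{thm:RSgraphs}: it is imported wholesale as Lemma 53 of \cite{goel2012communication}, so the only question is whether your sketch could plausibly stand in for that citation, and it cannot, because it conflates two very different parameter regimes for Ruzsa--Szemer\'edi graphs. The classical Behrend-based construction you invoke produces induced matchings that are large in \emph{absolute} size ($n/2^{O(\sqrt{\log n})}=n^{1-o(1)}$, with $\Theta(n)$ matchings and $n^{2-o(1)}$ edges in total), but each matching covers only a $2^{-\Theta(\sqrt{\log n})}=o(1)$ fraction of the vertices. The theorem needs the opposite trade-off: matchings of size exactly $(1/2-\eps)n$, i.e.\ covering a constant fraction of each side, and in that regime the best known edge count is precisely the $n^{1+\Omega(1/\log\log n)}$ of the cited lemma; whether $(1/2-\eps)$-RS graphs with $n^{1+\Omega(1)}$ --- let alone $n^{2-o(1)}$ --- edges exist is a well-known open problem, and is exactly the bottleneck in the streaming matching lower bounds that this construction feeds. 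Your step two, asserting $|E|=|B|\cdot s\geq n^{2-o(1)}$ with $s=(1/2-\eps)n$ and all matchings induced, therefore claims something far stronger than the statement being proved and far beyond what is known.

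You do correctly observe that the na\"ive common-index-range construction is not induced, but the repair you gesture at --- letting the vertex window of $M_b$ depend on $b$ and ``following the construction in \cite{goel2012communication}'' --- is precisely the missing idea rather than a routine verification. Once every matching must cover a $(1/2-\eps)$ fraction of both sides, 3-AP-freeness of a dense integer set no longer controls cross edges (as you note, a single element of $B$ in the bad position already creates one, and with $|B|=n^{1-o(1)}$ shifts such collisions are unavoidable for any simple windowing). The construction that actually achieves the lemma is not a windowed Behrend set over the integers but a substantially different, higher-dimensional construction, and it is exactly this construction that yields only $n^{\Theta(1/\log\log n)}$ matchings rather than the $n^{1-o(1)}$ your count assumes. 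As written, the proposal defers the crux to the very result it is supposed to reprove, and the quantitative claim wrapped around that deferral is not merely unproved but almost certainly unattainable; this is a genuine gap.
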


In what follows we will show a randomized reduction from \setc to streaming \msm. Specifically, we will show that if there is a polytime streaming algorithm for \msm achieving ratio better than $1.914$, then there is an algorithm for \setc violating \cref{lem:sc_hardness}. We proceed to describing our reduction.

\medskip

\textbf{The Reduction.} The input is a \setc instance $(\scU,\scS)$ for which the minimal cover contains $\scopt$ sets. Fix $n = 2^{k^{1/d}}$ for a degree $d$ to be determined later.

We create an underlying bipartite graph $G = (L, R, E)$ with $n \poly \log n$ vertices as follows. The left/right vertex sets are partitioned into $L = P \sqcup P'$, $R = Q \sqcup Q'$. We let $|P| = |Q| = 2n$, and we let $|P'| = |Q'| = n \cdot |\scS|/\scopt$.

The edge set $E$ arrives in two phases. In phase 1, all the edge of a set $E_1$ arrive, in phase 2 the edges of $E_2$ arrive. To define $E_1$, let $G_0$ be a fixed $(1/2 - \eps)$-RS graph with $m=\Omega \left(n^{1+1/\log \log n}\right)$ edges between $P$ and $Q$, and let this graph be the union of the matchings $M_1, \ldots, M_t$. Let $M_i'$ be a random subset of $M_i$ of size $(1/2 - \delta)n$ for a parameter $\eps < \delta < 1/4$ and let $E_1 = M_1' \cup \ldots \cup M_t'$. Choose one index $r \in [k]$ uniformly at random, and call $M'_r$ the \textit{distinguished matching}. Note that the index $r$ is unknown to the algorithm.

Define $E_2$ as follows. Let $P_1 \sqcup P_2 \sqcup \ldots \sqcup P_{n/\scopt}$ and $Q_1 \sqcup Q_2 \sqcup \ldots \sqcup Q_{n/\scopt}$ be partitions of the the vertices of $P$ and $Q$ respectively not matched by $M_r$ into subsets of size $\scopt$. Similarly, let $P'_1 \sqcup P'_2 \sqcup \ldots \sqcup P'_{n/\scopt}$ and $Q'_1 \sqcup Q'_2 \sqcup \ldots \sqcup Q'_{n/\scopt}$ be partitions of $P'$ and $Q'$ into subsets of size $|\scS|$. Let $F_i$ be the edges of the complete bipartite graph between $P_i$ and $Q'_i$, and let $G_i$ be the edges of the complete bipartite graph between $Q_i$ and $P'_i$. Finally, set $E_2 = \bigcup_i F_i \cup G_i$. See \Cref{fig:lb-picture}.

\begin{figure}[!h]
	\captionsetup[subfigure]{labelformat=empty}
	\centering
	\scalebox{0.75}{\input{LBtikz.tikz}}
	\caption{Illustration of lower bound instance.}
		\label{fig:lb-picture}	
	\subcaption{Red edges represent the edges of the distinguished matching $M_r$ in $E_1$, purple edges represent other edges in $E_1$, green edges represent edges of $E_2$. The red and purple edges together form the $(1/2-\epsilon)$-RS graph $G_0$, subsampled.}
\end{figure}

It remains to describe the submodular function $f$. First, define the set function $f_1(E) = |E \cap E_1|$. Next, we define the function $f_2$ which is parametrized by the \setc instance. We identify each set of vertices $P'_i$ and $Q'_i$ with a disjoint copy of $\scS$. For every edge $e\in E_2$, let $\phi(e)$ denote the set with which the endpoint of $e$ in $P' \cup Q'$ is associated. Now, for some parameter $\eta > 0$ to be determined later, we define
\[f_2(E) := \frac{\eta \scopt}{|\scU|} \cdot \left|\bigcup_{e \in E} \phi(e)\right|.\]
Finally, set $f := f_1 + f_2$. Note that $f$ is submodular since it the sum of a scaled coverage function and a linear function. On a technical note, since we assume that $|\scU|$ is polynomially bounded in $|\scS|$, we can represent the values of this function with $\poly \log n$ bits.

\medskip

\textbf{Some intuition.} Intuitively, we can imagine that all edges of $E_1$ are worth 1. We imagine that each edge of $E_2$ is a set in one of the copies of the instance $(\scS,\scU)$, and we let the value of all edges selected in the second phase be the coverage of all the associated sets (scaled by $\eta \scopt / |\scU|$). First we we will argue that the algorithm can output almost none of the edges of $M_r'$, since it after phase 1 it has no information as to which matching is the distinguished one. Hence the majority of the edges it uses from phase 1 must be from $E_1 \backslash M_r'$. However, each edge the algorithm chooses from $E_1 \backslash M_r'$ precludes it from taking between 1 and 2 edges of $E_2$. Furthermore, maximizing the value of edges of $E_2$ amounts to solving a hard $\mkc$ instance. The coverage is scaled by the parameter $\eta$, and as a result, the algorithm is incentivized to take some $k := c\scopt$ edges from each of the bipartite graphs $(P_i, Q_i')$ (and $(Q_i, P_i')$) of $E_2$ and the remaining edges from $E_1 \backslash M_r'$. Meanwhile, $\opt$ can take the distinguished matching edges $M_r'$ as well as the edges of $E_2$ maximizing the coverage instance. Our bound will follow by setting $\eta$ to maximize the ratio between these.

\medskip

To start, we show that no streaming algorithm can ``remember'' more than a $o(1)$ fraction of the edges of the distinguished matching $M_r$. Since phase 1 of our construction is identical to the one in Appendix H of \cite{goel2012communication} which shows a $3/2$ semi-streaming lower bound for max \textit{weight} matching., we can reuse their result here.
\begin{restatable}[Appendix H.1 of \cite{goel2012communication}]{lem}{gkk}
	\label{lem:gkk-forgetdistinguished}
	For any constants $\gamma, \delta  \in (0, 1/4)$, let $\mathcal{A}$ be an algorithm that at the end of phase 1, with constant probability, outputs at least $\gamma n$ of the the edges of $M_r'$. Then $\mathcal{A}$ uses $\Omega(E_1) \geq n^{1+\Omega(1/\log \log n)}$ bits of space.
\end{restatable}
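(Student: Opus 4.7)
The plan is to prove this via a one-way communication complexity reduction, exploiting the induced-matching structure of Ruzsa-Szemer\'edi graphs. Set up the following game: Alice holds the random subsamples $(M_1', \ldots, M_t')$ where each $M_i'$ is a uniformly random subset of $M_i$ of size $(1/2-\delta)n$; Bob holds the distinguished index $r$, which is independent of Alice's input. Alice runs $\mathcal{A}$ on an adversarially chosen ordering of the edges of $E_1 = \bigcup_i M_i'$, obtains the memory state $\sigma$ at the end of phase~1 (of size at most $s$ bits), and sends $\sigma$ to Bob. Bob can then finish the simulation of $\mathcal{A}$ on phase~2, which he can construct from scratch using only $r$, and thereby produce the output of $\mathcal{A}$. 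By the hypothesis, with constant probability this output includes at least $\gamma n$ edges of $M_r'$.

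The core of the argument is then information-theoretic. Because Bob's success holds with constant probability over the joint choice of $r$ (uniform in $[t]$) and the internal randomness, a Markov averaging argument gives that for a constant fraction of the indices $i \in [t]$, conditioned on $r = i$, Bob extracts $\geq \gamma n$ specific edges of $M_i'$ from $(\sigma, i)$ with constant probability. Fix such an index $i$. Since $M_i'$ is uniform over subsets of $M_i$ of size $(1/2-\delta)n$, we have $H(M_i') = \Omega(n)$; and, crucially, because $G_0$ is $(1/2-\epsilon)$-RS, the matchings $M_i$ are induced, so no edge of $M_i$ can be inferred from edges of other matchings $M_j$ with $j\neq i$. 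Hence learning $\gamma n$ edges of $M_i'$ from $\sigma$ reduces the conditional entropy by $\Omega(n)$, giving $I(\sigma; M_i') = \Omega(n)$.

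The final step is to combine these per-matching information bounds. Since the subsamples $M_1', \ldots, M_t'$ are mutually independent, subadditivity of mutual information yields
\[
s \;\geq\; H(\sigma) \;\geq\; I\!\left(\sigma;\,(M_1',\dots,M_t')\right) \;=\; \sum_{i=1}^{t} I(\sigma; M_i') \;=\; \Omega(n\cdot t) \;=\; \Omega(|E_1|),
\]
and by \Cref{thm:RSgraphs}, $|E_1| \geq n^{1+\Omega(1/\log\log n)}$.

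The main obstacle is the middle paragraph: carefully establishing that recovering $\gamma n$ edges of $M_i'$ really does contribute $\Omega(n)$ bits of mutual information with $\sigma$, independently of the other matchings. This requires using the induced property of the $M_i$ to decouple the randomness of $M_i'$ from the rest of Alice's input, so that the subsample $M_i'$ looks uniform even given $\{M_j'\}_{j\neq i}$, and then invoking a Fano-style bound: if Bob outputs a set $X$ of $\gamma n$ edges that lies in $M_i'$ with constant probability, then $H(M_i' \mid \sigma, i) \leq H(M_i') - \Omega(n)$, because the number of subsets of $M_i$ of size $(1/2-\delta)n$ containing a fixed $\gamma n$-edge set is a factor $2^{\Omega(n)}$ smaller than the unconstrained count (for $\gamma$ a small enough constant below $1/2 - \delta$). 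Plugging this into the chain-rule argument above completes the proof.
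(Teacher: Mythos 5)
Your proof is correct and captures the same essential mechanism as the paper's, but it is packaged differently. The paper proves the contrapositive by a direct combinatorial counting argument: it considers the equivalence classes $\Gamma(G)$ of first-phase graphs that map to the same internal state $\phi(G)$, observes that the algorithm can only safely output edges in $\Psi(G)=\bigcap_{H\in\Gamma(G)}E(H)$, and shows that if $\Psi(G)$ captured $\gamma n$ edges of $M'_i$ for a constant fraction of indices $i$, then $|\Gamma(G)|$ would be exponentially smaller than the typical class size, a contradiction. Your version is the information-theoretic dual: a one-way communication game where Alice sends the state $\sigma$ and Bob tries to reconstruct $\gamma n$ edges of $M'_r$, with a Fano-style bound giving $I(\sigma;M'_i)=\Omega(n)$ for a constant fraction of $i$, and superadditivity of mutual information for independent sources closing the argument. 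Both routes are valid and of essentially the same difficulty; the counting argument avoids conditional-entropy bookkeeping, while yours is perhaps more modular and makes the independence structure explicit.

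Two small corrections. First, the equality $I\bigl(\sigma;(M'_1,\dots,M'_t)\bigr)=\sum_i I(\sigma;M'_i)$ should be $\geq$: by the chain rule and independence, $I(\sigma;M'_i\mid M'_{<i})=H(M'_i)-H(M'_i\mid\sigma,M'_{<i})\geq H(M'_i)-H(M'_i\mid\sigma)=I(\sigma;M'_i)$, so the sum is a lower bound, which is the direction you need. Second, your appeal to the induced-matching property of the RS graph to justify the independence of $M'_1,\dots,M'_t$ is a red herring: the subsamples $M'_i\subseteq M_i$ are sampled independently by construction and the matchings $M_i$ are edge-disjoint, so mutual independence holds with no reference to the induced property (and indeed the paper's counting proof does not invoke it either). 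The induced property is needed elsewhere in \Cref{thm:monotone-LB}, to argue that the streaming algorithm cannot build a large matching across different $M_i$, but not for this lemma.
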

We reproduce a version of the proof in \cref{sec:extraproofs} for completeness. With this, we are finally ready to prove the main theorem of the section.

\begin{thm}
	\label{thm:monotone-LB}
	Assuming ETH, there exists a distribution over \msm instances such that any deterministic algorithm achieving an $1.914$ approximation must use either $n^{1+\Omega(1/\log \log n)}$ space or $\Omega(2^{(\log n)^{10}})$ time.
\end{thm}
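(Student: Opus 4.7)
My plan is to prove Theorem~\ref{thm:monotone-LB} by combining the two sources of hardness (the space lower bound of Lemma~\ref{lem:gkk-forgetdistinguished} and the ETH hardness of Set Cover in Lemma~\ref{lem:sc_hardness}) into a single upper bound on $f(M)$, then comparing this against a carefully-constructed lower bound on $f(\opt)$ and optimizing the tunable parameter $\eta$.

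First, I would lower-bound $f(\opt)$ by exhibiting a feasible matching consisting of: (i) all $(1/2-\delta)n$ edges of the distinguished matching $M_r'$, and (ii) in each of the $n/\scopt$ instances per side of $E_2$, the $\scopt$ edges corresponding to an optimal set cover. Since $M_r'$ uses $M_r$-matched vertices while the $E_2$-edges use the disjoint partition-unmatched vertices, this is a valid matching. The $M_r'$ edges contribute $(1/2-\delta)n$ to $f_1$, and each fully-covered instance contributes $\frac{\eta\scopt}{|\scU|}\cdot|\scU|=\eta\scopt$ to $f_2$, so in total $f(\opt) \geq (1/2-\delta)n + 2\eta n$.

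Second, I would upper-bound $f(M)$ for any $\mathcal{A}$ satisfying the hypothesis. By Lemma~\ref{lem:gkk-forgetdistinguished}, applied with arbitrarily small constant $\gamma>0$, $|M \cap M_r'| \leq \gamma n$ under the stated space budget. I would parametrize $\mathcal{A}$'s remaining strategy by the fraction $c\in[0,1]$ of partition-unmatched vertices that $\mathcal{A}$ devotes to $E_2$-edges (thus roughly $c\scopt$ edges per instance), with the remaining partition-unmatched vertices used for $E_1 \setminus M_r'$ edges. The matching constraint together with the fact that each $E_1\setminus M_r'$ edge on partition-unmatched vertices blocks up to $2$ $E_2$-edges then yields $f_1(M) \leq (2-c)n+o(n)$. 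For $f_2(M)$, I would apply the standard reduction from max-$k$-coverage approximation to Set Cover approximation to obtain, via Lemma~\ref{lem:sc_hardness} with $\alpha\to 0$, that any algorithm running in time $O(2^{(\log n)^{10}})$ covers at most a $(1-e^{-c})$-fraction of $|\scU|$ per instance using $c\scopt$ sets; summing over all $2(n/\scopt)$ instances gives $f_2(M) \leq 2\eta(1-e^{-c})n+o(n)$. Combining the two, $f(M) \leq \max_{c\in[0,1]}\bigl[(2-c)+2\eta(1-e^{-c})\bigr]\,n + o(n)$.

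Third, I would optimize $\eta$ to maximize the ratio $f(\opt)/f(M)$. The inner maximum over $c$ has interior critical point $c^*=\ln(2\eta)$ (when feasible) or else a corner in $\{0,1\}$; plugging in $c^*$ and solving the outer $\max$ over $\eta$ yields the target ratio of $1.914$. The main obstacle is that the combined hardness must be \emph{super-convex} with respect to either hardness alone (since each individually gives only $e/(e-1)\approx 1.582$): this relies on the submodular $f_2$ amplifying $\mathcal{A}$'s loss, in that sacrificing $M_r'$ forces $\mathcal{A}$ to use partition-unmatched vertices for $E_1 \setminus M_r'$ edges, which in turn sacrifices $E_2$ coverage and hence $f_2$ value. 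A secondary technical obstacle is verifying that $\scopt = (\log n)^d$ for suitable constant $d$ simultaneously makes the Set Cover description size $N=|\scS|+|\scU|=\poly\log n$ large enough that $2^{N^{\gamma\alpha}} \gg 2^{(\log n)^{10}}$ (so Lemma~\ref{lem:sc_hardness} applies under the runtime budget), while staying compatible with the RS-graph parameter $n^{1+\Omega(1/\log\log n)}$ from Lemma~\ref{lem:gkk-forgetdistinguished}.
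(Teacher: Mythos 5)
Your proposal tracks the paper's proof very closely at the structural level: you use the same two sources of hardness (Lemma~\ref{lem:gkk-forgetdistinguished} for the space budget, Lemma~\ref{lem:sc_hardness} for the time budget), the same parametrization of the algorithm's strategy by a coverage budget $c$, the same inner/outer optimization over $c$ and $\eta$, and the same amplification from a constant-advantage max-coverage primitive to a full $(1-\alpha)\ln|\scU|$ Set Cover algorithm via repeated residual coverage. The paper packages the logic in the contrapositive direction---assume the MSM ratio is below $1.914$, derive that the implicit max-coverage step has constant advantage $\gamma$, and then build the Set Cover subroutine---but your forward direction (ETH bounds coverage, hence bounds the ratio) is equivalent provided the iterative reduction you invoke is spelled out, which it needs to be since Lemma~\ref{lem:sc_hardness} is stated for Set Cover, not directly for budgeted coverage.

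The quantitative details are where your sketch goes astray, and this matters because it determines whether you recover $1.914$. First, your bound $f_1(M)\leq (2-c)n+o(n)$ does not follow from your own blocking argument. You parametrize $c$ so that $c\scS$ edges per block, giving $2cn$ total $E_2$-edges each consuming one partition-unmatched vertex, leaving $2(1-c)n$ free. Since (by induced-ness of the $M_j$) every $E_1\setminus M_r'$ edge occupies at least one such vertex, the correct bound is $f_1(M)\leq 2(1-c)n+o(n)=(2-2c)n+o(n)$; this is exactly what the paper uses. Second, your OPT lower bound of $(1/2-\delta+2\eta)n$ (literal from $|M_r'|=(1/2-\delta)n$) does not match the paper's $(1-\delta+2\eta)n$; the paper's final $1.914$ constant requires the latter. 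With your pair of bounds the inner maximizer is $c^*=\ln(2\eta)$ rather than $\ln\eta$, the ratio $\frac{1/2+2\eta}{1+2\eta-\ln(2\eta)}$ optimizes to roughly $1.25$, and the target $1.914$ is not reached. So while the \emph{approach} is correct and matches the paper, the arithmetic as written does not prove the stated constant; you should redo the blocking count to get $2(1-c)$ and reconcile the size of $M_r'$ against the OPT bound actually used in the paper's optimization.
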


\begin{proof}
	Our proof is a randomized polytime reduction from \setc to streaming \msm. We will show that if there is a randomized streaming algorithm achieving ratio better than $1.914$ for \msm, then there is an algorithm for \setc achieving approximation ratio $(1-\alpha) \ln (|\scU|)$ for constant $\alpha > 0$ that only requires polynomial extra overheard. We then argue that \cref{lem:sc_hardness} implies that the streaming \msm algorithm must run in super polynomial time, assuming ETH.	
	
	Fix a deterministic algorithm $A$ for streaming \msm. Now, given an instance of \setc $(\scU, \scS)$ with minimum cover size $\scopt$ and description size $N = |\scU| + |\scS|$, create a random instance of streaming \msm according to the reduction described in this section. 
	For each bipartite graph $(P_i, Q'_i)$ (or $(Q_i, P'_i)$), if the algorithm $A$ chooses $c \scopt$ edges from this graph, it can select at most $2(1-c) \scopt$ edges from $E_1$ that are adjacent to $P_i$ (or $Q_i$). Suppose WLOG that it can always achieve the $2(1-c)k$ bound. In this case we can also assume WLOG the algorithm chooses the same number $c\cdot \scopt$ of edges from each such graph, and furthermore that it selects the same sets in the set system $(\scS, \scU)$. Otherwise it can locally improve its solution by copying the solution for the best index $i$. 
	
	Suppose this solution achieves coverage of $(1 - e^{-c} + \gamma) \cdot |\scU|$. Since the matchings $M_1, \ldots, M_t$ are induced, and by \cref{lem:gkk-forgetdistinguished} w.h.p. the algorithm can only output $o(n)$ edges of $M_r'$ after phase 1, the algorithm can only select $o(n)$ edges \textit{not} incident to some $P_i$ or $Q_i$. Thus the total value achieved by the algorithm is at most:
	\begin{align*}
	&\left[ 2(1-e^{-c} + \gamma) \cdot \eta \cdot \scopt + 2(1-c) \cdot \scopt \right] \cdot \frac{n}{\scopt} + o(n) \\
	&\leq  2(1-e^{-c} + \gamma) \cdot \eta n + 2(1-c) \cdot n + o(n)\\
	&\leq (2\eta  -2 \ln (\eta)  + 2 \gamma)\cdot n + o(n),
	\end{align*}
	where the last step follows since the expression is maximized at $c = \ln \eta$. On the other hand, the optimal solution can select the distinguished matching edge $M'_r$, as well as $\scopt$ edges adjacent to each set $P_i$ corresponding to the minimum \setc solution. Thus the total value of $\opt$ is at least:
	\[(1-\delta + 2 \eta) \cdot n.\]
	Thus the ratio between the maximum value achievable by the algorithm and the optimal value is bounded by:
	\[\frac{1 + 2 \eta -\delta}{2\eta  - 2\ln (\eta) + 2 \gamma +  o(1)}. \]
	Finally, we set $\eta = 2.09$ and let $\delta \rightarrow 0$. If this ratio converges to a value strictly below $1.914$, then we can conclude that $\gamma = \Omega(1)$ and $\gamma > 0$.

	We have shown that $A$ can be used to pick $c\scopt$ sets with coverage $(1 - e^{-c} + \gamma) \cdot |\scU|$. To finish the proof, we now show that this can be used to recover an approximation algorithm $B$ for \setc. For convenience, set constant $\gamma'$ such that $(1 - e^{- c - \gamma'}) := (1 - e^{- c} + \gamma)$. Then, guess $\scopt$, and repeat algorithm $A$ recursively $\lceil  \ln |\scU| /(c+\gamma')\rceil \leq \ln |\scU| /(c+\gamma') + 1$ times, each time on the residual uncovered set system. Each call to $A$ covers $(1 - e^{-c - \gamma'})$ of the elements remaining, so after this number of iterations, the fraction of uncovered elements is less than $1/|\scU|$, i.e. all elements are covered. Since each iteration costs $c \cdot \scopt$, the total number of sets picked here is at most
	\[c\left(\frac{\ln |\scU|}{c+\gamma'} + 1 \right)  \cdot \scopt= \left( \frac{c}{c+\gamma'} + \frac{c}{\ln |\scU|}\right)\cdot \ln |\scU| \cdot \scopt.\]
	Defining the constant $\alpha = \gamma'/(c+\gamma') $, this is a $(1-\alpha - o(1)) \ln |\scU|$ approximation to the \setc instance. Furthermore, if $A$ runs in time $T$  then $B$ runs in time $\poly(N) \cdot T$ (where $N = |\scU| + |\scS|$). 
	
	To conclude, if $T < 2^{N^{\Delta}}$ for a constant $\Delta < \gamma \cdot \alpha$, then $B$ runs faster than $2^{N^{\gamma \alpha}}$, contradicting \cref{lem:sc_hardness}. Thus the algorithm $A$ must run in time at least $2^{N^{\gamma \cdot \alpha}} \geq 2^{|\scS|^{\gamma \cdot \alpha}} \geq 2^{(\log n)^{d \cdot \gamma \cdot \alpha}}$. Setting $d = 10/(\gamma\cdot \alpha)$, this running time is $2^{(\log n)^{10}}$, which is superpolynomial in $n$. 
\end{proof}

 \cref{MSM-lb} therefore follows from \Cref{thm:monotone-LB} and Yao's minimax principle \cite{yao1977lemma}.

\section{Conclusion and Open Questions}
In this work we present a number of improved bounds for streaming submodular (b-)matching. 
Our work suggests a number of natural follow-up directions.

The first is to tighten (and ideally close) the gap between upper and lower bounds for the problems studied in this work. More ideas may be needed in order to resolve the optimal approximation ratio for these problems.

On the techniques side, it would be interesting to see whether the primal-dual method can be used for further applications in streaming submodular algorithms, or to further unify the analysis of prior work. More broadly, it would be interesting to apply our extension of the randomized primal dual method of \cite{devanur2013randomized} to other problems. 

Finally, we point to a possible connection between our streaming matching algorithms, and \emph{dynamic} matching algorithms. \Cref{alg:MSbM} when applied to monotone $b$-matching problems stores in its stack a bounded-degree subgraph with $\tilde{O}(M_{\max}) = \tilde{O}(|OPT|)$ edges. Such size-optimal constant-approximate matching sparsifiers have proven particularly useful in the dynamic matching literature. See \cite{wajc2020rounding} for a discussion, and \cite{arar2018dynamic,gupta2013fully,peleg2016dynamic,bhattacharya2015deterministic,bernstein2015fully,bernstein2016faster} for more applications of such dynamic matching sparsifiers. Is there a fast dynamic algorithm which maintains a sparisifier similar to that of \Cref{alg:MSbM}? This would result in improved dynamic weighted matching algorithms, and possibly would result in the first dynamic \emph{submodular} matching algorithm. This would add to the recent interest in efficient dynamic algorithms for submodular optimization \cite{lattanzi2020fully,gupta2020fully}.

\appendix
\section*{Appendix}
\section{Explaining Prior Work using LP Duality}\label{sec:CK+FKK}

In this section we further demonstrate the generality of our (randomized) primal-dual analysis, showing that it provides fairly simple alternative analyses of the algorithms of \cite{chakrabarti2015submodular,feldman2018less}, giving one unified analysis for these algorithms and ours. To keep things simple, we focus only on \msm, though \cite{chakrabarti2015submodular,feldman2018less} show that their algorithms also work more broadly for $k$-matchoid and $k$-set system constraints.

In \cite{chakrabarti2015submodular}, Chakrabarti and Kale presented a reduction from \msm to \mwm, by showing how to use a subclass of \mwm algorithms to solve \msm. We now introduce the algorithm of \cite{chakrabarti2015submodular} instantiated with the \mwm algorithm of \citet{mcgregor2005finding}. The algorithm is a natural and elegant one: when an edge $e$ arrives, we consider its marginal gain with respect to the current matching. If this marginal gain is higher than some slack parameter $C$ times the marginal gains of the currently blocking edges $e'\in N(e)\cap M$, we preempt those edges and add $e$ to the matching. 

In anticipation of our analysis of the algorithm of \cite{feldman2018less} in \Cref{sec:FKK}, we generalize the algorithm's description and allow the algorithm to preempt with some probability $q \in [0,1]$. The full pseudo-code is given in \Cref{alg:MSbM-CK-FKK}.

\begin{tcolorbox}[blanker,float=htbp,
	grow to left by=-0.225\textwidth, 	grow to right by=-0.225\textwidth]
	\begin{algorithm}[H]
		\caption{The MSM Algorithm of \cite{chakrabarti2015submodular} and \cite{feldman2018less}}
		\label{alg:MSbM-CK-FKK}
		\begin{algorithmic}[1]	
			\vspace{0.1cm}
			\Statex \underline{\textbf{Initialization}}
			\State $M\gets \emptyset$
			\vspace{0.1cm}
			\Statex \underline{\textbf{Loop}}
			\For{$t \in \{1, \ldots, |E| \}$}
			\State $e \leftarrow e^{(t)}$ 
			\State $B(e) \leftarrow \sum_{e'\in N(e)\cap M}f(e':M)$
			\If{$f(e:M) \leq C \cdot B(e)$} \label{line:nm-cover-check-CK}
			\State \Continue \Comment{skip edge $e$}
			\Else
			\With{probability $q$} \label{line:nm-sampling-CK}
			\State $M\gets (M\setminus N(e))\cup \{e\}$
			\EndWith
			\EndIf
			\EndFor
			\State \textbf{return} $M$
		\end{algorithmic}
		\vspace{-0.4cm}
		\hfill
	\end{algorithm}
\end{tcolorbox}

This algorithm only ever adds an edge $e$ to $M$ upon its arrival. After adding an edge to $M$, this edge can be \emph{preempted}, i.e., removed from $M$, after which it is never added back to $M$. Thus we note that this algorithm is not only a streaming algorithm, but also a so-called \emph{preemptive} algorithm: it only stores a single matching in memory and therefore trivially requires $\tilde{O}(n)$ space. 

For convenience, we let $M^{(t)}$ denote the matching $M$ at time $t$, and let $S:=\bigcup_t M^{(t)}$ denote the set of edges ever added to $M$. For an edge $e$ let $B^{(t)}(e) := \sum_{e' \in N(e) \cap M^{(t)}} f(e' : M^{(t)})$. We will also denote by $P:=S\setminus M$ the set of \emph{preempted} edges.

\subsection{The Framework of \cite{chakrabarti2015submodular}, Applied to the Algorithm of \cite{mcgregor2005finding}}

\label{sec:CK+mcG}

In this section we analyze the deterministic algorithm obtained by applying the framework of \citet{chakrabarti2015submodular} to the \mwm algorithm of \citet{mcgregor2005finding}, corresponding to \Cref{alg:MSbM-CK-FKK} run with $q=1$.

To argue about the approximation ratio, we will again fit a dual solution to this algorithm. Define the auxiliary submodular functions ${g^S: 2^{E} \rightarrow \R^+}$ to be $g^S(T) := f(S \cup T)$. 
Similarly to our analysis of \Cref{alg:MSbM}, we define the following dual.
\begin{align*}
	\mu & := f(S) = g^S(\emptyset), \\
	\phi_v & := C\cdot \max\{f(e:M^{(t)}) \mid  t \in [|E|], \ v \in e\in M^{(t)}\}, \\
	\lambda_e & := \begin{cases}
	f(e:S) & e\not\in S \\
	0 & e\in S.
	\end{cases}
\end{align*}
Note the difference here in the setting of $\phi_v$ from the algorithms of \cref{sec:monotone,sec:non-monotone}.
We start by showing that this is a dual feasible solution to the LP \hyperlink{LP_q}{(D)} for the function $g^S$. 

\begin{lem}\label{obs:m-g-dual-covered-CK}
	The dual solution $(\vec \lambda, \vec \phi, \mu)$ is feasible for the LP $(D)$ with function $g^S$.
\end{lem}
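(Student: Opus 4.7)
The plan is to verify the two sets of dual constraints separately. The first set, $\mu + \sum_{e \in T} \lambda_e \geq g^S(T)$ for all $T \subseteq E$, should follow by an argument identical to the one used in \Cref{obs:m-g-dual-covered}. Using $\lambda_e = f(e:S)$ for $e\notin S$ and $\lambda_e = 0$ otherwise, together with $f(e:S)\geq f_S(e)$ (an immediate consequence of submodularity) and the telescoping identity $\sum_{e\in T\setminus S} f_S(e)\geq f_S(T\setminus S)$, I would deduce that $\sum_{e \in T} \lambda_e \geq f(S\cup T)-f(S) = g^S(T)-\mu$, as required. Nothing in this step is specific to the preemptive algorithm.

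The interesting step is the second set of constraints, $\sum_{v\in e}\phi_v\geq \lambda_e$. If $e\in S$, then $\lambda_e=0$ and feasibility is immediate. So fix $e=e^{(t)}\notin S$; since $q=1$, $e\notin S$ forces the test at \Cref{line:nm-cover-check-CK} to have succeeded upon arrival, giving $f(e:M^{(t-1)}) \leq C\cdot B^{(t-1)}(e)= C\cdot \sum_{e'\in N(e)\cap M^{(t-1)}} f(e':M^{(t-1)})$. Because $M^{(t-1)}\subseteq S$ and $M^{(t-1)}\subseteq\{e'<e\}$, monotonicity of $f(e:\cdot)$ in the set argument yields $f(e:M^{(t-1)})\geq f(e:S)=\lambda_e$, so it suffices to bound $\sum_{v\in e}\phi_v$ from below by $C\cdot \sum_{e'\in N(e)\cap M^{(t-1)}} f(e':M^{(t-1)})$.

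The key observation, which is where the definition of $\phi_v$ with a max over time is exactly what is needed, is this: for each $e'\in N(e)\cap M^{(t-1)}$, let $v$ be the shared endpoint; since $e'\in M^{(t-1)}$ and $v\in e'$, the definition of $\phi_v$ gives $\phi_v \geq C\cdot f(e':M^{(t-1)})$. Since $M^{(t-1)}$ is a matching, each vertex $v\in e$ is incident to at most one such $e'$, so the $e'$'s contribute to distinct $\phi_v$'s and the bounds can be summed without double-counting to give $\sum_{v\in e}\phi_v \geq C\cdot \sum_{e'\in N(e)\cap M^{(t-1)}} f(e':M^{(t-1)}) \geq f(e:M^{(t-1)}) \geq \lambda_e$.

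I expect the main (minor) subtlety to be ensuring that only the endpoint-sharing with $M^{(t-1)}$ is used (as opposed to $S$ or $M$ at some other time), since otherwise one could in principle have several edges $e'\in N(e)\cap S$ sharing the same endpoint, causing double-counting across $\phi_v$. Fixing the single snapshot $M^{(t-1)}$, which is a matching and is the one the preemption test compared against, resolves this cleanly. Everything else is a direct consequence of submodularity and the definition of the dual variables.
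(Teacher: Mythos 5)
Your argument is correct and follows exactly the chain the paper uses: $\lambda_e = f(e:S) \leq f(e:M^{(t-1)}) \leq C\cdot B^{(t-1)}(e) \leq \sum_{v\in e}\phi_v$, with the first inequality from submodularity/monotonicity of $f(e:\cdot)$ in the set (since $M^{(t-1)}\subseteq S\cap\{e'<e\}$) and the last from the definition of $\phi_v$ applied once per shared endpoint, using that $M^{(t-1)}$ is a matching. Your extra remark about why the snapshot must be $M^{(t-1)}$ rather than $S$ (to avoid multiple edges at a vertex) is a valid and helpful elaboration of a step the paper leaves implicit, but it is the same proof.
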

\begin{proof}
	To see that the first set of constraints are satisfied, note that by submodularity of $f$,
	\begin{align*}
	\sum_{e \in T} \lambda_e = \sum_{e \in T \setminus S} f(e:S) \geq \sum_{e \in T \setminus S} f_{S}(e) \geq f_{S}(T \setminus S) = f(S \cup T) - f(S) = g^S(T) - \mu.
	\end{align*}
	
	For the second set of constraints, we note that if $e=e^{(t)}\not\in S$, then by the test in \Cref{line:nm-cover-check-CK} and submodularity, we have that
	\[
	\lambda_e = f(e:S) \leq f(e:M^{(t-1)}) \leq C\cdot B^{(t-1)}(e) \leq \sum_{v \in e} \phi_v. \qedhere 
	\]
\end{proof}

It remains to relate the value of the solution $M$ to the cost of this dual.
For this, we introduce the following useful notation. 
For any edge $e\in S$, we define the weight of $e$ to be
\[
w_e := \begin{cases}
	f(e:M) & e\in M \\
	f(e:M^{(t)}) & e\in M^{(t-1)}\setminus M^{(t)} \subseteq P.
\end{cases}
\]
In words, the weight of an edge in the matching is $f(e : M)$, and the weight of a preempted edge is frozen to its last value before the edge was preempted.
One simple consequence of the definition of the weights $w_e$ is the following relationship to $f(M)$.
\begin{obs}\label{obs:weight-and-f}
	$f(M) = \sum_{e \in M} f(e: M) = \sum_{e \in M} w_e = w(M)$.
\end{obs}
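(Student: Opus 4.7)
The plan is to observe that this identity follows immediately from unpacking two definitions together with one previously-stated property of the marginal-gain notation $f(e:S)$.

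First, the leftmost equality $f(M) = \sum_{e \in M} f(e:M)$ is exactly the telescoping identity for the $f(e:S)$ notation quoted in the preliminaries, namely $\sum_{e \in S} f(e:S) = f(S)$ (attributed to \cite[Lemma 1]{chekuri2015streaming}), applied with $S := M$. No submodularity argument beyond what is packaged in that lemma is required.

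Second, the middle equality $\sum_{e \in M} f(e:M) = \sum_{e \in M} w_e$ is purely definitional: for every edge $e \in M$, the weight was defined by the first branch of the case split, $w_e := f(e:M)$, so the sums agree term-by-term. The second branch of the $w_e$ definition applies only to edges in $P = S \setminus M$, which do not appear in this sum. Finally, $\sum_{e \in M} w_e = w(M)$ is just the shorthand $w(\cdot)$ for summing edge weights over a set.

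So the ``proof'' is a one-line chain of equalities. I do not expect any obstacle; the only thing worth being careful about is to explicitly invoke $\sum_{e \in S} f(e:S) = f(S)$ rather than re-deriving it, since that identity (which does use submodularity and an ordering argument) has already been recorded in \Cref{sec:prelims}. The observation is stated precisely so that later lemmas can replace reasoning about $f(M)$ with reasoning about the more convenient quantity $w(M)$, and about preempted edges via their frozen weights.
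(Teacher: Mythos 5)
Your proof is correct, and it is exactly the intended one: the paper leaves this observation unproved precisely because, as you say, it is a one-line chain combining the telescoping identity $\sum_{e\in S} f(e:S)=f(S)$ from the preliminaries with the first branch of the definition of $w_e$. Nothing more is needed.
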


We now show that the preempted edges' weight is bounded in terms of the weight of $M$.
\begin{lem}\label{weight-preempted}
	The weights of $P$ and $M$ satisfy $w(P)\leq w(M)\cdot \frac{1}{C-1}$.
\end{lem}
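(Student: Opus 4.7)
The plan is a charging argument: assign each preempted edge to the arriving edge that caused its preemption, and show that each arriving edge ``pays for'' at least $C$ times the weight it displaces. Concretely, for each $e = e^{(t)} \in S$, let $C(e) := N(e) \cap M^{(t-1)}$ denote the set of edges preempted when $e$ is added to $M$. Each preempted edge belongs to exactly one such $C(e)$ (its unique preempter), so $\{C(e)\}_{e \in S}$ partitions $P$, and in particular $w(P) = \sum_{e \in S} w(C(e))$.

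The key claim I would establish is that $w_e \geq C \cdot w(C(e))$ for every $e \in S$. Granting this, summing over $e \in S$ and using \Cref{obs:weight-and-f} together with the decomposition $S = M \cup P$ gives
\[
w(M) + w(P) = \sum_{e \in S} w_e \geq C \sum_{e \in S} w(C(e)) = C \cdot w(P),
\]
which rearranges to the desired $w(P) \leq w(M)/(C-1)$.

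To prove the key claim, fix $e = e^{(t)} \in S$. First, since the weight of each preempted edge $e' \in C(e)$ is frozen to its value just before preemption, namely $f(e':M^{(t-1)})$, we have $w(C(e)) = \sum_{e' \in C(e)} f(e':M^{(t-1)}) = B^{(t-1)}(e)$ by definition of $B^{(t-1)}$. Second, because $e$ was added to $M$ at time $t$ (the skip condition in \cref{line:nm-cover-check-CK} failed), the preemption check yields $f(e:M^{(t-1)}) > C \cdot B^{(t-1)}(e)$. It remains to show $w_e \geq f(e:M^{(t-1)})$, which I would deduce from submodularity. If $e$ survives in $M$, then $w_e = f(e:M)$; the inclusion $M \cap \{e'' < e\} \subseteq M^{(t-1)}$---which holds because any $M$-edge arriving before $e$ was added on arrival and never preempted, hence must lie in $M^{(t-1)}$---combined with submodularity gives $f(e:M) \geq f(e:M^{(t-1)})$. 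If instead $e$ is itself later preempted at some time $t' > t$, then $w_e$ equals its frozen value $f(e:M^{(t'-1)})$, and the analogous inclusion $M^{(t'-1)} \cap \{e'' < e\} \subseteq M^{(t-1)}$ yields the same bound by submodularity.

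The only subtlety is tracking the two subset inclusions needed to invoke submodularity; both follow immediately from the observation that preemption is irreversible, so any edge in $M$ at a later time that arrived before $e$ must have been present continuously since its arrival---and hence in $M^{(t-1)}$.
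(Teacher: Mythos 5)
Your proof is correct, and it takes a genuinely cleaner route than the paper's. Both arguments rest on the same key claim, namely $w_e \geq C \cdot w(C(e))$ for each $e \in S$ (the paper writes $P^1(e)$ for your $C(e)$), proved from the preemption test plus the submodularity observation that $w_e \geq f(e:M^{(t-1)})$. But whereas the paper then introduces the recursive ``trail of the dead'' decomposition $P^i(e) = P^1(P^{i-1}(e))$, proves $w_e \geq C^i \cdot w(P^i(e))$ by induction, and sums the geometric series $\sum_{i \geq 1} C^{-i} = 1/(C-1)$ over all $e \in M$, you instead observe that $\{C(e)\}_{e\in S}$ directly partitions $P$, so summing the key claim over all of $S$ gives $w(M) + w(P) = \sum_{e\in S} w_e \geq C\cdot w(P)$ in one line. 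This sidesteps both the recursion and the geometric series and makes the origin of the $1/(C-1)$ factor more transparent. One small remark: you take the frozen weight of an edge preempted at time $t$ to be $f(e:M^{(t-1)})$, while the paper's displayed definition writes $f(e:M^{(t)})$; the former matches the prose (``last value before the edge was preempted'') and is the reading under which the key inequality $w(C(e)) = B^{(t-1)}(e)$ holds, so this is the intended definition — the paper's formula appears to contain an off-by-one typo that both your proof and theirs implicitly correct.
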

\begin{proof}
	For any edge $e$, we define the following set of preempted edges which are preempted in favor of$e$ or in favor of an edge (recursively) preempted due to $e$.\footnote{In \cite{feigenbaum2005graph,mcgregor2005finding}, these sets are referred to by the somewhat morbid term ``trail of the dead''.}
	First, for an edge $e=e^{(t)}$, we let the set $P^1(e):=N(e)\cap M^{(t-1)}$ denote the edges preempted when $e$ is added to $M$. For any $i>1$, we let $P^i(e) := P^1(P^{i-1}(e))$ be the set of edges preempted by an edge with a trail of preemptions of length $i-1$ from $e$.
	By \Cref{line:nm-cover-check}, we have that any edge $e\in S$ has weight at least
	$w_e \geq C\cdot P^1(e)$. 
	By induction, this implies that
	$w_e \geq C\cdot w(P^{(i-1)}(e)) \geq C^i \cdot w(P^i(e))$.
	Now, since each preempted edge $e'\in P$ belongs to precisely one set $P^i(e)$ for some $i\geq 1$ and $e\in M$, we find that indeed,
	\[w(P) = \sum_{e \in M} \sum_{i\geq 1} w(P^i(e)) \leq \sum_{e \in M} \sum_{i\geq 1} \frac{1}{C^i} \cdot w_e = w(M)\cdot \left(\frac{1}{C}+\frac{1}{C^2} + \frac{1}{C^3} +\dots \right) = w(M) \cdot \frac{1}{C-1}.\qedhere\]
\end{proof}

Using \Cref{weight-preempted}, we can now relate the value of the primal solution $M$ to the cost of the our dual solution, $\mu+\sum_v \phi_v$.
We start by bounding $\mu$ in terms of $f(M)$.

\begin{lem}
	\label{cor:m_fandmu-CK}
	The matching $M$ output by \Cref{alg:MSbM-CK-FKK} satisfies $f(M) \geq  \left(1 - \frac{1}{C}\right) \cdot \mu$.
\end{lem}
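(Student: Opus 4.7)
Since $\mu = f(S)$, it suffices to prove the equivalent inequality $f(S) \le \frac{C}{C-1}\cdot f(M)$. Using the telescoping identity $f(S)=\sum_{e\in S}f(e:S)$ together with $f(M)=w(M)$ from \Cref{obs:weight-and-f}, the argument reduces to two ingredients: a per-edge bound $f(e:S) \le w_e$ for every $e \in S$, and the already-established \Cref{weight-preempted} stating that $w(P)\le w(M)/(C-1)$. Summing the per-edge bound gives $f(S)\le w(M)+w(P)$, and substituting the preempted-weight bound yields $f(S)\le f(M)\cdot\frac{C}{C-1}$, which rearranges to the claim.

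\textbf{Per-edge bound.} Let $t_e$ be the arrival time of $e\in S$. I would first observe that $M^{(t_e-1)}\subseteq S\cap\{e'<e\}$, since every edge present in the matching just before $e$ arrived was itself added to $M$ at some prior time (hence lies in $S$) and arrived before $e$. Submodularity of $f$---via monotonicity of $f(e:\cdot)$ in the conditioning set---then gives $f(e:S)\le f(e:M^{(t_e-1)})$. It remains to show $f(e:M^{(t_e-1)})\le w_e$, which I would split into two cases. If $e\in M$, then any $e'\in M$ with $e'<e$ is never preempted and so $e'\in M^{(t_e-1)}$; hence $M\cap\{e'<e\}\subseteq M^{(t_e-1)}$, and submodularity yields $w_e=f(e:M)\ge f(e:M^{(t_e-1)})$. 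If $e\in P$ is preempted at time $t>t_e$, then any edge added to $M$ strictly after time $t_e-1$ arrives at or after $e$, and only $e$ itself arrives at time $t_e$; so the only such added edge that could be in $\{e'<e\}$ is $e$, which does not lie in that set. Thus $M^{(t)}\cap\{e'<e\}\subseteq M^{(t_e-1)}$, and submodularity yields $w_e=f(e:M^{(t)})\ge f(e:M^{(t_e-1)})$.

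\textbf{Main obstacle.} The conceptual work is light; the main thing to get right is the bookkeeping around the order-sensitive notation $f(e:\cdot)$, in particular verifying that the four conditioning sets $S\cap\{e'<e\}$, $M^{(t_e-1)}$, $M\cap\{e'<e\}$, and $M^{(t)}\cap\{e'<e\}$ are nested in the direction needed for submodularity to produce the right inequality. Once those nesting relations are laid out carefully, the lemma is just a summation followed by an application of \Cref{weight-preempted}.
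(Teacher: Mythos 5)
Your proof is correct. At a high level it follows the same skeleton as the paper: reduce the lemma to $f(S)\leq w(M)+w(P)$, then close with \Cref{weight-preempted} and \Cref{obs:weight-and-f}. The difference is in the per-edge bookkeeping, and yours is the more careful one. You prove the uniform bound $f(e:S)\leq w_e$ by sandwiching $M^{(t_e-1)}$ between the relevant conditioning sets: $M^{(t_e-1)}\subseteq S\cap\{e'<e\}$ on one side, and $M\cap\{e'<e\}\subseteq M^{(t_e-1)}$ (for $e\in M$) respectively $M^{(t)}\cap\{e'<e\}\subseteq M^{(t_e-1)}$ (for $e\in P$) on the other. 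The paper's printed chain instead passes through $f(M)+\sum_{e\in P}f(e:M)=w(M)+w(P)$, which tacitly uses $f(e:M)=w_e$ for preempted $e$. That equality does not hold: for $e\in P$ preempted at time $t$, every edge of the final matching $M$ that arrived before $e$ is never preempted and hence present at time $t$, so $M\cap\{e'<e\}\subseteq M^{(t)}\cap\{e'<e\}$ and therefore $f(e:M)\geq f(e:M^{(t)})=w_e$ --- the opposite of what the chain needs. Your detour through the "just before $e$ arrived" matching $M^{(t_e-1)}$ is exactly what fixes this, since $M^{(t_e-1)}$ lies below $S\cap\{e'<e\}$ but above both $M\cap\{e'<e\}$ and $M^{(t)}\cap\{e'<e\}$, so all the submodularity inequalities point the right way. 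In short: same target inequality and same closing lemmas, but your per-edge argument actually closes the gap that the paper's shorthand glosses over.
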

\begin{proof}
	By submodularity of $f$, \Cref{weight-preempted}, and \Cref{obs:weight-and-f} we obtain the desired inequality,
	\[
	\mu = f(S) = f(M\cup P) 
	\leq f(M) + \sum_{e \in P} f(e: M) = w(M) + w(P) \leq \left(1+\frac{1}{C-1}\right)\cdot f(M). \qedhere
	\]
\end{proof}

We next bound $\sum_v \phi_v$ in terms of $f(M)$.

\begin{lem}
	\label{cor:m_fandphi-CK}
	The matching $M$ output by \Cref{alg:MSbM-CK-FKK} run with $C> 1$ satisfies $$f(M) \geq \frac{1}{2C+C/(C-1)} \cdot \sum_{v \in V} \phi_v.$$
\end{lem}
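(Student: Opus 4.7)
The plan is to define a convenient per-edge quantity $\psi(e) := f(e:M^{(t^*(e))})$, where $t^*(e)$ denotes the latest time $t$ with $e \in M^{(t)}$. Thus $\psi(e) = f(e:M) = w_e$ for $e \in M$, and $\psi(e) = f(e:M^{(t_{out}(e)-1)})$ for $e \in P$, where $t_{out}(e)$ is when $e$ is preempted. I would first argue by submodularity that for any fixed $e \in S$, the set $M^{(t)} \cap \{e' < e\}$ is non-increasing in $t$ over the lifetime of $e$ in the matching---once $e$ is in $M$, no older edges can be added, only preempted. Hence $f(e:M^{(t)})$ is non-decreasing in $t$ over this range, which gives two facts: (i) $\phi_v/C = \max\{\psi(e) : v \in e \in S\}$, and (ii) $\psi(e) \leq w_e$ for all $e \in S$, with equality on $M$.

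The crux of the argument is to show that each preempted edge achieves the max in $\phi_v/C$ for at most one of its two endpoints. Fix $e = (u,v) \in P$ preempted at time $t_{out}(e)$ by the arriving edge $e'$, and let $u$ be the endpoint shared by $e$ and $e'$. Because $e'$ was added, the skip test at \cref{line:nm-cover-check-CK} failed: $f(e':M^{(t_{out}(e)-1)}) > C\cdot B^{(t_{out}(e)-1)}(e') \geq C\cdot f(e:M^{(t_{out}(e)-1)}) = C\cdot \psi(e)$. Applying the same submodularity monotonicity to $e'$ then gives $\psi(e') \geq f(e':M^{(t_{out}(e)-1)}) > C\cdot \psi(e) > \psi(e)$ (using $C > 1$). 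Since $e' \ni u$, this forces $\phi_u/C \geq \psi(e') > \psi(e)$, so $e$ cannot be the argmax for $u$; only the non-shared endpoint $v$ can possibly claim it.

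Putting these together, let $e^*_v$ be an argmax in the definition of $\phi_v$ for each $v$. Counting contributions edge-by-edge, each $e \in M$ is claimed by at most its two endpoints (contributing $\leq 2\psi(e) = 2w_e$), while each $e \in P$ is claimed by at most one endpoint (contributing $\leq \psi(e) \leq w_e$), so
\[\sum_v \phi_v / C \,=\, \sum_v \psi(e^*_v) \,\leq\, 2\sum_{e \in M}w_e + \sum_{e \in P} w_e \,=\, 2w(M) + w(P).\]
Using $w(M) = f(M)$ from \cref{obs:weight-and-f} and $w(P) \leq w(M)/(C-1)$ from \cref{weight-preempted} then yields $\sum_v \phi_v \leq C\bigl(2 + 1/(C-1)\bigr)f(M) = \bigl(2C + C/(C-1)\bigr)f(M)$, which rearranges to the stated bound.

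The main obstacle is the single-endpoint claim in the middle paragraph: it requires carefully applying submodularity monotonicity at the distinct moments $t_{out}(e)-1$ (when $e$ is still matched and $e'$ has not yet arrived) and $t^*(e')$ (the last time $e'$ is in $M$), and distinguishing both from the moment used in the definition of $w_e$ (which is $t_{out}(e)$, \emph{after} preemption). Once those time-bookkeeping technicalities are in place, the rest of the argument is a clean edge-counting step combined with the geometric decay of preempted weights already encapsulated in \cref{weight-preempted}.
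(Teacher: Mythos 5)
Your proof is correct and follows essentially the same approach as the paper: show that each preempted edge can achieve the max defining $\phi_v$ for at most one endpoint (its non-shared endpoint, because the preempting edge is strictly larger on the shared endpoint), while matched edges can do so for at most two; then sum, and plug in $w(P)\le w(M)/(C-1)$ and $w(M)=f(M)$. The only difference is presentational: you introduce $\psi(e)=f(e:M^{(t^*(e))})$ to handle carefully the distinction between the marginal just before preemption and just after, and note $\psi(e)\le w_e$, whereas the paper identifies these (the paper's definition of $w_e$ for preempted edges appears to have an off-by-one in the superscript, writing $M^{(t)}$ but then using $M^{(t-1)}$ in the proof, which is what your $\psi$ captures). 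Your bookkeeping resolves that cleanly; otherwise the two proofs are the same argument.
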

\begin{proof}	
	Fix a vertex $v$ and edge $e\in M^{(t-1)}\setminus M^{(t)} \subseteq P$ preempted at time $t$ in favor of edge $e'=e^{(t)}\ni v$. For this edge $e$, by monotonicty in $t'$ of $f(e': M^{(t')})$, the test of \Cref{line:nm-cover-check-CK}, non-negativity of $f(e'': M^{(t-1)})$ for any edge $e''\in M^{(t-1)}$ and $C>1$ we have that
	\[w_{e'} \geq f(e': M^{(t-1)}) \geq C\cdot B^{(t-1)}(e')
	\geq C\cdot f(e:M^{(t-1)}) = C\cdot w_e > w_e.\]
	Consequently, again relying on monotonicity in $t'$ of $f(e': M^{(t')})$, we have that for any edge $e\in P$, there is at most one vertex $v\in e$ such that $w_e=f(e: M^{(t-1)})$ is equal to $\phi_v = \max\{f(e': M^{(t')}) \mid v\in e'\in M^{(t')}\}$ . Edges $e\in M$, on the other hand, clearly have $w_e = f(e:M)$ equal to $\phi_v = \max\{f(e': M^{(t-1)}) \mid v\in e'\in M^{(t-1)}\setminus M^{(t)}\}$ for at most two vertices $v\in e$. Combined with \Cref{weight-preempted} and \Cref{obs:weight-and-f}, this yields the desired inequality,
	\[
	\sum_{v\in V} \phi_v \leq C\cdot \left(2\sum_{e\in M} w_e + \sum_{e \in P} w_e\right) \leq  \left(2C+\frac{C}{C-1}\right)\cdot w(M) = \left(2C+\frac{C}{C-1}\right)\cdot f(M).\qedhere
	\]
\end{proof}

Equipped with the above lemmas, we can now analyze \Cref{alg:MSbM-CK-FKK}'s approximation ratio.
\begin{thm}
	\Cref{alg:MSbM-CK-FKK} run with $C>1$ and $q=1$ on a monotone \msm instance outputs a matching $M$ of value \[\left(2C+\frac{2C}{C-1}\right) \cdot f(M) \geq f(\opt).\]
	This is optimized by taking $C=2$, resulting in an approximation ratio of $8$.
\end{thm}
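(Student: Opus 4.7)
The plan is to combine the dual feasibility result (\Cref{obs:m-g-dual-covered-CK}) with weak LP duality, and then bound the two terms in the dual objective separately using \Cref{cor:m_fandmu-CK} and \Cref{cor:m_fandphi-CK}. This mirrors exactly the structure of the corresponding theorem in \Cref{sec:monotone}, just with the new dual-feasibility and primal-bound lemmas now in hand for the preemption-based algorithm.

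Concretely, I would proceed as follows. First, invoke \Cref{obs:m-g-dual-covered-CK} to note that $(\mu,\vec\phi,\vec\lambda)$ is feasible for the dual LP $(D)$ with objective function $g^S(T)=f(S\cup T)$. The dual objective has value $\mu+\sum_{v}\phi_v$ (recall $b_v=1$ for \msm), so by weak LP duality, for every $T\subseteq E$,
\[
\mu+\sum_{v\in V}\phi_v \;\geq\; g^S(T)=f(S\cup T).
\]
Choosing $T=\opt$ and using monotonicity of $f$, we get $\mu+\sum_{v}\phi_v\geq f(S\cup\opt)\geq f(\opt)$.

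Second, apply the two primal-to-dual bounds already proven. \Cref{cor:m_fandmu-CK} rearranges to $\mu\leq \tfrac{C}{C-1}\cdot f(M)$, and \Cref{cor:m_fandphi-CK} rearranges to $\sum_v \phi_v \leq \bigl(2C+\tfrac{C}{C-1}\bigr)\cdot f(M)$. Summing these two inequalities gives
\[
\mu+\sum_{v\in V}\phi_v \;\leq\; \left(2C+\frac{2C}{C-1}\right)\cdot f(M),
\]
and combining with the previous display yields the claimed bound $f(\opt)\leq \bigl(2C+\tfrac{2C}{C-1}\bigr)f(M)$.

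Finally, to optimize the ratio, differentiate $\alpha(C):=2C+\tfrac{2C}{C-1}$. Writing $\tfrac{2C}{C-1}=2+\tfrac{2}{C-1}$, we get $\alpha(C)=2C+2+\tfrac{2}{C-1}$, whose derivative is $2-\tfrac{2}{(C-1)^2}$. Setting this to zero gives $(C-1)^2=1$, and the feasible root $C>1$ is $C=2$, yielding $\alpha(2)=4+4=8$. There is no real obstacle here — every ingredient has been established in the preceding lemmas; the only work is combining them via weak duality and running the one-variable optimization.
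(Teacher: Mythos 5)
Your proof is correct and follows the same route as the paper: dual feasibility from \Cref{obs:m-g-dual-covered-CK}, weak duality against $g^S$ together with monotonicity, then bounding $\mu$ and $\sum_v\phi_v$ separately via \Cref{cor:m_fandmu-CK} and \Cref{cor:m_fandphi-CK} and optimizing over $C$. In fact your writeup is slightly cleaner than the paper's: the paper's final display carries a spurious extra factor of $C$ in front of $\sum_v\phi_v$ (the $C$ is already absorbed into the definition of $\phi_v$), and it writes ``$f(M)=\mu$ by definition,'' which is not the definition ($\mu = f(S)$); the intended step is exactly the appeal to \Cref{cor:m_fandmu-CK} that you make.
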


\begin{proof}
	By weak LP duality and \Cref{obs:m-g-dual-covered-CK}, together with monotonicity of $f$, we have that
	\[C \cdot\sum_v  \phi_v + \mu \geq \max_T g^S(T) = \max_T f(S \cup T) \geq f(S \cup \opt) \geq f(\opt).\]
	Combining \cref{cor:m_fandphi-CK} and $f(M)=\mu$ by definition and rearranging, we get the desired inequality,
	\[\left(2C + \frac{2C}{C-1}\right) \cdot  f(M) \geq C \cdot\sum_v  \phi_v + \mu \geq f(\opt). \qedhere\]
\end{proof}

As with our algorithm of \Cref{sec:monotone}, our analysis of \Cref{alg:MSbM-CK-FKK} relied on monotonicity, crucially using $f(S\cup OPT)\geq f(OPT)$. 
To extend this algorithm to non-monotone \msm, we again appeal to \Cref{technion-lemma}, setting $q=\frac{1}{2C+1}$. This is precisely the algorithm of \citet{feldman2018less}, which we analyze in the following section.

\subsection{The Algorithm of \citet{feldman2018less}}\label{sec:FKK}
In \cite{feldman2018less}, Feldman et al.~showed how to generalize the algorithm of \cite{chakrabarti2015submodular} to non-monotone function maximization. Here we show an analysis of their algorithm in our primal dual framework. Our proof is an extension of the one in \cref{sec:CK+mcG} in a way that is analogous to how \cref{sec:non-monotone} extends \cref{sec:monotone}.	

We reuse the same dual from \cref{sec:CK+mcG}, only this time, both our dual object and the function $g^S$ are random variables. The proof of expected dual feasibility for this variant of the algorithm of \Cref{sec:CK+mcG} is analogous to that of \Cref{obs:nm-g-dual-covered}, so we only outline the differences here.

We start with expected feasibility.

\begin{lem}\label{obs:FKK-dual-covered}
	The dual solution $(\expect{\vec \lambda}, \expect{\vec \phi}, \expect{\mu})$ is feasible for the expected LP $\expect{(D)}$.
\end{lem}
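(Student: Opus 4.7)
The plan is to mirror the proof of \Cref{obs:nm-g-dual-covered}, applied to the dual solution from \Cref{sec:CK+mcG}. Since the first set of constraints $\mu + \sum_{e\in T}\lambda_e \geq g^S(T)$ was shown in \Cref{obs:m-g-dual-covered-CK} to hold for every realization of the randomness (using only submodularity of $f$), the same argument applied to the random $\mu,\vec\lambda$ and $g^S$ together with linearity of expectation immediately gives the first set in expectation. The work therefore lies in proving the second set of constraints, $\E[\sum_{v \in e}\phi_v]\geq \E[\lambda_e]$ for each edge $e=e^{(t)}$.

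Fix a realization $R$ of the algorithm's randomness prior to time $t$; this determines $M^{(t-1)}$ and $B^{(t-1)}(e)$. I would split into two cases according to the test at \Cref{line:nm-cover-check-CK}.  In Case A, $f(e:M^{(t-1)})\leq C\cdot B^{(t-1)}(e)$, so $e$ is skipped deterministically; then $\lambda_e=f(e:S)$ and (just as in the proof of \Cref{obs:m-g-dual-covered-CK}) submodularity, together with $M^{(t-1)}\subseteq S\cap\{e'<e\}$, gives $\lambda_e\leq f(e:M^{(t-1)})\leq C\cdot B^{(t-1)}(e)\leq\sum_{v\in e}\phi_v$ pointwise, since each edge $e'\in N(e)\cap M^{(t-1)}$ contributes $f(e':M^{(t-1)})$ to the $\phi_v$ of its endpoint shared with $e$.

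In Case B, $f(e:M^{(t-1)})> C\cdot B^{(t-1)}(e)$, and the $q$-coin at \Cref{line:nm-sampling-CK} determines the outcome. With probability $q$, $e$ enters $M^{(t)}$, so $\lambda_e=0$ and, since $\phi_v$ takes a max over time, $\phi_v\geq C\cdot f(e:M^{(t)})\geq C\cdot f(e:M^{(t-1)})$ for each of the two endpoints $v\in e$ (here I use submodularity of $f$ to say $f(e:M^{(t)})=f_{M^{(t-1)}\setminus N(e)}(e)\geq f_{M^{(t-1)}}(e)=f(e:M^{(t-1)})$). Thus in this sub-case $\sum_{v\in e}\phi_v\geq 2C\cdot f(e:M^{(t-1)})$. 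With probability $1-q$, $e$ is not added, in which case $\lambda_e=f(e:S)\leq f(e:M^{(t-1)})$, again because $M^{(t-1)}\subseteq S\cap\{e'<e\}$. Taking expectations over the coin at time $t$ and over randomness after time $t$ (which can only further increase $\phi_v$), we get
\[
\E\Bigl[\sum_{v\in e}\phi_v\,\Big|\, R\Bigr]\geq 2qC\cdot f(e:M^{(t-1)})\quad\text{and}\quad \E[\lambda_e\mid R]\leq (1-q)\cdot f(e:M^{(t-1)}).
\]
For $q=1/(2C+1)$ we have $2qC=1-q$, so the desired inequality holds conditional on $R$ (in both cases). The proof is finished by the law of total expectation over $R$.

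The main obstacle, as in \Cref{obs:nm-g-dual-covered}, is handling Case B: the dual constraint is violated in the event that $e$ is \emph{not} pushed, and feasibility has to be recovered by paying with the large boost to $\sum_{v\in e}\phi_v$ in the event that $e$ \emph{is} pushed. The critical input from the algorithm is that $\phi_v$ is defined as a running maximum of $C\cdot f(e':M^{(t')})$, so the charge $C\cdot f(e:M^{(t)})$ at the moment $e$ joins the matching survives any later preemption of $e$; this is exactly what lets the primal--dual accounting of \Cref{sec:CK+mcG} carry over verbatim to the randomized setting, the same way that \Cref{sec:non-monotone} extended \Cref{sec:monotone}.
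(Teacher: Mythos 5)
Your proof is correct and takes essentially the same route as the paper's (sketched) proof: the first set of constraints holds pointwise by submodularity, and the second set holds in expectation by conditioning on the pre-$t$ randomness, splitting on the algorithm's test, and observing that when $e$ is pushed both endpoint duals satisfy $\phi_v\geq C\cdot f(e:M^{(t)})\geq C\cdot f(e:M^{(t-1)})$, so the choice $2qC\geq 1-q$ absorbs the cost $\lambda_e=f(e:S)$ paid when $e$ is not pushed. If anything you are slightly more careful than the paper's sketch, which imports the event $A_e=[f(e:S)\leq C\sum_{v\in e}\phi_v^{(t-1)}]$ verbatim from the MSbM analysis; your case split on the actual CK--FKK test $f(e:M^{(t-1)})\leq C\cdot B^{(t-1)}(e)$ is the natural formulation here, and the two decompositions agree since the failure of $A_e$ forces the test to fail.
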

\begin{proof}[Proof (Sketch)]
	
	The first set of constraints is satisfied for any random realization. Indeed, as in the proof of \Cref{obs:m-g-dual-covered-CK}, for any realization of $S$, by submodularity of $f$, we have
	\begin{align*}
	\sum_{e \in T} \lambda_e = \sum_{e \in T \setminus S} f(e:S) \geq \sum_{e \in T \setminus S} f_{S}(e) \geq f_{S}(T \setminus S) = f(S \cup T) - f(S) = g^S(T) - \mu.
	\end{align*}
	Consequently, taking expectation over $S$, we have that indeed, $\E_S[\mu] + \sum_{e\in T} \E_S[\lambda_e] \geq \E_S[g^S(T)]$. 
	
	For the second set of constraints, the proof is nearly identical to that of \Cref{obs:nm-g-dual-covered}, where we show that
	\[\expect*{\sum_{v \in e} \phi_v } \geq \expect*{\lambda_e }. \]
	This is proved by taking total probability over the event $A_e := [f(e: S) \leq C\cdot \sum_{v \in V} \phi^{(t-1)}_v]$ and its complement. The key inequality to prove here is that for any realization of randomness $R$ for which $\overline{A_e}$ holds, 
	we have that
	\[
	\expect*{\sum_{v \in e} \phi_v^{(t)} | R} = 2q \cdot f(e: S) + (1-2q)\cdot \sum_{v \in e} \phi_v^{(t-1)}\geq 2q \cdot f(e:S). \]
	And indeed, conditioned on $R$, the edge $e= e^{(t)}$ fails the test in \cref{line:nm-cover-check-CK}, and so with probability $q$, we have $\sum_{v\in e}\phi_v^{(t)} = 2\cdot f(e:S)$. To see this, note that if $e$ is added to the matching, then for both $v \in e$, by definition $\phi^{(t)}_v$ must be at least $f(e : S)$. Hence, in this case
	\[
	\expect*{\sum_{v \in e} \phi_v^{(t)} | R} \geq 2q \cdot f(e: S). \]
	The proof then proceeds as that of \Cref{obs:nm-g-dual-covered}.
\end{proof}

To relate the value of the solution $M$ to the cost of the dual, we can define weights as in \cref{sec:CK+mcG} and reuse  lemmas \ref{weight-preempted}, \ref{cor:m_fandmu-CK}, and \ref{cor:m_fandphi-CK}, which hold for every realization of the random choices of the algorithm. From here, following our template, we can use these along with LP duality, \Cref{obs:FKK-dual-covered} and \Cref{technion-lemma}, to analyze this algorithm.

\begin{thm}
	\Cref{alg:MSbM-CK-FKK} run with $q=1/(2C+1)$ and $C$ on a non-monotone \msm instance outputs a matching $M$ of value \[\left(\frac{2	C^2+C}{C-1}\right) \cdot f(M) \geq f(\opt).\]
	This is optimized by taking $C=1+\frac{\sqrt{3}}{2}$, resulting in an approximation ratio of $5+2\sqrt{6}\approx 9.899$.
	Moreover, the same algorithm is $2C+2C/(C-1)$ approximate for \emph{monotone} \msm, yielding an approximation ratio of $8$ for $C=2$.
\end{thm}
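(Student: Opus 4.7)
The strategy mirrors the proof of \Cref{thm:nonmonotone-mainthm} in \Cref{sec:non-monotone}, applied to the primal-dual framework developed for \Cref{alg:MSbM-CK-FKK} in \Cref{sec:CK+mcG}. Since \Cref{cor:m_fandmu-CK} and \Cref{cor:m_fandphi-CK} are proved using only properties of the output $M$, the preempted set $P$, and the test on \Cref{line:nm-cover-check-CK} (all of which are preserved verbatim when we sample with probability $q<1$ at \Cref{line:nm-sampling-CK}), these lemmas continue to hold for \emph{every} realization of the algorithm's randomness.

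First, I would combine \Cref{cor:m_fandmu-CK} and \Cref{cor:m_fandphi-CK} in the form $\mu \leq \tfrac{C}{C-1}\cdot f(M)$ and $\sum_v \phi_v \leq (2C + \tfrac{C}{C-1})\cdot f(M)$ to obtain, for every realization,
\[
\mu + \sum_{v\in V} \phi_v \;\leq\; \left(2C + \frac{2C}{C-1}\right)\cdot f(M) \;=\; \frac{2C^2}{C-1}\cdot f(M).
\]
Taking expectation over $S$, the same inequality holds for $\E[\mu + \sum_v \phi_v]$ and $\E[f(M)]$.

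Next, by \Cref{obs:FKK-dual-covered} the expected dual assignment is feasible for the expected LP $\expect{(D)}$ with function $g(T) = \E_S[f(S\cup T)]$, so weak LP duality gives
\[
\expect*{\mu + \sum_{v\in V}\phi_v} \;\geq\; \max_T \expect*{f(S\cup T)} \;\geq\; \expect*{f(S\cup \opt)}.
\]
For the monotone statement, monotonicity yields $\expect{f(S\cup\opt)} \geq f(\opt)$, producing the $2C + 2C/(C-1)$ approximation, which equals $8$ at $C=2$. For the non-monotone statement, applying \Cref{technion-lemma} to the auxiliary non-negative submodular function $h(T):=f(T\cup\opt)$ and to the random set $S$, which contains each edge with probability at most $q$, gives $\expect{f(S\cup\opt)}=\expect{h(S)}\geq(1-q)\,f(\opt)$.

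Chaining the three inequalities and substituting $q=1/(2C+1)$ (so $1-q=2C/(2C+1)$) yields
\[
\frac{2C^2}{C-1}\cdot\expect{f(M)} \;\geq\; \frac{2C}{2C+1}\cdot f(\opt),
\]
which after rearrangement is precisely $(2C^2+C)/(C-1)\cdot \E[f(M)] \geq f(\opt)$. The bulk of the work is essentially algebraic; the only subtlety is verifying that the per-realization lemmas of \Cref{sec:CK+mcG} transfer without modification to the random setting and that they compose cleanly with the random dual feasibility of \Cref{obs:FKK-dual-covered}. Finally, minimizing $(2C^2+C)/(C-1)$ over $C>1$ by standard calculus gives the critical point $C=1+\sqrt{6}/2$ with value $5+2\sqrt{6}$ (the stated $C=1+\sqrt{3}/2$ appears to be a typo).
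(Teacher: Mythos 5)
Your proposal matches the paper's own (sketched) argument exactly: the paper states that Lemmas \ref{weight-preempted}, \ref{cor:m_fandmu-CK}, and \ref{cor:m_fandphi-CK} hold for every realization and are to be combined with \Cref{obs:FKK-dual-covered}, weak LP duality, and \Cref{technion-lemma}, which is precisely the chain of inequalities you give, and the algebra $\frac{2C^2}{C-1}\cdot\E[f(M)] \geq \frac{2C}{2C+1}\cdot f(\opt)$ rearranges correctly to the claimed bound. You are also right that the stated optimizer $C = 1 + \sqrt{3}/2$ is a typo (carried over from \Cref{thm:nonmonotone-mainthm}); minimizing $(2C^2+C)/(C-1)$ gives $2C^2 - 4C - 1 = 0$, i.e. $C = 1+\sqrt{6}/2$, which indeed yields the stated ratio $5+2\sqrt{6}$.
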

\section{Tight instance for \cref{alg:MSbM}}\label{sec:tight-MSM}

In this section we show that there exists a family of instances of MSM instances parametrized by $C$ for which \cref{alg:MSbM} with parameter $C>1$ yields an approximation factor of $2C + C/(C-1)$.

\begin{lem}
	The approximation ratio of \Cref{alg:MSbM} with $C>1$ and $q=1$ for monotone MSM is at least $2C+\frac{C}{C-1}$.
\end{lem}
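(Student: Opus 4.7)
The plan is to construct a family of monotone MSM instances, parametrized by a large integer $k$ and a small $\varepsilon > 0$, on which \Cref{alg:MSbM} with $q=1$ and parameter $C>1$ produces a matching whose $f$-value is at most a $\frac{1}{2C+C/(C-1)}+o(1)$ fraction of $f(\opt)$. Tightness requires three things to hold simultaneously in the limit: the potential bound $f(M) \geq \frac{1}{2C}\sum_v \phi_v$ from \cref{cor:m_fandphi}, the $\mu$ bound $f(M) \geq (1-1/C)\mu$ from \cref{cor:m_fandmu}, and the dual bound $\mu + \sum_v \phi_v \geq f(S\cup \opt) = f(\opt)$. The last equality $f(S\cup \opt) = f(\opt)$ is the key structural demand on $f$.

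I would use a long path $v_0 - v_1 - \cdots - v_{2k}$ with edges $e_i = (v_{i-1}, v_i)$ arriving in order, together with a weighted coverage function. The universe contains "spine items" $u_i$ for $i\in[2k]$ of geometrically growing weight roughly $C^i$, and "private items" $p_j$ carried only by odd-indexed edges. The sets $X_{e_i}$ are designed so that: (i) the marginal $f(e_i : \{e_1,\ldots,e_{i-1}\})$ narrowly exceeds $C\cdot \phi^{(i-1)}_{v_{i-1}}$, so every $e_i$ is pushed onto the stack, producing $S = \{e_1,\ldots,e_{2k}\}$ and, after reverse-greedy unwinding, $M = \{e_2,e_4,\ldots,e_{2k}\}$; (ii) the weighted spine items covered by $M$ are exactly half (up to $o(1)$) of those covered by $S$, making \cref{cor:m_fandphi} tight; (iii) $f(S) = \tfrac{C}{C-1}\,f(M)\,(1-o(1))$, so \cref{cor:m_fandmu} is tight; (iv) the alternate matching $\opt^\star = \{e_1,e_3,\ldots,e_{2k-1}\}$ covers every spine item covered by $S$ and additionally the private items $p_j$, so that $f(S\cup \opt^\star) = f(\opt^\star)$ and $f(\opt^\star) \approx \bigl(2C + \tfrac{C}{C-1}\bigr) f(M)$.

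The main obstacle will be balancing these four conditions simultaneously. The spine weights must be tuned so that marginals on the stream grow by factor $C+\varepsilon$; the overlap pattern between consecutive $X_{e_i}$'s must be such that $X_{e_{2j}}$ is essentially contained in $X_{e_{2j-1}}\cup X_{e_{2j+1}}$, so the even-indexed matching $M$ captures only a $(1-1/C)$ fraction of the spine weight in $S$ (condition (iii)); the private items must contribute enough weight to push $f(\opt^\star)$ above $f(S)$ by exactly the amount needed to certify the dual value, while remaining uncovered by any edge of $S$ (so they do not inflate $f(S)$ or the marginals seen by the algorithm). Letting first $k \to \infty$ so that all geometric sums dominate boundary terms, and then $\varepsilon \to 0$, the three inequalities in the analysis become equalities to arbitrary precision, yielding an approximation ratio $\to 2C+\tfrac{C}{C-1}$.
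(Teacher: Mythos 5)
Your plan has a structural flaw that cannot be patched by tuning parameters. On a path $v_0-\cdots-v_{2k}$ where every edge $e_i$ is pushed, you have $S = \{e_1,\ldots,e_{2k}\}$, and your proposed optimum $\opt^\star = \{e_1,e_3,\ldots,e_{2k-1}\}$ is a \emph{subset} of $S$. By monotonicity of $f$, this forces $f(\opt^\star) \leq f(S) = \mu$. But \cref{cor:m_fandmu} gives $f(M) \geq (1-1/C)\,\mu$, so $f(\opt^\star)/f(M) \leq \frac{C}{C-1}$, which is strictly less than the target $2C + \frac{C}{C-1}$ for every $C>1$. Your condition (iv), wanting $f(\opt^\star)$ to exceed $f(S)$ via ``private items carried only by odd-indexed edges \ldots while remaining uncovered by any edge of $S$,'' is internally contradictory: the odd-indexed edges \emph{are} in $S$, so anything they cover, $S$ covers.

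The paper's construction avoids exactly this trap by making the stack edges and the OPT edges \emph{disjoint}. It uses a star of edges $d_i = (x_0, x_i)$ with geometrically growing weights $C^{i-1}$, which all barely pass the push test and fill the stack (and, sharing $x_0$, collapse to a single edge $d_n$ after greedy unwinding), together with pendant edges $e_i = (x_i,y_i)$ (and one extra edge $e_0$ at $x_0$) whose weights are just below the threshold at which they would enter the stack. The submodular function is a sum of clipped terms $\min(w(T\cap\{d_i,e_i\}), w(e_i))$ plus $w(T\cap\{e_0\})$, which makes $d_i$'s value ``count'' only up to $w(e_i)$ and leaves the $e_i$'s with extra private value invisible to the stack. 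OPT takes the $e_i$'s, which lie entirely outside $S$, so $f(\opt)$ is not capped by $f(S)$; instead it equals roughly the dual cost $\mu + \sum_v\phi_v$, and the ratio converges to $2C + \frac{C}{C-1}$. Any tight example must have this feature: the bulk of OPT's value must reside on edges that never enter the stack.
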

\begin{proof}
Define the graph $G$ as follows. The vertex set $V(G)$ consists of $\{x_i, y_i\}_{i \in [0,n]}$. For convenience, for every $i \in [1,n]$ we define the edges $d_i = (x_0, x_i)$  and $e_i = (x_i, y_i)$. Then the edge set $E(G)$ consists of the edges $\{d_i\}_{i=1}^n \cup \{e_i\}_{i=0}^n$.

\begin{figure}[!h]
	\centering
	\scalebox{0.85}{\tikzset{every picture/.style={line width=0.75pt}} 

\begin{tikzpicture}[x=0.75pt,y=0.75pt,yscale=-1,xscale=1]

\draw  [fill={rgb, 255:red, 0; green, 0; blue, 0 }  ,fill opacity=1 ] (169.91,124.91) .. controls (169.86,122.11) and (172.1,119.79) .. (174.91,119.74) .. controls (177.72,119.69) and (180.03,121.93) .. (180.08,124.74) .. controls (180.13,127.55) and (177.89,129.86) .. (175.09,129.91) .. controls (172.28,129.96) and (169.96,127.72) .. (169.91,124.91) -- cycle ;
\draw  [fill={rgb, 255:red, 0; green, 0; blue, 0 }  ,fill opacity=1 ] (259.74,65.09) .. controls (259.69,62.28) and (261.93,59.97) .. (264.74,59.92) .. controls (267.55,59.87) and (269.86,62.11) .. (269.91,64.91) .. controls (269.96,67.72) and (267.72,70.04) .. (264.91,70.09) .. controls (262.11,70.14) and (259.79,67.9) .. (259.74,65.09) -- cycle ;
\draw    (175,124.83) -- (264.83,65) ;
\draw  [fill={rgb, 255:red, 0; green, 0; blue, 0 }  ,fill opacity=1 ] (259.92,105.26) .. controls (259.87,102.45) and (262.11,100.14) .. (264.91,100.09) .. controls (267.72,100.04) and (270.04,102.28) .. (270.09,105.09) .. controls (270.14,107.89) and (267.9,110.21) .. (265.09,110.26) .. controls (262.28,110.31) and (259.97,108.07) .. (259.92,105.26) -- cycle ;
\draw    (175,124.83) -- (265,105.17) ;
\draw    (175,124.83) -- (65,125.17) ;
\draw  [fill={rgb, 255:red, 0; green, 0; blue, 0 }  ,fill opacity=1 ] (59.92,125.26) .. controls (59.87,122.45) and (62.11,120.14) .. (64.91,120.09) .. controls (67.72,120.04) and (70.04,122.28) .. (70.09,125.09) .. controls (70.14,127.89) and (67.9,130.21) .. (65.09,130.26) .. controls (62.28,130.31) and (59.97,128.07) .. (59.92,125.26) -- cycle ;
\draw  [fill={rgb, 255:red, 0; green, 0; blue, 0 }  ,fill opacity=1 ] (259.74,145.35) .. controls (259.69,142.54) and (261.93,140.23) .. (264.74,140.18) .. controls (267.54,140.13) and (269.86,142.37) .. (269.91,145.17) .. controls (269.96,147.98) and (267.72,150.3) .. (264.91,150.35) .. controls (262.1,150.4) and (259.79,148.16) .. (259.74,145.35) -- cycle ;
\draw    (175,124.83) -- (264.82,145.26) ;
\draw  [fill={rgb, 255:red, 0; green, 0; blue, 0 }  ,fill opacity=1 ] (259.91,244.91) .. controls (259.86,242.11) and (262.1,239.79) .. (264.91,239.74) .. controls (267.72,239.69) and (270.03,241.93) .. (270.08,244.74) .. controls (270.13,247.55) and (267.89,249.86) .. (265.09,249.91) .. controls (262.28,249.96) and (259.96,247.72) .. (259.91,244.91) -- cycle ;
\draw    (175,124.83) -- (265,244.83) ;
\draw    (374.82,64.65) -- (264.83,65) ;
\draw  [fill={rgb, 255:red, 0; green, 0; blue, 0 }  ,fill opacity=1 ] (369.73,64.74) .. controls (369.68,61.93) and (371.92,59.62) .. (374.73,59.57) .. controls (377.54,59.52) and (379.86,61.76) .. (379.91,64.57) .. controls (379.95,67.37) and (377.72,69.69) .. (374.91,69.74) .. controls (372.1,69.79) and (369.78,67.55) .. (369.73,64.74) -- cycle ;
\draw    (374.91,104.91) -- (264.92,105.26) ;
\draw  [fill={rgb, 255:red, 0; green, 0; blue, 0 }  ,fill opacity=1 ] (369.83,105) .. controls (369.78,102.19) and (372.02,99.88) .. (374.82,99.83) .. controls (377.63,99.78) and (379.95,102.02) .. (380,104.82) .. controls (380.05,107.63) and (377.81,109.95) .. (375,110) .. controls (372.19,110.05) and (369.88,107.81) .. (369.83,105) -- cycle ;
\draw    (374.91,144.91) -- (264.92,145.26) ;
\draw  [fill={rgb, 255:red, 0; green, 0; blue, 0 }  ,fill opacity=1 ] (369.83,145) .. controls (369.78,142.19) and (372.02,139.88) .. (374.82,139.83) .. controls (377.63,139.78) and (379.95,142.02) .. (380,144.82) .. controls (380.05,147.63) and (377.81,149.95) .. (375,150) .. controls (372.19,150.05) and (369.88,147.81) .. (369.83,145) -- cycle ;
\draw    (374.91,244.91) -- (264.92,245.26) ;
\draw  [fill={rgb, 255:red, 0; green, 0; blue, 0 }  ,fill opacity=1 ] (369.83,245) .. controls (369.78,242.19) and (372.02,239.88) .. (374.82,239.83) .. controls (377.63,239.78) and (379.95,242.02) .. (380,244.82) .. controls (380.05,247.63) and (377.81,249.95) .. (375,250) .. controls (372.19,250.05) and (369.88,247.81) .. (369.83,245) -- cycle ;

\draw (167,96.4) node [anchor=north west][inner sep=0.75pt]    {$x_{0}$};
\draw (315,172.4) node [anchor=north west][inner sep=0.75pt]  [font=\LARGE]  {${\displaystyle \vdots }$};
\draw (59,96.4) node [anchor=north west][inner sep=0.75pt]    {$y_{0}$};
\draw (271,41.4) node [anchor=north west][inner sep=0.75pt]    {$x_{1}$};
\draw (271,80.4) node [anchor=north west][inner sep=0.75pt]    {$x_{2}$};
\draw (271,120.4) node [anchor=north west][inner sep=0.75pt]    {$x_{2}$};
\draw (270,222.4) node [anchor=north west][inner sep=0.75pt]    {$x_{n}$};
\draw (382,41.4) node [anchor=north west][inner sep=0.75pt]    {$y_{1}$};
\draw (382,81.4) node [anchor=north west][inner sep=0.75pt]    {$y_{2}$};
\draw (382,121.4) node [anchor=north west][inner sep=0.75pt]    {$y_{2}$};
\draw (381,221.4) node [anchor=north west][inner sep=0.75pt]    {$y_{n}$};

\end{tikzpicture}}
	\caption{Tight Example for \Cref{alg:MSbM}}
\end{figure}

To define the (monotone) submodular function, we first define an auxiliary weight function $w: E(G) \rightarrow \R_{\geq 0}$. The weights are:
\begin{align*}
	w(d_i) &= C^{i-1} \tag{$n \geq i \geq 1$} \\
	w(e_1) &= 1+ C -\eps \\
	w(e_i) &= C^{i} - \eps \tag{$n \geq i \geq 2$ } \\
	w(e_0) &=  C^n -\eps \\
	\intertext{Now the submodular function is:}
	f(T) &:= w(T \cap \{e_0\}) + \sum_{i=0}^n \min(w(T \cap \{d_i, e_i\}), w(e_i))
\end{align*}
Since weights are non-negative, this function is monotone. Submodularity follows from preserevation of subdmodularity under linear combinations (and in particular sums), and $\min\{w(S),X\}$ being submodular for any linear function $w$.

The stream reveals the edges $d_1, \ldots, d_n$ in order, and subsequently reveals $e_0, e_1, \ldots, e_n$ in order. For a run of \Cref{alg:MSbM} with this choice of $C$ and $q=1$, several claims hold inductively: 
\begin{enumerate}[(a)]
	\item On the arrival of edge $d_i$, we have $\phi_{x_0} = C^{i-2}$ (except for the arrival of $d_1$, at which point $\phi_0 = 0$) and $\phi_{x_i} = 0$.
	\item The algorithm takes every edge $d_i$ into the stack.
	\item After $d_i$ is taken into the stack, we have $\phi_{x_0} = C^{i-1}$ and $\phi_{x_i} = C^{i-1} + C^{i-2}$ (except for $\phi_{x_1}$ which is set to $1$).
	\item The algorithm does not take $e_i$ into the stack.
\end{enumerate}

Let $\Lambda_t$ be the statement that these claims holds for time $t$. By inspection $\Lambda_1$ holds, now consider some time $ i > 1$. Claim (a) follows directly from claim (c) of $\Lambda_{i-1}$. Claim (b) follows from (a) since $f_S(d_i) = C^{i-1} = C\cdot \phi_0$ when $d_i$ arrives. Claim (c) is a consequence of how the algorithm increases the potentials $\phi$ when taking edges into the stack. Claim (d) holds since $f_S(e_i) = w(e_i) - w(d_i) = C^{i} - C^{i-1} - \eps < C \cdot \phi_{x_i}$.

From the above, we find that \Cref{alg:MSbM} with parameter $C$ as above and $q=1$ will have all edges $d_1,\dots,d_n$ in its stack by the end, resulting in it outputting the matching consisting of the single edge $d_n$. The value of this edge (and hence this matching) is $C^{n-1}$, while on the other hand $\opt$ can take the edges $\{e_i\}_{i=0}^n$, which have value
\begin{align*}\sum_{i=0}^n w(e_i) = C^n + \sum_{i=0}^n C^{i} - \eps(n+1) \rightarrow C^{n-1} \left( 2C +   \frac{C}{C-1}\right). \tag{as $n \rightarrow \infty$ and $\eps \rightarrow 0$}\end{align*}
so long as $C > 1$. Hence $c(\opt) / c(\alg) \rightarrow 2C + C / (C-1)$. The lemma follows.
\end{proof}

\section{Space Bound of \Cref{alg:MSbM}}
\label{sec:deferred-msm-algo}

In this section we bound the space and time complexities of \Cref{alg:MSbM}. We start by bounding its space usage,  as restated in the following lemma.
\spacebound*

First, we prove the following simpler lemma.
\begin{lem}\label{per-node-bound}
	Each vertex $v\in V$ has at most $\tilde{O}(b_v)$ edges in the stack $S$.
\end{lem}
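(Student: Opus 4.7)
The plan is to track a single vertex $v$'s potential $\phi_v$ across the stream and bound the number of edges of $S$ incident on $v$ via the multiplicative growth of $\phi_v$ between its first nonzero value and its final value.

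First, I would show that every push of an edge $e \ni v$ onto $S$ multiplies $\phi_v$ by at least $1 + (C-1)/b_v$. Indeed, when $e = e^{(t)}$ is pushed, \cref{line:nm-cover-check} fails, so $f(e:S) > C\cdot\sum_{u\in e}\phi^{(t-1)}_u \geq C\cdot \phi^{(t-1)}_v$, while the update formula gives $b_v\cdot w_{ev} = f(e:S) - \sum_{u\in e}\phi^{(t-1)}_u > (C-1)\cdot \phi^{(t-1)}_v$. Hence $\phi^{(t)}_v = \phi^{(t-1)}_v + w_{ev} \geq (1 + (C-1)/b_v)\cdot \phi^{(t-1)}_v$.

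Second, I would pin down the range of $\phi_v$. For the lower bound on the first nonzero value, at the first push of an edge $e \ni v$ we have $\sum_{u\in e}\phi^{(t-1)}_u < f(e:S)/C$, so $b_v\cdot w_{ev} > f(e:S)(C-1)/C \geq \fmin(C-1)/C$, giving $\phi_v \geq \fmin(C-1)/(Cb_v)$ right after this push. For the upper bound, let $t^{*}$ be the last time an edge $e^{*}\ni v$ is pushed; the failure of the check at time $t^{*}$ gives $\phi^{(t^{*}-1)}_v < f(e^{*}:S)/C \leq \fmax/C$, and the update bound $w_{e^{*}v} \leq f(e^{*}:S)/b_v \leq \fmax/b_v$ gives $\phi_v = \phi^{(t^{*})}_v \leq \fmax/C + \fmax/b_v \leq 2\fmax$.

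Finally, combining the geometric growth with the range $[\fmin(C-1)/(Cb_v),\, 2\fmax]$, the number of pushes of edges incident on $v$ is at most
\[
1 + \log_{1+(C-1)/b_v}\!\left(\frac{2Cb_v\,\fmax}{(C-1)\,\fmin}\right).
\]
Using $\ln(1+x) \geq x/2$ for $x \in (0,1]$ (and $\ln(1+x) \geq \ln 2$ otherwise), the logarithm in the denominator is $\Omega(\min\{1, (C-1)/b_v\})$, so this bound simplifies to $O\!\left(\frac{b_v}{C-1}\cdot \log\!\left(\frac{b_v\,\fmax}{\fmin}\right)\right) = \tilde{O}(b_v)$, where the last step uses the standing assumption $\fmax/\fmin = n^{O(1)}$ and that $C-1 = \epsilon$ is constant.

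The only delicate step is the upper bound on $\phi_v$: it crucially uses that an edge is pushed only when the check fails, which simultaneously caps the ``before'' potential and (via the update formula) the size of the single increment. Everything else is bookkeeping with a geometric series, so the main work in writing this up will be choosing the cleanest form of the logarithm bound and verifying it for the boundary cases $b_v = 1$ and $(C-1)/b_v \geq 1$.
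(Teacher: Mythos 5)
Your proof is correct and follows essentially the same approach as the paper's: track the multiplicative growth of $\phi_v$ by a factor of at least $1+(C-1)/b_v$ per push, bound the first nonzero value from below by $\Omega(\fmin/b_v)$ and the final value from above by $\poly(n)\cdot\fmin$, and take logarithms. The only cosmetic difference is the upper bound on $\phi_v$: you pin it at $2\fmax$ by examining the state just before the last push, whereas the paper sums all possible increments to get $n^{O(1)}\cdot\fmin/b_v$; both invoke the standing $\fmax/\fmin = n^{O(1)}$ assumption and yield the same $\tilde{O}(b_v)$ bound.
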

\begin{proof}
	If an edge $e\ni v$ is added to $S$ at time $t$, then by the test in \Cref{line:nm-cover-check}, $f(e:S)\geq (1+\eps)\cdot \sum_{u\in e} \phi^{(t-1)}_u$. Consequently, and since $\phi$ values are easily seen to always be positive, we have  \begin{align*}
		\phi^{(t)}_v-\phi^{(t-1)}_v 
		&= \frac{f(e:S)-\sum_{u\in e}\phi^{(t-1)}_u}{b_v} \\
		&\geq \frac{\eps \cdot \sum_{u\in e}\phi^{(t-1)}_u}{b_v} \geq  \frac{\eps \cdot \phi^{(t-1)}_v}{b_v}.
	\end{align*}
	Thus, adding this edge $e\ni v$ to $S$ results in $\phi^{(t)}_v \geq \phi^{(t-1)}_v\cdot (1+\epsilon/b_v)$. Moreover, if $e$ is the first edge of $v$ added to $S$, then we have
	\begin{align*}
		\phi^{(t)}_v &= \frac{f(e:S)-\sum_{u\in e}\phi^{(t-1)}_u}{b_v}\geq \frac{\eps}{1+\eps} \cdot \frac{ f(e:S)}{b_v}  \\
		&\geq  \frac{\eps}{1+\eps} \cdot \frac{ f_{\min}}{b_v}.
	\end{align*}
	Hence, if $v$ had $k$ edges added to the stack by time $t$, 
	\begin{equation}\label{phi-space-lb}
	\phi^{(t)}_v\geq (\eps/(1+\eps))\cdot (f_{min}/b_v) \cdot (1+\eps/b_v)^{k-1}.
	\end{equation}
	On the other hand, since $f$ is polynomially bounded, we have that for some constant $d$
	\begin{equation}\label{phi-space-ub}
	\phi^{(t)}_v\leq \sum_{e\ni v} f_{S_e}(e)/b_v \leq n^{d} \cdot (f_{min}/b_v).
	\end{equation}
	Combining equations \eqref{phi-space-lb} and \eqref{phi-space-ub} and simplifying, we find that $(1+\eps/b_v)^{k-1} \leq n^d \cdot (1+\eps)/\epsilon.$
	Taking out logarithms and simplifying further, we find that 
	\begin{align*}
	k \leq 1+\frac{d\log n + \log(1+\eps) + \log (1/\eps)}{\log(1+\epsilon/b_v)} = O((b_v/\epsilon) \cdot (\log n + \log (1/\epsilon))) = \tilde{O}(b_v).
	\end{align*}
	That is, the number of edges of $v$ in the stack is at most $\tilde{O}(b_v)$.
\end{proof}

\begin{proof}[Proof of \Cref{space-bound}]
	Consider a maximum cardinality $b$-matching $M$ (i.e., $|M|=M_{\max}$).
	Denote by $d_v$ the degree of a vertex $v$ in $G$. Then, the set $U$ of saturated vertices $v\in V$ in $M$, i.e., those with $\min\{b_v,d_v\}$ edges in $M$, is a vertex cover.
	To see this, we note that no edge $e=(u,v)$ can have both endpoints not saturated by $M$, as the converse would imply that $M\cup \{e\}$ is a feasible $b$-matching, contradicting maximality of $M$.
	Now, since each edge of $M$ has at most two saturated endpoints, we have that $\sum_{u\in U} \min\{b_u,d_u\} \leq 2|M|$. 
	On the other hand, since $U$ is a vertex cover, the number of edges in $S$ is upper bounded by the number of edges of vertices in $U$ in $S$. 
	But each vertex $u\in U$ trivially has at most $d_u$ edges in $S$. Combined with 
	\Cref{per-node-bound}, we find that the number of edges in $S$ is at most	
	\begin{align*}
	|S| & = \tilde{O}\left(\sum_{u \in U} \min\{b_u,d_u\}\right) = \tilde{O}(|M|).
	\qedhere
	\end{align*}
	Finally, since each edge added to $S$ causes at most two nodes to have non-zero $\phi$ value, and since each such value can be specified using $\tilde{O}(1)$ bits, by the assumption of $f$ being poly-bounded, the space to store all non-zero $\phi$ values is at most $\tilde{O}(|S|) = \tilde{O}(|M|) = \tilde{O}(M_{\max})$.
\end{proof}

Finally, we briefly analyze the algorithm's running time.

\algtime*

\begin{proof}
	The algorithm clearly requires a constant number of evaluations of $\phi_v$ and $f$ for each edge arrival. The difference between deterministic and randomized implementations is due to the data structures used to maintain the mapping from $v$ to $\phi_v$ (if non zero). 
	Finally, the preprocessing time follows directly from our upper bound on $|S|$ and the post-processing stage clearly taking time linear in $|S|$.
\end{proof}

\section{Deferred Proofs of \Cref{sec:MSM-lb}}
\label{sec:extraproofs}

\maxkcov*
	
	\begin{proof}
		The first statement is precisely Corollary 1.6 of \cite{dinur2014analytical}.
		
		For the extra assumptions, if $\scopt < |\scS|^{\gamma \alpha}$ then the brute force algorithm that checks all subsets of size $\scopt$ runs in time $|\scS|^{\scopt} < 2^{|\scS|^{\gamma \alpha} \log |\scS|} \leq  2^{N^{\gamma \alpha}}$. If $|\scS| < |\scU|^{\gamma \alpha}$ , then one can brute force over all sub collections of $S$ in time $2^{|\scS|} \leq 2^{|\scU|^{\gamma \alpha}} \leq 2^{N^{\gamma \alpha}}$. Both running times contradict \cref{lem:sc_hardness}.
	\end{proof}

\gkk*

\begin{proof}
	Let $\mathcal{A}$ be an algorithm that outputs $\gamma n$ of the edges of $M_r'$ at the end of phase 1 that uses fewer than $s= n \poly \log n$ bits. We will show that $\gamma = o(1)$. 
	
	Let $\mathcal{G}$ be the set of possible first phase graphs. Then
	\[|\mathcal{G}| = \binom{n/2}{\delta n}^t = 2^{\gamma m} \]
	for some $\gamma > 0$. Let $\phi: \mathcal{G} \rightarrow \{0,1\}^s$ be the function that takes an input graph $G$ to the state of the algorithm $\mathcal{A}$ after running $\mathcal{A}$ on $G$. Let $\Gamma(G) = \{H \mid \phi(G)  = \phi(H)\}$, that is the set of graphs  inducing the same internal state for $\mathcal{A}$ at the end of phase 1. 
	
	Define $\Psi(G) =\bigcap_{H \in \Gamma(G)} E(H)$. Note that for any input graph $G$, the algorithm $\mathcal{A}$ can output an edge $e$ if and only if $e \in \Psi(G)$. Also, for any $G$ let $t'$ be the number of matchings in the RS graph $G_0$ for which $\Psi(G)$ contains at least $\gamma n$ edges. Since algorithm $\mathcal{A}$ outputs $\gamma n$ edges of $M_r'$,  the number of graphs in $\Gamma(G)$ is bounded by
	\[\binom{(1/2 - \gamma)n}{\delta n}^{t'} \binom{n/2}{\delta n}^{t-t'} = \left(2^{-\Omega(\gamma n)} \binom{n/2}{\delta n} \right)^{t'} \binom{n/2}{\delta n}^{t-t'} = 2^{-\Omega(t' \gamma n)} 2^{\gamma m} \tag{$*$} \label{eq:graphfamilybound} \]
	
	On the other hand, since the first phase graph $G$ is chosen uniformly at random, by a counting argument, with probability at least $1-o(1)$ we have that $|\Gamma(G)| \geq 2^{(\gamma - o(1))m}$. Conditioning on this happening, we also know that $t' \geq \Omega(t)$ since the input graph is uniformly chosen within $\Gamma(G)$, and the algorithm succeeds with constant probability. These two facts together with \eqref{eq:graphfamilybound} imply that $\gamma = o(1)$.
\end{proof}

\bibliography{abb,refs,ultimate}

\begin{thebibliography}{}

\bibitem[\protect\citeauthoryear{}{MSM}{}]{MSMopenQ}
Bertinoro workshop 2014, problem 63.
\newblock \url{https://sublinear.info/index.php?title=Open_Problems:63}.
\newblock Accessed: 2020-06-20.

\bibitem[\protect\citeauthoryear{Agrawal, Gollapudi, Halverson, and
  Ieong}{Agrawal et~al\mbox{.}}{2009}]{agrawal2009diversifying}
{\sc Agrawal, R.}, {\sc Gollapudi, S.}, {\sc Halverson, A.}, {\sc and} {\sc
  Ieong, S.} 2009.
\newblock Diversifying search results.
\newblock In {\em Proceedings of 2nd International Conference on Web Search and
  Data Mining (WSDM)}. 5--14.

\bibitem[\protect\citeauthoryear{Ahmed, Dickerson, and Fuge}{Ahmed
  et~al\mbox{.}}{2017}]{ahmed2017diverse}
{\sc Ahmed, F.}, {\sc Dickerson, J.~P.}, {\sc and} {\sc Fuge, M.} 2017.
\newblock Diverse weighted bipartite b-matching.
\newblock In {\em Proceedings of 26th International Joint Conference on
  Artificial Intelligence (IJCAI)}. 35--41.

\bibitem[\protect\citeauthoryear{Alaluf, Ene, Feldman, Nguyen, and Suh}{Alaluf
  et~al\mbox{.}}{2020}]{alaluf2020optimal}
{\sc Alaluf, N.}, {\sc Ene, A.}, {\sc Feldman, M.}, {\sc Nguyen, H.~L.}, {\sc
  and} {\sc Suh, A.} 2020.
\newblock Optimal streaming algorithms for submodular maximization with
  cardinality constraints.
\newblock In {\em Proceedings of 47th International Colloquium on Automata,
  Languages and Programming (ICALP)}. 6:1--6:19.

\bibitem[\protect\citeauthoryear{Anstee}{Anstee}{1987}]{anstee1987polynomial}
{\sc Anstee, R.~P.} 1987.
\newblock A polynomial algorithm for b-matchings: an alternative approach.
\newblock {\em Inf. Process. Lett.\/}~{\em 24,\/}~3, 153--157.

\bibitem[\protect\citeauthoryear{Arar, Chechik, Cohen, Stein, and Wajc}{Arar
  et~al\mbox{.}}{2018}]{arar2018dynamic}
{\sc Arar, M.}, {\sc Chechik, S.}, {\sc Cohen, S.}, {\sc Stein, C.}, {\sc and}
  {\sc Wajc, D.} 2018.
\newblock Dynamic matching: Reducing integral algorithms to
  approximately-maximal fractional algorithms.
\newblock In {\em Proceedings of 45th International Colloquium on Automata,
  Languages and Programming (ICALP)}. 7:1--7:16.

\bibitem[\protect\citeauthoryear{Badanidiyuru, Mirzasoleiman, Karbasi, and
  Krause}{Badanidiyuru et~al\mbox{.}}{2014}]{badanidiyuru2014streaming}
{\sc Badanidiyuru, A.}, {\sc Mirzasoleiman, B.}, {\sc Karbasi, A.}, {\sc and}
  {\sc Krause, A.} 2014.
\newblock Streaming submodular maximization: Massive data summarization on the
  fly.
\newblock In {\em Proceedings of 20th International Conference on Knowledge
  Discovery and Data Mining (KDD)}. 671--680.

\bibitem[\protect\citeauthoryear{Bar-Noy, Bar-Yehuda, Freund, Naor, and
  Schieber}{Bar-Noy et~al\mbox{.}}{2001}]{bar2001unified}
{\sc Bar-Noy, A.}, {\sc Bar-Yehuda, R.}, {\sc Freund, A.}, {\sc Naor, J.}, {\sc
  and} {\sc Schieber, B.} 2001.
\newblock A unified approach to approximating resource allocation and
  scheduling.
\newblock {\em Journal of the ACM (JACM)\/}~{\em 48,\/}~5, 1069--1090.

\bibitem[\protect\citeauthoryear{Bar-Yehuda and Rawitz}{Bar-Yehuda and
  Rawitz}{2005}]{bar2005equivalence}
{\sc Bar-Yehuda, R.} {\sc and} {\sc Rawitz, D.} 2005.
\newblock On the equivalence between the primal-dual schema and the local ratio
  technique.
\newblock {\em SIAM Journal on Discrete Mathematics\/}~{\em 19,\/}~3, 762--797.

\bibitem[\protect\citeauthoryear{Bernstein and Stein}{Bernstein and
  Stein}{2015}]{bernstein2015fully}
{\sc Bernstein, A.} {\sc and} {\sc Stein, C.} 2015.
\newblock Fully dynamic matching in bipartite graphs.
\newblock In {\em Proceedings of 42nd International Colloquium on Automata,
  Languages and Programming (ICALP)}. 167--179.

\bibitem[\protect\citeauthoryear{Bernstein and Stein}{Bernstein and
  Stein}{2016}]{bernstein2016faster}
{\sc Bernstein, A.} {\sc and} {\sc Stein, C.} 2016.
\newblock Faster fully dynamic matchings with small approximation ratios.
\newblock In {\em Proceedings of 27th Annual ACM-SIAM Symposium on Discrete
  Algorithms (SODA)}. 692--711.

\bibitem[\protect\citeauthoryear{Bhattacharya, Henzinger, and
  Italiano}{Bhattacharya et~al\mbox{.}}{2015}]{bhattacharya2015deterministic}
{\sc Bhattacharya, S.}, {\sc Henzinger, M.}, {\sc and} {\sc Italiano, G.~F.}
  2015.
\newblock Deterministic fully dynamic data structures for vertex cover and
  matching.
\newblock In {\em Proceedings of 26th Annual ACM-SIAM Symposium on Discrete
  Algorithms (SODA)}. 785--804.

\bibitem[\protect\citeauthoryear{Buchbinder and Feldman}{Buchbinder and
  Feldman}{2019}]{buchbinder2019constrained}
{\sc Buchbinder, N.} {\sc and} {\sc Feldman, M.} 2019.
\newblock Constrained submodular maximization via a nonsymmetric technique.
\newblock {\em Mathematics of Operations Research\/}~{\em 44,\/}~3, 988--1005.

\bibitem[\protect\citeauthoryear{Buchbinder, Feldman, Naor, and
  Schwartz}{Buchbinder et~al\mbox{.}}{2014}]{buchbinder2014submodular}
{\sc Buchbinder, N.}, {\sc Feldman, M.}, {\sc Naor, J.}, {\sc and} {\sc
  Schwartz, R.} 2014.
\newblock Submodular maximization with cardinality constraints.
\newblock In {\em Proceedings of 25th Annual ACM-SIAM Symposium on Discrete
  Algorithms (SODA)}. 1433--1452.

\bibitem[\protect\citeauthoryear{Calinescu, Chekuri, P{\'a}l, and
  Vondr{\'a}k}{Calinescu et~al\mbox{.}}{2011}]{calinescu2011maximizing}
{\sc Calinescu, G.}, {\sc Chekuri, C.}, {\sc P{\'a}l, M.}, {\sc and} {\sc
  Vondr{\'a}k, J.} 2011.
\newblock Maximizing a monotone submodular function subject to a matroid
  constraint.
\newblock {\em SIAM Journal on Computing (SICOMP)\/}~{\em 40,\/}~6, 1740--1766.

\bibitem[\protect\citeauthoryear{Chakrabarti and Kale}{Chakrabarti and
  Kale}{2015}]{chakrabarti2015submodular}
{\sc Chakrabarti, A.} {\sc and} {\sc Kale, S.} 2015.
\newblock Submodular maximization meets streaming: Matchings, matroids, and
  more.
\newblock {\em Mathematical Programming\/}~{\em 154,\/}~1-2, 225--247.

\bibitem[\protect\citeauthoryear{Chekuri, Gupta, and Quanrud}{Chekuri
  et~al\mbox{.}}{2015}]{chekuri2015streaming}
{\sc Chekuri, C.}, {\sc Gupta, S.}, {\sc and} {\sc Quanrud, K.} 2015.
\newblock Streaming algorithms for submodular function maximization.
\newblock In {\em Proceedings of 42nd International Colloquium on Automata,
  Languages and Programming (ICALP)}. 318--330.

\bibitem[\protect\citeauthoryear{Crouch and Stubbs}{Crouch and
  Stubbs}{2014}]{crouch2014improved}
{\sc Crouch, M.} {\sc and} {\sc Stubbs, D.~M.} 2014.
\newblock Improved streaming algorithms for weighted matching, via unweighted
  matching.
\newblock In {\em Proceedings of 17th International Workshop on Approximation
  Algorithms for Combinatorial Optimization Problems (APPROX)}. 96.

\bibitem[\protect\citeauthoryear{Devanur, Jain, and Kleinberg}{Devanur
  et~al\mbox{.}}{2013}]{devanur2013randomized}
{\sc Devanur, N.~R.}, {\sc Jain, K.}, {\sc and} {\sc Kleinberg, R.~D.} 2013.
\newblock Randomized primal-dual analysis of ranking for online bipartite
  matching.
\newblock In {\em Proceedings of 24th Annual ACM-SIAM Symposium on Discrete
  Algorithms (SODA)}. 101--107.

\bibitem[\protect\citeauthoryear{Dickerson, Sankararaman, Srinivasan, and
  Xu}{Dickerson et~al\mbox{.}}{2019}]{dickerson2019balancing}
{\sc Dickerson, J.~P.}, {\sc Sankararaman, K.~A.}, {\sc Srinivasan, A.}, {\sc
  and} {\sc Xu, P.} 2019.
\newblock Balancing relevance and diversity in online bipartite matching via
  submodularity.
\newblock In {\em Proceedings of 53rd AAAI Conference on Artificial
  Intelligence (AAAI)}. Vol.~33. 1877--1884.

\bibitem[\protect\citeauthoryear{Dinur and Steurer}{Dinur and
  Steurer}{2014}]{dinur2014analytical}
{\sc Dinur, I.} {\sc and} {\sc Steurer, D.} 2014.
\newblock Analytical approach to parallel repetition.
\newblock In {\em Proceedings of 46th Annual ACM Symposium on Theory of
  Computing (STOC)}. 624--633.

\bibitem[\protect\citeauthoryear{Ene and Nguyen}{Ene and
  Nguyen}{2016}]{ene2016constrained}
{\sc Ene, A.} {\sc and} {\sc Nguyen, H.~L.} 2016.
\newblock Constrained submodular maximization: Beyond 1/e.
\newblock In {\em Proceedings of 57th Symposium on Foundations of Computer
  Science (FOCS)}. 248--257.

\bibitem[\protect\citeauthoryear{Epstein, Levin, Segev, and Weimann}{Epstein
  et~al\mbox{.}}{2013}]{epstein2013improved}
{\sc Epstein, L.}, {\sc Levin, A.}, {\sc Segev, D.}, {\sc and} {\sc Weimann,
  O.} 2013.
\newblock Improved bounds for online preemptive matching.
\newblock In {\em Proceedings of 30th International Symposium on Theoretical
  Aspects of Computer Science (STACS)}. 389.

\bibitem[\protect\citeauthoryear{Fahrbach, Huang, Tao, and
  Zadimoghaddam}{Fahrbach et~al\mbox{.}}{2020}]{fahrbach2020edge}
{\sc Fahrbach, M.}, {\sc Huang, Z.}, {\sc Tao, R.}, {\sc and} {\sc
  Zadimoghaddam, M.} 2020.
\newblock Edge-weighted online bipartite matching.
\newblock In {\em Proceedings of 61st Symposium on Foundations of Computer
  Science (FOCS)}.
\newblock To appear.

\bibitem[\protect\citeauthoryear{Feige}{Feige}{1998}]{feige1998threshold}
{\sc Feige, U.} 1998.
\newblock A threshold of ln n for approximating set cover.
\newblock {\em Journal of the ACM (JACM)\/}~{\em 45,\/}~4, 634--652.

\bibitem[\protect\citeauthoryear{Feigenbaum, Kannan, McGregor, Suri, and
  Zhang}{Feigenbaum et~al\mbox{.}}{2005}]{feigenbaum2005graph}
{\sc Feigenbaum, J.}, {\sc Kannan, S.}, {\sc McGregor, A.}, {\sc Suri, S.},
  {\sc and} {\sc Zhang, J.} 2005.
\newblock On graph problems in a semi-streaming model.
\newblock {\em Theor. Comput. Sci.\/}~{\em 348,\/}~2-3, 207--216.

\bibitem[\protect\citeauthoryear{Feldman, Karbasi, and Kazemi}{Feldman
  et~al\mbox{.}}{2018}]{feldman2018less}
{\sc Feldman, M.}, {\sc Karbasi, A.}, {\sc and} {\sc Kazemi, E.} 2018.
\newblock Do less, get more: streaming submodular maximization with
  subsampling.
\newblock In {\em Proceedings of 31st Annual Conference on Neural Information
  Processing Systems (NeurIPS)}. 732--742.

\bibitem[\protect\citeauthoryear{Feldman, Naor, and Schwartz}{Feldman
  et~al\mbox{.}}{2011a}]{feldman2011unified}
{\sc Feldman, M.}, {\sc Naor, J.}, {\sc and} {\sc Schwartz, R.} 2011a.
\newblock A unified continuous greedy algorithm for submodular maximization.
\newblock In {\em Proceedings of 52nd Symposium on Foundations of Computer
  Science (FOCS)}. 570--579.

\bibitem[\protect\citeauthoryear{Feldman, Naor, Schwartz, and Ward}{Feldman
  et~al\mbox{.}}{2011b}]{feldman2011improved}
{\sc Feldman, M.}, {\sc Naor, J.~S.}, {\sc Schwartz, R.}, {\sc and} {\sc Ward,
  J.} 2011b.
\newblock Improved approximations for k-exchange systems.
\newblock In {\em Proceedings of 19th Annual European Symposium on Algorithms
  (ESA)}. 784--798.

\bibitem[\protect\citeauthoryear{Feldman, Norouzi-Fard, Svensson, and
  Zenklusen}{Feldman et~al\mbox{.}}{2020}]{feldman2020one}
{\sc Feldman, M.}, {\sc Norouzi-Fard, A.}, {\sc Svensson, O.}, {\sc and} {\sc
  Zenklusen, R.} 2020.
\newblock The one-way communication complexity of submodular maximization with
  applications to streaming and robustness.
\newblock In {\em Proceedings of 52nd Annual ACM Symposium on Theory of
  Computing (STOC)}. 1363--1374.

\bibitem[\protect\citeauthoryear{Fisher, Nemhauser, and Wolsey}{Fisher
  et~al\mbox{.}}{1978}]{fisher1978analysis}
{\sc Fisher, M.~L.}, {\sc Nemhauser, G.~L.}, {\sc and} {\sc Wolsey, L.~A.}
  1978.
\newblock An analysis of approximations for maximizing submodular set
  functions—ii.
\newblock In {\em Polyhedral combinatorics}. Springer, 73--87.

\bibitem[\protect\citeauthoryear{Gabow}{Gabow}{2018}]{gabow2018data}
{\sc Gabow, H.~N.} 2018.
\newblock Data structures for weighted matching and extensions to b-matching
  and f-factors.
\newblock {\em Transactions on Algorithms (TALG)\/}~{\em 14,\/}~3, 1--80.

\bibitem[\protect\citeauthoryear{Ghaffari and Wajc}{Ghaffari and
  Wajc}{2019}]{ghaffari2019simplified}
{\sc Ghaffari, M.} {\sc and} {\sc Wajc, D.} 2019.
\newblock Simplified and space-optimal semi-streaming $(2+
  \epsilon)$-approximate matching.
\newblock In {\em Proceedings of 2nd Symposium of Simplicity in Algorithms
  (SOSA)}.

\bibitem[\protect\citeauthoryear{Goel, Kapralov, and Khanna}{Goel
  et~al\mbox{.}}{2012}]{goel2012communication}
{\sc Goel, A.}, {\sc Kapralov, M.}, {\sc and} {\sc Khanna, S.} 2012.
\newblock On the communication and streaming complexity of maximum bipartite
  matching.
\newblock In {\em Proceedings of 23rd Annual ACM-SIAM Symposium on Discrete
  Algorithms (SODA)}. 468--485.

\bibitem[\protect\citeauthoryear{Gupta, Krishnaswamy, Kumar, and
  Panigrahi}{Gupta et~al\mbox{.}}{2017}]{gupta2017online}
{\sc Gupta, A.}, {\sc Krishnaswamy, R.}, {\sc Kumar, A.}, {\sc and} {\sc
  Panigrahi, D.} 2017.
\newblock Online and dynamic algorithms for set cover.
\newblock In {\em Proceedings of 49th Annual ACM Symposium on Theory of
  Computing (STOC)}. 537--550.

\bibitem[\protect\citeauthoryear{Gupta and Levin}{Gupta and
  Levin}{2020a}]{gupta2020fully}
{\sc Gupta, A.} {\sc and} {\sc Levin, R.} 2020a.
\newblock Fully-dynamic submodular cover with bounded recourse.
\newblock In {\em Proceedings of 61st Symposium on Foundations of Computer
  Science (FOCS)}. To appear.

\bibitem[\protect\citeauthoryear{Gupta and Levin}{Gupta and
  Levin}{2020b}]{gupta2020online}
{\sc Gupta, A.} {\sc and} {\sc Levin, R.} 2020b.
\newblock The online submodular cover problem.
\newblock In {\em Proceedings of 31st Annual ACM-SIAM Symposium on Discrete
  Algorithms (SODA)}. 1525--1537.

\bibitem[\protect\citeauthoryear{Gupta, Roth, Schoenebeck, and Talwar}{Gupta
  et~al\mbox{.}}{2010}]{gupta2010constrained}
{\sc Gupta, A.}, {\sc Roth, A.}, {\sc Schoenebeck, G.}, {\sc and} {\sc Talwar,
  K.} 2010.
\newblock Constrained non-monotone submodular maximization: Offline and
  secretary algorithms.
\newblock In {\em Proceedings of 6th Conference on Web and Internet Economics
  (WINE)}. 246--257.

\bibitem[\protect\citeauthoryear{Gupta and Peng}{Gupta and
  Peng}{2013}]{gupta2013fully}
{\sc Gupta, M.} {\sc and} {\sc Peng, R.} 2013.
\newblock Fully dynamic $(1+\eps)$-approximate matchings.
\newblock In {\em Proceedings of 54th Symposium on Foundations of Computer
  Science (FOCS)}. 548--557.

\bibitem[\protect\citeauthoryear{Huang, Kang, Tang, Wu, Zhang, and Zhu}{Huang
  et~al\mbox{.}}{2018a}]{huang2018match}
{\sc Huang, Z.}, {\sc Kang, N.}, {\sc Tang, Z.~G.}, {\sc Wu, X.}, {\sc Zhang,
  Y.}, {\sc and} {\sc Zhu, X.} 2018a.
\newblock How to match when all vertices arrive online.
\newblock In {\em Proceedings of 50th Annual ACM Symposium on Theory of
  Computing (STOC)}. 17--29.

\bibitem[\protect\citeauthoryear{Huang, Peng, Tang, Tao, Wu, and Zhang}{Huang
  et~al\mbox{.}}{2019}]{huang2019tight}
{\sc Huang, Z.}, {\sc Peng, B.}, {\sc Tang, Z.~G.}, {\sc Tao, R.}, {\sc Wu,
  X.}, {\sc and} {\sc Zhang, Y.} 2019.
\newblock Tight competitive ratios of classic matching algorithms in the fully
  online model.
\newblock In {\em Proceedings of 30th Annual ACM-SIAM Symposium on Discrete
  Algorithms (SODA)}. 2875--2886.

\bibitem[\protect\citeauthoryear{Huang, Tang, Wu, and Zhang}{Huang
  et~al\mbox{.}}{2018b}]{huang2018online}
{\sc Huang, Z.}, {\sc Tang, Z.~G.}, {\sc Wu, X.}, {\sc and} {\sc Zhang, Y.}
  2018b.
\newblock Online vertex-weighted bipartite matching: Beating 1-1/e with random
  arrivals.
\newblock In {\em Proceedings of 45th International Colloquium on Automata,
  Languages and Programming (ICALP)}. 1070--1081.

\bibitem[\protect\citeauthoryear{Huang, Tang, Wu, and Zhang}{Huang
  et~al\mbox{.}}{2020a}]{huang2020fully}
{\sc Huang, Z.}, {\sc Tang, Z.~G.}, {\sc Wu, X.}, {\sc and} {\sc Zhang, Y.}
  2020a.
\newblock Fully online matching ii: Beating ranking and water-filling.
\newblock In {\em Proceedings of 61st Symposium on Foundations of Computer
  Science (FOCS)}.
\newblock To appear.

\bibitem[\protect\citeauthoryear{Huang and Zhang}{Huang and
  Zhang}{2020}]{huang2020online}
{\sc Huang, Z.} {\sc and} {\sc Zhang, Q.} 2020.
\newblock Online primal dual meets online matching with stochastic rewards:
  configuration lp to the rescue.
\newblock In {\em Proceedings of 52nd Annual ACM Symposium on Theory of
  Computing (STOC)}. 1153--1164.

\bibitem[\protect\citeauthoryear{Huang, Zhang, and Zhang}{Huang
  et~al\mbox{.}}{2020b}]{huang2020adwords}
{\sc Huang, Z.}, {\sc Zhang, Q.}, {\sc and} {\sc Zhang, Y.} 2020b.
\newblock Adwords in a panorama.
\newblock In {\em Proceedings of 61st Symposium on Foundations of Computer
  Science (FOCS)}. To appear.

\bibitem[\protect\citeauthoryear{Impagliazzo and Paturi}{Impagliazzo and
  Paturi}{2001}]{impagliazzo2001complexity}
{\sc Impagliazzo, R.} {\sc and} {\sc Paturi, R.} 2001.
\newblock On the complexity of k-sat.
\newblock {\em Journal of Computer and System Sciences\/}~{\em 62,\/}~2,
  367--375.

\bibitem[\protect\citeauthoryear{Kapralov}{Kapralov}{2013}]{kapralov2013better}
{\sc Kapralov, M.} 2013.
\newblock Better bounds for matchings in the streaming model.
\newblock In {\em Proceedings of 24th Annual ACM-SIAM Symposium on Discrete
  Algorithms (SODA)}. 1679--1697.

\bibitem[\protect\citeauthoryear{Kazemi, Mitrovic, Zadimoghaddam, Lattanzi, and
  Karbasi}{Kazemi et~al\mbox{.}}{2019}]{kazemi2019submodular}
{\sc Kazemi, E.}, {\sc Mitrovic, M.}, {\sc Zadimoghaddam, M.}, {\sc Lattanzi,
  S.}, {\sc and} {\sc Karbasi, A.} 2019.
\newblock Submodular streaming in all its glory: Tight approximation, minimum
  memory and low adaptive complexity.
\newblock In {\em Proceedings of 36th International Conference on Machine
  Learning (ICML)}. 3311--3320.

\bibitem[\protect\citeauthoryear{Korula, Mirrokni, and Zadimoghaddam}{Korula
  et~al\mbox{.}}{2018}]{korula2018online}
{\sc Korula, N.}, {\sc Mirrokni, V.}, {\sc and} {\sc Zadimoghaddam, M.} 2018.
\newblock Online submodular welfare maximization: Greedy beats 1/2 in random
  order.
\newblock {\em SIAM Journal on Computing (SICOMP)\/}~{\em 47,\/}~3, 1056--1086.

\bibitem[\protect\citeauthoryear{Lattanzi, Mitrovi{\'c}, Norouzi-Fard,
  Tarnawski, and Zadimoghaddam}{Lattanzi
  et~al\mbox{.}}{2020}]{lattanzi2020fully}
{\sc Lattanzi, S.}, {\sc Mitrovi{\'c}, S.}, {\sc Norouzi-Fard, A.}, {\sc
  Tarnawski, J.}, {\sc and} {\sc Zadimoghaddam, M.} 2020.
\newblock Fully dynamic algorithm for constrained submodular optimization.
\newblock In {\em Proceedings of 34th Annual Conference on Neural Information
  Processing Systems (NeurIPS)}. To appear.

\bibitem[\protect\citeauthoryear{Lee, Mirrokni, Nagarajan, and Sviridenko}{Lee
  et~al\mbox{.}}{2009}]{lee2009non}
{\sc Lee, J.}, {\sc Mirrokni, V.~S.}, {\sc Nagarajan, V.}, {\sc and} {\sc
  Sviridenko, M.} 2009.
\newblock Non-monotone submodular maximization under matroid and knapsack
  constraints.
\newblock In {\em Proceedings of 41st Annual ACM Symposium on Theory of
  Computing (STOC)}. 323--332.

\bibitem[\protect\citeauthoryear{Lee, Mirrokni, Nagarajan, and Sviridenko}{Lee
  et~al\mbox{.}}{2010a}]{lee2010maximizing}
{\sc Lee, J.}, {\sc Mirrokni, V.~S.}, {\sc Nagarajan, V.}, {\sc and} {\sc
  Sviridenko, M.} 2010a.
\newblock Maximizing nonmonotone submodular functions under matroid or knapsack
  constraints.
\newblock {\em SIAM Journal on Discrete Mathematics\/}~{\em 23,\/}~4,
  2053--2078.

\bibitem[\protect\citeauthoryear{Lee, Sviridenko, and Vondr{\'a}k}{Lee
  et~al\mbox{.}}{2010b}]{lee2010submodular}
{\sc Lee, J.}, {\sc Sviridenko, M.}, {\sc and} {\sc Vondr{\'a}k, J.} 2010b.
\newblock Submodular maximization over multiple matroids via generalized
  exchange properties.
\newblock {\em Mathematics of Operations Research\/}~{\em 35,\/}~4, 795--806.

\bibitem[\protect\citeauthoryear{Lehmann, Lehmann, and Nisan}{Lehmann
  et~al\mbox{.}}{2006}]{lehmann2006combinatorial}
{\sc Lehmann, B.}, {\sc Lehmann, D.}, {\sc and} {\sc Nisan, N.} 2006.
\newblock Combinatorial auctions with decreasing marginal utilities.
\newblock {\em Games and Economic Behavior\/}~{\em 55,\/}~2, 270--296.

\bibitem[\protect\citeauthoryear{Lin and Bilmes}{Lin and
  Bilmes}{2011}]{lin2011word}
{\sc Lin, H.} {\sc and} {\sc Bilmes, J.} 2011.
\newblock Word alignment via submodular maximization over matroids.
\newblock In {\em Proceedings of 49th Annual Meeting of the Association for
  Computational Linguistics: Human Language Technologies (ACL)}. 170--175.

\bibitem[\protect\citeauthoryear{Manurangsi}{Manurangsi}{2020}]{manurangsi2020tight}
{\sc Manurangsi, P.} 2020.
\newblock Tight running time lower bounds for strong inapproximability of
  maximum k-coverage, unique set cover and related problems (via t-wise
  agreement testing theorem).
\newblock In {\em Proceedings of 31st Annual ACM-SIAM Symposium on Discrete
  Algorithms (SODA)}. 62--81.

\bibitem[\protect\citeauthoryear{McGregor}{McGregor}{2005}]{mcgregor2005finding}
{\sc McGregor, A.} 2005.
\newblock Finding graph matchings in data streams.
\newblock In {\em Proceedings of 8th International Workshop on Approximation
  Algorithms for Combinatorial Optimization Problems (APPROX)}. 170--181.

\bibitem[\protect\citeauthoryear{Mirzasoleiman, Jegelka, and
  Krause}{Mirzasoleiman et~al\mbox{.}}{}]{mirzasoleiman2018streaming}
{\sc Mirzasoleiman, B.}, {\sc Jegelka, S.}, {\sc and} {\sc Krause, A.}
\newblock Streaming non-monotone submodular maximization: Personalized video
  summarization on the fly.
\newblock In {\em Proceedings of 32nd AAAI Conference on Artificial
  Intelligence (AAAI)}. 1379--1386.

\bibitem[\protect\citeauthoryear{Nemhauser and Wolsey}{Nemhauser and
  Wolsey}{1978}]{nemhauser1978best}
{\sc Nemhauser, G.~L.} {\sc and} {\sc Wolsey, L.~A.} 1978.
\newblock Best algorithms for approximating the maximum of a submodular set
  function.
\newblock {\em Mathematics of operations research\/}~{\em 3,\/}~3, 177--188.

\bibitem[\protect\citeauthoryear{Nemhauser, Wolsey, and Fisher}{Nemhauser
  et~al\mbox{.}}{1978}]{nemhauser1978analysis}
{\sc Nemhauser, G.~L.}, {\sc Wolsey, L.~A.}, {\sc and} {\sc Fisher, M.~L.}
  1978.
\newblock An analysis of approximations for maximizing submodular set
  functions—i.
\newblock {\em Mathematical programming\/}~{\em 14,\/}~1, 265--294.

\bibitem[\protect\citeauthoryear{Norouzi-Fard, Tarnawski, Mitrovic, Zandieh,
  Mousavifar, and Svensson}{Norouzi-Fard
  et~al\mbox{.}}{2018}]{norouzi2018beyond}
{\sc Norouzi-Fard, A.}, {\sc Tarnawski, J.}, {\sc Mitrovic, S.}, {\sc Zandieh,
  A.}, {\sc Mousavifar, A.}, {\sc and} {\sc Svensson, O.} 2018.
\newblock Beyond 1/2-approximation for submodular maximization on massive data
  streams.
\newblock In {\em Proceedings of 35th International Conference on Machine
  Learning (ICML)}. 3829--3838.

\bibitem[\protect\citeauthoryear{Oveis~Gharan and Vondr{\'a}k}{Oveis~Gharan and
  Vondr{\'a}k}{2011}]{oveisgharan2011submodular}
{\sc Oveis~Gharan, S.} {\sc and} {\sc Vondr{\'a}k, J.} 2011.
\newblock Submodular maximization by simulated annealing.
\newblock In {\em Proceedings of 22nd Annual ACM-SIAM Symposium on Discrete
  Algorithms (SODA)}. 1098--1116.

\bibitem[\protect\citeauthoryear{Paz and Schwartzman}{Paz and
  Schwartzman}{2018}]{paz20182+}
{\sc Paz, A.} {\sc and} {\sc Schwartzman, G.} 2018.
\newblock A $(2+\epsilon)$-approximation for maximum weight matching in the
  semi-streaming model.
\newblock {\em Transactions on Algorithms (TALG)\/}~{\em 15,\/}~2, 18.

\bibitem[\protect\citeauthoryear{Peleg and Solomon}{Peleg and
  Solomon}{2016}]{peleg2016dynamic}
{\sc Peleg, D.} {\sc and} {\sc Solomon, S.} 2016.
\newblock Dynamic $(1+\epsilon)$-approximate matchings: a density-sensitive
  approach.
\newblock In {\em Proceedings of 27th Annual ACM-SIAM Symposium on Discrete
  Algorithms (SODA)}. 712--729.

\bibitem[\protect\citeauthoryear{Tang, Wu, and Zhang}{Tang
  et~al\mbox{.}}{2020}]{tang2020towards}
{\sc Tang, Z.~G.}, {\sc Wu, X.}, {\sc and} {\sc Zhang, Y.} 2020.
\newblock Towards a better understanding of randomized greedy matching.
\newblock In {\em Proceedings of 52nd Annual ACM Symposium on Theory of
  Computing (STOC)}. 1097--1110.

\bibitem[\protect\citeauthoryear{Vondr{\'a}k}{Vondr{\'a}k}{2007}]{vondrak2007submodularity}
{\sc Vondr{\'a}k, J.} 2007.
\newblock Submodularity in combinatorial optimization.

\bibitem[\protect\citeauthoryear{Vondr{\'a}k}{Vondr{\'a}k}{2008}]{vondrak2008optimal}
{\sc Vondr{\'a}k, J.} 2008.
\newblock Optimal approximation for the submodular welfare problem in the value
  oracle model.
\newblock In {\em Proceedings of 40th Annual ACM Symposium on Theory of
  Computing (STOC)}. 67--74.

\bibitem[\protect\citeauthoryear{Vondr{\'a}k}{Vondr{\'a}k}{2013}]{vondrak2013symmetry}
{\sc Vondr{\'a}k, J.} 2013.
\newblock Symmetry and approximability of submodular maximization problems.
\newblock {\em SIAM Journal on Computing (SICOMP)\/}~{\em 42,\/}~1, 265--304.

\bibitem[\protect\citeauthoryear{Wajc}{Wajc}{2020}]{wajc2020rounding}
{\sc Wajc, D.} 2020.
\newblock Rounding dynamic matchings against an adaptive adversary.
\newblock In {\em Proceedings of 52nd Annual ACM Symposium on Theory of
  Computing (STOC)}. 194--207.

\bibitem[\protect\citeauthoryear{Wolsey}{Wolsey}{1982}]{wolsey1982analysis}
{\sc Wolsey, L.~A.} 1982.
\newblock An analysis of the greedy algorithm for the submodular set covering
  problem.
\newblock {\em Combinatorica\/}~{\em 2,\/}~4, 385--393.

\bibitem[\protect\citeauthoryear{Yao}{Yao}{1977}]{yao1977lemma}
{\sc Yao, A. C.-C.} 1977.
\newblock Probabilistic computations: Toward a unified measure of complexity.
\newblock In {\em Proceedings of 18th Symposium on Foundations of Computer
  Science (FOCS)}. 222--227.

\end{thebibliography}
\bibliographystyle{acmsmall}

\end{document}